\documentclass[a4paper,11pt]{article}

\usepackage[utf8]{inputenc}
\usepackage[T1]{fontenc}
\usepackage{lmodern}
\usepackage{amsmath, amssymb, amsfonts, amsthm}
\usepackage{mathtools}
\usepackage{graphicx}
\usepackage{geometry}
\usepackage{xcolor}
\usepackage{hyperref}
\usepackage{enumitem}
\usepackage{float}
\usepackage{caption}
\usepackage{subcaption}
\usepackage{booktabs} 
\usepackage[numbers,sort&compress]{natbib}
\usepackage{threeparttable}
\usepackage[textwidth=4cm]{todonotes}
\usepackage{comment}
\usepackage{cleveref}

\DeclareMathOperator{\arcsinh}{arcsinh}

\hypersetup{
    colorlinks=true,
    linkcolor=red,
    citecolor=blue,
    urlcolor=blue
}

\theoremstyle{plain}
\newtheorem{theorem}{Theorem}[section]
\newtheorem{lemma}[theorem]{Lemma}
\newtheorem{proposition}[theorem]{Proposition}

\theoremstyle{definition}
\newtheorem{definition}[theorem]{Definition}

\theoremstyle{remark}
\newtheorem{remark}[theorem]{Remark}

\newcommand{\norm}[1]{\left\lVert#1\right\rVert}

\newcommand{\R}{\mathbb{R}}
\newcommand{\C}{\mathbb{C}}

\newcommand{\calI}{\mathcal{I}}

\title{\textbf{Zero-Freeness of the Hard-Core Model with Bounded Connective Constant}}

\author{
Yuan Chen \thanks{School of Computer Science and Technology \& Hefei National Laboratory, University of Science and Technology of China.} \\ {\tt chenyuan@ustc.edu}
\and
Shuai Shao\footnotemark[1]\\
{\tt shao10@ustc.edu.cn}
\and
Ke Shi\footnotemark[1]\\
{\tt self.ke.shi@gmail.com}}
\date{}

\begin{document}

\maketitle
\begin{abstract}
    We study the zero-free regions of the partition function of the hard-core model on finite graphs and their implications for the analyticity of the free energy on infinite lattices. 
    Classically, zero-freeness results have been established up to the tree uniqueness threshold $\lambda_c(\Delta-1)$ determined by the maximum degree $\Delta$. 
    However, for many graph classes, such as regular lattices, the connective constant $\sigma$ provides a more precise measure of structural complexity than the maximum degree. 
    While recent approximation algorithms based on correlation decay~\cite{sinclair2017spatial} and Markov chain Monte Carlo~\cite{efthymiou2026sampling} have successfully exploited the connective constant to improve the threshold to $\lambda_c(\sigma)$, analogous results for complex zero-freeness have been lacking.
    
    In this paper, we bridge this gap by introducing a proper definition of the connective constant for finite graphs based on a lower bound on the number of $k$-depth self-avoiding walks. 
    We prove that for any graph family with a lower connective constant $\mu$, the partition function is zero-free in a complex neighborhood of the interval $[0, \lambda]$ for all $\lambda < \lambda_c(\mu)$. 
    As a direct consequence, we establish the uniqueness and analyticity of the free energy density for infinite lattices up to the connective constant threshold, extending the known regions derived from maximum degree bounds.
    Our proof utilizes a block contraction technique that lifts the correlation decay property from a real interval to a strip-like complex neighborhood. 
\end{abstract}
\thispagestyle{empty}
\clearpage
\newpage
\setcounter{page}{1} 
\section{Introduction}

The \emph{hard-core model} is a fundamental model in statistical physics, dating back to the 1960s \cite{runnels1965hard}. It describes a gas of particles with hard-sphere interactions and is mathematically equivalent to the problem of counting independent sets in a graph when the particle activity $\lambda$ is set to $1$.
Let $G=(V,E)$ be a finite graph, and let $\mathcal{I}(G)$ denote the collection of all independent sets of $G$. 
Given an activity (or fugacity) $\lambda \in\mathbb{C}$, the partition function $Z_G(\lambda)$ of the hard-core model over $G$ is defined as $\sum_{I \in \mathcal I(G)} \lambda^{|I|},$ where each $I\in\calI(G)$ is a configuration associated with the weight $\lambda^{|I|}$.
When $\lambda>0$, this partition function induces the \emph{Gibbs
distribution}
$\mu_G^\lambda(I) \;=\; \frac{\lambda^{|I|}}{Z_G(\lambda)}.$
For a vertex $v\in V$, we denote the marginal probability of $v$ being occupied by $P_{G,v} := \sum_{I \ni v} \mu_G^\lambda(I)$.

The study of the zero-free regions of the partition function $Z_G(\lambda)$ in the complex plane is of profound significance.
In statistical physics, according to the Lee-Yang theory~\cite{lee1952statistical}, the distribution of zeros determines the analyticity of the free energy density $f_G(\lambda)=\log Z_G(\lambda)/|V_G|$, which is a classical notion for defining  (the absence of) phase transitions.
For an infinite lattice $L$, the analyticity of the limiting free energy $f_L(\lambda)$ is typically established by showing that the partition functions of a sequence of finite subgraphs of $L$ are uniformly zero-free in a complex neighborhood of a real interval.  \cite{scott2005repulsive, regts2018zero}.

From a computational perspective, the location of complex zeros is closely related to efficient approximation of the partition functions. 
The study of complex zeros of the partition function $Z_G(\lambda)$ is also of algorithmic interest. 
Using the approach developed by Barvinok \cite{barvinok2016combinatorics} and extended by Patel and Regts \cite{patel2017deterministic},
a zero-free region $\mathbb{C}_\Delta$ of $Z_G(\lambda)$  can be turned into a \emph{fully polynomial-time approximation scheme} (FPTAS) for  computing
$Z_G(\lambda)$ on all  graphs with maximum degree $\Delta$, given $z\in \mathbb{C}_\Delta$.
 An FPTAS is a family of algorithms $\{A_{\varepsilon,\delta}\}$, where  $A_{\varepsilon,\delta}$ is a multiplicative $(1\pm\varepsilon)$-approximation algorithm with running time polynomial in $1/\varepsilon$ and $\log(1/\delta)$ for each $\varepsilon,\delta>0$.

Historically, zero-freeness results for the hard-core model were primarily tied to the maximum degree.
For general finite graphs with maximum degree $\Delta\geq 3$,  Shearer \cite{shearer1985problem} established a zero-free disk around the origin with radius $\lambda_s(\Delta-1)=\frac{(\Delta-1)^{\Delta-1}}{\Delta^\Delta}$ using  Lovász Local Lemma.
A breakthrough~\cite{peters2019conjecture} was the resolution of Sokal's conjecture \cite{scott2005repulsive}, which extended the zero-free region to  a strip-like complex neighborhood of $[0,\lambda_c(\Delta)-\epsilon]$ for arbitrarily small $\epsilon$.
Here $\lambda_c(\Delta-1)=\frac{(\Delta-1)^{\Delta-1}}{(\Delta-2)^\Delta}$  represents the phase transition threshold for the uniqueness of the Gibbs measure on a $\Delta$-regular tree.
More recently,  these zero-freeness results have been refined to a larger complex region~\cite{bencs2023complex} while still within the critical points $(-\lambda_s(\Delta-1), \lambda_c(\Delta-1))$.
Complementary results regarding the existence of zeros near the critical points $-\lambda_s(\Delta-1)$ and $\lambda_c(\Delta-1)$ have also been established ~\cite{peters2019conjecture}.
Stronger zero-freeness results can be obtained when the graph class is restricted to special structures such as claw-free graphs~\cite{heilmann1972theory, chudnovskyseymour07, jerrum2025zero}.

 In statistical physics, however, the focus often shifts to infinite lattices. One can reasonably expect that for such highly structured graphs, the zero-free region should be significantly larger than those implied by maximum degree bounds.
For instance, on the square lattice $\mathbb{Z}^2$,  it has been long conjectured and numerically verified that the free energy is analytic (or equivalently, the partition functions of finite subgraphs are zero-free) on a complex neighborhood of $(0, \lambda^\ast \approx 3.796)$ \cite{vera2013improved}.
This conjectured threshold $\lambda^\ast$ is much larger than the tree threshold $\lambda_c(4)=\frac{(4-1)^{(4-1)}}{(4-2)^4}=1.6875$. 
While rigorous proofs place $\lambda^\ast$ between $2.538$ \cite{sinclair2017spatial} and $5.3506$ \cite{blanca2019phase}, the lower bound of $2.538$ was obtained via analyzing the \emph{correlation decay} property using connective constants, rather than through a direct zero-freeness analysis of the partition functions.

Correlation decay is another classical notion for defining phase transitions, referring to the phenomenon where correlations between vertices decay exponentially with their distance. This property, or more precisely \emph{strong spatial mixing},  can also be exploited to devise FPTASes. 
This method was originally developed by Weitz \cite{weitz2006counting} and independently by Bandyopadhyay and Gamarnik~\cite{bandyopadhyay2008counting} for the hard-core model. 
Weitz’s algorithm approximates the partition function on a graph $G$ by expressing it as a telescoping product of marginal probabilities, utilizing the \emph{self-avoiding walk} (SAW) tree to reduce the computation from $G$ to a tree structure. 
For all graphs of degree at most $\Delta$, SSM holds and FPTASes exist up to the uniqueness threshold $\lambda < \lambda_c(\Delta-1)$. 
A seminal improvement was achieved in \cite{sinclair2017spatial}, which extended the threshold from $\lambda_c(\Delta-1)$ (based on maximum degree) to $\lambda_c(\sigma)$ for graphs with a connective constant at most $\sigma$. 
 For the square lattice, the connective constant is at most $2.64$~\cite{jacobsen2016growth}, which is less than $\Delta -1=3$, yielding a better threshold $\lambda_c(2.64)\approx 2.08$, which was further extended to $2.538$ with additional refinements~\cite{sinclair2017spatial}.

The other major approach is the randomized scheme based on the \emph{Markov Chain Monte Carlo} (MCMC) method, pioneered by Jerrum and Sinclair for the ferromagnetic Ising model \cite{jerrum1993polynomial}.
For the hard-core model on bounded-degree graphs, the mixing time of Glauber dynamics has been extensively studied~\cite{luby1999fast,hayes2006coupling,efthymiou2016convergence}. Using the spectral independence framework \cite{anari2020spectral}, polynomial mixing times were established up to the uniqueness threshold $\Delta-1$ for graphs with degree at most $\Delta$. 
Moreover, the optimal $O(n\log n)$ mixing time  has been proven in a line of recent work~\cite{chen2020optimal,ChenE22,Chen0YZ22}.
For random graphs $G(n,d/n)$, a mixing time of $n^{1+O(1/\log\log n)}$ on typical instances was proved \cite{efthymiou2023mixing}.
There are also results for special graph classes, such as bipartite graphs and claw-free graphs \cite{chen2023uniqueness,chen2024fast,chen2024rapid,jerrum2024glauber,chen2025rapid}.
Notably, for the square lattice, Restrepo et al. \cite{restrepo2013improved} obtained an improved mixing regime for all $0<\lambda<2.3882$ beyond the degree threshold $\lambda_c(3)$.
In a very recent work~\cite{efthymiou2026sampling}, optimal mixing time was obtained for graphs with bounded connective constant $\sigma$, extending the threshold from $\lambda_c(\Delta-1)$ to $\lambda_c(\sigma)$ similar to the improvements seen in correlation decay~\cite{sinclair2017spatial}, though using a different notion of local connective constant.

While both MCMC and correlation decay methods have successfully utilized the connective constant to surpass the uniqueness threshold $\lambda_c(\Delta-1)$, 
zero-freeness results have largely remained tied to the maximum degree $\Delta$. 
This raises a natural question: 
\emph{can a larger zero-free region be established in terms of the connective constant?}

To answer this, we must first define a proper notion of the connective constant for finite graphs. For infinite graphs, widely used definitions capture the exponential growth rate of self-avoiding walks~\cite{hammersley1957percolation, lacoin2014non, grimmett2015bounds}.
However, there are only a few different definitions for connective constant for finite graphs. The definition proposed in \cite{sinclair2014approximation} is  a $\log |V|$-depth connective constant depending on the graph size $V$, while the definition used in \cite{efthymiou2026sampling} is a local connective constant based on fixed-length SAWs.
However, neither definition readily guarantees zero-freeness, as easy counter-examples can be constructed. We introduce a slightly different definition and use it to prove our zero-freeness results.

\subsection*{Our contribution}
We propose the following definition of connective constants.
For a graph $G$ and a vertex $v$, let
$\mathcal{N}_{\leq k}(G,v)$ denote the number of self-avoiding walks of
length no more than $k$ starting at $v$, and let $\mu_{k}(G)=\sup_{v\in V}{\mathcal{N}_{\leq k}(G,v)}^{1/k}$. 
\begin{definition}[Connective constants]
    For a family $\mathcal{H}$ of finite graphs, the $k$-depth connective constant of $\mathcal{H}$ for a fixed $k$, denoted by $\mu_{k}(\mathcal{H})$, is $\sup_{G\in\mathcal{H}}\mu_k(G)$.

     The lower connective constant of $\mathcal{H}$, denoted by $\mu_{\inf}(\mathcal{H})$, is $\inf_{k\geq1}\mu_{k}(\mathcal{H})$.
\end{definition}

For an infinite graph $G$, let $\mathcal{H}_G$ denote the set of all finite subgraphs of $G$. We establish the following relationship: for any infinite graph $G$ with connective constant $\sigma(G)$, $\mu_{\inf}(\mathcal{H}_G) \leq \sigma(G)$. In particular, if $G$ is an infinite lattice, then $\mu_{\inf}(\mathcal{H}_G) = \sigma(G)$. This ensures that we can derive the analyticity of the free energy function on infinite lattices from the zero-freeness of partition functions on their finite subgraphs.

We now state our main zero-freeness result for graphs with bounded lower connective constant. 
 For a real interval $I$ and a constant $\delta>0$, let $\mathcal{L}_\delta(I)=\{z\in \mathbb{C}\mid |z-z_0|\leq \delta, z_0\in I\}$.

\begin{theorem}
\label{thm:zero_free_main}
Let $\mathcal{H}$ be a family of finite graphs with lower connective constant $\mu_{\inf}(\mathcal{H})$.
Then, for any $0<\lambda<\lambda_c(\mu_{\inf}(\mathcal{H}))$, there exists a $\delta$ depending on $\lambda$ and $\mathcal{H}$ such that for  any $H\in \mathcal{H}$, the partition function $Z_H$ is zero-free on $\mathcal{L}_\delta([0, \lambda])$.
\end{theorem}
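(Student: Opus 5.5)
Since $\lambda<\lambda_c(\mu_{\inf}(\mathcal{H}))$ and $\lambda_c$ is continuous and decreasing, we fix $\mu'>\mu_{\inf}(\mathcal{H})$ with $\lambda<\lambda_c(\mu')$. By the definition of $\mu_{\inf}$ there is a fixed depth $k$ such that every $H\in\mathcal{H}$ and every vertex $v$ satisfy $\mathcal{N}_{\le k}(H,v)\le (\mu')^{k}$. The plan follows the standard zero-freeness route through ratios. Write $Z_H=Z_{H-v}+zZ_{H-N[v]}$ and define the ratio $R_{H,v}(z)=zZ_{H-N[v]}/Z_{H-v}$. Zero-freeness will follow by induction on $|V(H)|$ over the hereditary closure of $\mathcal{H}$ (subgraphs can only decrease SAW counts, so they satisfy the same bound). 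The inductive claim is that $R_{H,v}(z)$ lies in a fixed region $U\subset\C$, a small neighbourhood of the real interval of possible occupation ratios, with $-1\notin U$. Then $Z_H=Z_{H-v}(1+R_{H,v})\ne0$. On the SAW tree $T=T_{\mathrm{SAW}}(H,v)$ of Weitz, the ratio obeys the tree recursion $R=z\prod_i(1+R_i)^{-1}$, with boundary conditions fixed at cycle-closing leaves. This identity is algebraic and holds for complex $z$ wherever the denominators do not vanish.

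The key step is a \emph{block contraction}: we compose the recursion over $k$ levels at once rather than a single level. Fix the potential $\varphi(x)=\arcsinh(\sqrt{x})$ used by Sinclair et al., or the analogous one for $\lambda_c(\sigma)$, and set $y=\varphi(R)$. For real $z\in[0,\lambda]$, their analysis gives a per-level weighted $\ell_q$-type contraction whose rate is governed by $\lambda_c(\mu')$. Summed over the at most $(\mu')^{k}$ descendants at depth $\le k$, this gives that the $k$-step map $F_k$ (from the depth-$k$ values to the root) satisfies $\sum_{w}\,|\partial F_k/\partial y_w|<1-\eta$ uniformly for real inputs in the admissible interval. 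The hypothesis on $\mathcal{N}_{\le k}$ replaces the maximum-degree count exactly here, and using the single depth $k$ is what makes the block, rather than each level, contract. Because $F_k$ is a composition of at most $(\mu')^k$ analytic maps with uniformly bounded derivatives on a compact set, continuity then extends the inequality $\sum|\partial F_k/\partial y_w|<1-\eta/2$ to complex $z\in\mathcal{L}_\delta([0,\lambda])$ and to complex inputs within distance $\varepsilon$ of the real interval, once $\delta$ and $\varepsilon$ are chosen small enough in terms of $k,\mu',\lambda$ only.

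With this bound we close the induction. Let $U$ be the $\varepsilon$-neighbourhood (in $\varphi$-coordinates) of the real range $\varphi([0,\lambda])$. Take the complex inputs at depth $k$, which lie in $U$ by induction, and compare them with the real inputs obtained by replacing $z$ with $\Re z$ or with the nearest real point. The contraction gives a root discrepancy of at most $(1-\eta/2)\varepsilon$ plus an $O(\delta)$ term from perturbing $z$ in the $k$ levels, and this is at most $\varepsilon$ once $\delta\ll\eta\varepsilon$. The real root value lies in the real range by monotonicity, so the root stays in $U$. Shallow vertices, at depth $<k$ from a leaf, are handled directly: the recursion has bounded depth, so continuity in $z$ places them in $U$ for $\delta$ small.

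The main obstacle is the uniform complex block contraction. Sinclair et al.'s contraction is not a per-step $\ell_1$ contraction but holds only in an averaged or weighted norm over depth. One must therefore check that a fixed finite depth $k$ already yields a strict contraction in a norm that is stable under complex perturbation, so that the perturbation bound is uniform over all $H$ even though degrees may be unbounded. I would first try the $\ell_\infty\to$ root bound with derivative products along SAWs, namely $\sum_{\text{SAW }p,|p|=k}\prod_{e\in p}|\text{step derivative}|$. I would bound each step derivative by $\kappa(\lambda)$ with $\kappa<1/\mu'$ in the potential metric, so that the sum is at most $(\mu'\kappa)^k\cdot$const$<1$ for large $k$, taking the larger $k$ from the infimum definition if necessary. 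High-degree vertices are handled because their ratios are tiny, and the potential absorbs them.
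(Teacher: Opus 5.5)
Your architecture matches the paper's. You fix $\mu'>\mu_{\inf}(\mathcal H)$ and a depth $k$ with $\mathcal N_{\le k}\le(\mu')^k$, and pass to Weitz's SAW tree. You work with the $k$-layer block map in the coordinates $\varphi(x)=\arcsinh\sqrt{x}$ and aim for a uniform real bound $\|\nabla F_k^\varphi\|_1\le 1-\eta$. You then extend it by continuity over the finitely many depth-$k$ trees and close an induction on $|V|$ over the hereditary family. Your mean-value argument in the third paragraph is essentially the Shao--Sun extension the paper invokes.

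The gap is the central estimate itself. In your second paragraph you assert that the $\ell_q$ contraction ``summed over the at most $(\mu')^k$ descendants'' yields $\sum_w|\partial F_k/\partial y_w|<1-\eta$, but you give no mechanism. The route you propose in the last paragraph fails for every choice of potential: bounding every step derivative by some $\kappa<1/\mu'$ and summing products over SAWs to get $(\mu'\kappa)^k$. Take a vertex with a single child, evaluated at the fixed point $x^*$ of $x\mapsto\lambda/(1+x)$. There the factors $\varphi'(x^*)$ cancel, so the step derivative is $x^*/(1+x^*)$ whatever $\varphi$ is. For subgraphs of $\mathbb Z^2$ with $\mu'$ just above $2.638$ and $\lambda=2<\lambda_c(\mu')$, this gives $x^*=1$ and a step derivative of $1/2>1/\mu'\approx 0.38$. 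Along a long path the $k$-step derivative is then $2^{-k}\gg(\mu')^{-k}$. So it is low-arity steps, not high-degree ones, that necessarily exceed $1/\mu'$. A per-edge bound throws away exactly the compensation that the $\ell_q$ weighting provides. Note also that the Sinclair et al.\ inequality is a per-step statement over the children, not an average over depth.

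The missing idea is to keep the $\ell_q$ structure through all $k$ levels and convert to $\ell_1$ only once, at the cutset. The one-step inequality $|F^\varphi(\mathbf x)-F^\varphi(\mathbf y)|^q\le\alpha\sum_i|x_i-y_i|^q$ holds with $\alpha<1/\mu'$ uniformly in the arity. Since $C_k(T)$ is the disjoint union of the children's depth-$(k-1)$ cutsets, it iterates by induction on depth to $|F_k^\varphi(\mathbf x)-F_k^\varphi(\mathbf y)|^q\le\alpha^k\sum_{w\in C_k(T)}|x_w-y_w|^q$. This gives $\|\nabla F_k^\varphi\|_a\le\alpha^{k/q}$ with $1/a+1/q=1$. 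Hölder with $|C_k(T)|\le\mathcal N_k\le(\mu')^k$ then yields
\[
\|\nabla F_k^\varphi\|_1\;\le\;|C_k(T)|^{1/q}\,\alpha^{k/q}\;\le\;(\mu'\alpha)^{k/q}\;<\;1 .
\]
This is the paper's block Lipschitz lemma together with its proof of real block contraction. It works for the $k$ you already fixed, so you do not need to enlarge $k$. The infimum definition would not give you a larger $k$ directly anyway, since $\mu_k(\mathcal H)$ need not be monotone in $k$. With this step inserted, the rest of your argument goes through as in the paper.
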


As a corollary, we have the following result for infinite lattices.

\begin{theorem}
    \label{thm:infinite_main}
    For an infinite lattice $L$ with the connective constant $\sigma(L)$ and any $\lambda < \lambda_c(\sigma(L))$, there exists a $\delta>0$ (depending on $\lambda$ and $L$) such that the free-energy function $f_L$ is unique and analytic on $\mathcal{L}([0, \lambda])$.
\end{theorem}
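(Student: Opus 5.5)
The statement to prove is Theorem~\ref{thm:infinite_main}. I would derive it from Theorem~\ref{thm:zero_free_main} via the standard Lee--Yang / Vitali route. There are four steps.

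\textbf{Step 1: choose the graph family.} Let $\mathcal{H}_L$ be the family of all finite subgraphs of $L$. By the relation stated above, $\mu_{\inf}(\mathcal{H}_L)=\sigma(L)$ for a lattice; only the inequality $\le$ is actually needed. Since $\lambda<\lambda_c(\sigma(L))$, fix $\lambda'$ with $\lambda<\lambda'<\lambda_c(\sigma(L))$. Theorem~\ref{thm:zero_free_main} then gives $\delta'>0$ such that every $Z_H$, $H\in\mathcal{H}_L$, is zero-free on $\mathcal{L}_{\delta'}([0,\lambda'])$. Take $\delta<\min(\delta',\lambda'-\lambda)$. The region $U=\mathcal{L}_{\delta}([0,\lambda])$ then sits compactly inside the open, simply connected set $\Omega=\operatorname{int}\mathcal{L}_{\delta'}([0,\lambda'])$. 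Note that $\Omega$ contains $0$.

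\textbf{Step 2: define analytic branches.} For each finite $H$, $Z_H$ is nonvanishing on $\Omega$ and $Z_H(0)=1$. So $f_H=\frac{1}{|V_H|}\log Z_H$ is well defined and analytic on $\Omega$, taking the branch with $f_H(0)=0$.

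\textbf{Step 3: uniform bound (main obstacle).} For Vitali/Montel I need $|f_H|$ uniformly bounded on compact subsets of $\Omega$, for $H$ ranging over a van Hove (F\o lner) exhausting sequence $H_n$ of $L$. This is where I expect the real work.
\begin{itemize}
\item The real part is easy: $\operatorname{Re} f_H\le \log(1+|z|)$, because $|Z_H(z)|\le (1+|z|)^{|V_H|}$.
\item The lower bound on $\operatorname{Re} f_H$ and control of the imaginary part are harder. My first attempt would be to apply Borel--Carathéodory or Harnack to the harmonic functions $\log(1+|z|)-\operatorname{Re}f_H$, which are nonnegative on $\Omega$ after a slight shrinking. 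Harnack then controls $\operatorname{Re}f_H$ on compacts via its value at $0$, which is $0$.
\item Borel--Carathéodory then gives a uniform bound on $|f_H|$ on compacts, since $f_H(0)=0$.
\end{itemize}
This yields a normal family.

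\textbf{Step 4: convergence and analyticity.} On the real segment $(0,\lambda')$, the free energies $f_{H_n}$ converge to the thermodynamic limit $f_L$. This is the standard subadditivity argument for the hard-core model on amenable lattices along van Hove sequences, and the limit is independent of the chosen sequence. That independence gives the uniqueness claim.
\begin{itemize}
\item A locally bounded sequence of analytic functions converging on a set with an accumulation point in a connected domain converges locally uniformly (Vitali).
\item So $f_{H_n}\to f_L$ uniformly on $U$, and the limit is analytic there.
\item Any other exhausting sequence gives a limit agreeing with $f_L$ on the real segment, hence everywhere on $U$ by the identity theorem. This confirms uniqueness of the analytic free energy on $\mathcal{L}_\delta([0,\lambda])$.
\end{itemize}
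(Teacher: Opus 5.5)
Your argument follows the paper's route: $\mu_{\inf}(\mathcal{H}_L)\le\sigma(L)$ together with Theorem~\ref{thm:zero_free_main} gives uniform zero-freeness over all finite subgraphs, and the real thermodynamic limit plus Vitali then gives a unique analytic limit. Your Step~3 just unpacks the Scott--Sokal form of Vitali (Proposition~\ref{prop:vitali-scott-sokal}), which the paper invokes with only the one-sided bound $\Re f_n\le M$, and your passage to a slightly larger open set $\Omega$ supplies the open domain Vitali needs.

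One small slip: $\log(1+|z|)-\Re f_H$ is not harmonic, since $\log(1+|z|)$ is strictly subharmonic, so Harnack does not apply to it as written. Either apply Harnack to $M-\Re f_H$ with the constant $M=\log(1+\sup_{\Omega}|z|)$, or drop Harnack and chain Borel--Carath\'eodory along discs starting at $0$, which needs only $\Re f_H\le M$ and $f_H(0)=0$.
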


We provide Table 1 to compare previous thresholds obtained via maximum degree with our new thresholds obtained via the connective constant. 
This table is analogous to Table 1 of ~\cite{sinclair2017spatial} in correlation decay literature, 
but here the thresholds refer to zero-freeness. 

\begin{table}[htbp]
    \centering
    \small 
    \setlength{\tabcolsep}{3pt}     
    
    \label{tab:zerofree_comparison}
    \begin{tabular}{lcccc}
        \toprule
        \textbf{Lattice} & \textbf{Maximum } & \textbf{Degree} & \textbf{Upper bound of } & \textbf{New Threshold} \\
        (Graph Family) & Degree $\Delta$ & Threshold $\lambda_c(\Delta-1)$ & Connective Const. $\sigma$ & $\lambda_c(\sigma)$ \\
        \midrule
        Triangular ($\mathbb{T}$) & 6 & 0.762 & 4.251419~\cite{alm2005upper} & 0.961 \\
        Honeycomb ($\mathbb{H}$) & 3 & 4.000 & 1.847760~\cite{duminil2012connective} & 4.976 \\
        Square ($\mathbb{Z}^2$) & 4 & 1.688 & 2.638158~\cite{jacobsen2016growth} & 2.145($2.538^*$) \\
        Cubic ($\mathbb{Z}^3$) & 6 & 0.762 & 4.7387~\cite{ponitz2000improved} & 0.822 \\
        $\mathbb{Z}^4$ & 8 & 0.490 & 6.8040~\cite{ponitz2000improved} & 0.508 \\
        $\mathbb{Z}^5$ & 10 & 0.360 & 8.8602~\cite{ponitz2000improved} & 0.367 \\
        $\mathbb{Z}^6$ & 12 & 0.285 & 10.8788~\cite{slade1994self} & 0.289 \\
        ($3.12^2$) & 3 & 4.000 & 1.7110412~\cite{jensen1998self} & 6.318 \\
        ($4.8^2$) & 3 & 4.000 & 1.80883001(6)~\cite{jensen1998self} & 5.301 \\
        Kagome & 4 & 1.688 & 2.5606(2)~\cite{jensen1998self} & 2.277 \\
        \bottomrule
    \end{tabular}
    \begin{tablenotes}
        \footnotesize
        \item * This improved bound was originally obtained in \cite{sinclair2017spatial}, and is also achieved in terms of zero-freeness (see Appendix). 
    \end{tablenotes}
    \caption{Thresholds: Max Degree vs. Connective Constant. }
\end{table}

We prove the zero-freeness results following the framework of lifting the contraction property for correlation decay from a real interval to a complex neighborhood~\cite{peters2019conjecture,liu2019fisher,peters2020location,shao2021contraction,liu2022correlation}. 
This advances a recent line of work~\cite{gamarnik2023correlation,regts2023absence,shao2024zero,shao2025zero} establishing bidirectional connections between correlation decay and zero-freeness.
Unlike previous results that analyze the contraction of the 1-layer recurrence function, we consider the block contraction of $k$-layer recurrence functions. 
In addition, we extend the derivation of analyticity of the free energy from zero-freeness on infinite lattices to general infinite graphs in the appendix.

\section{Preliminaries}

In this paper, an infinite graph is a locally finite graph. 

Weitz~\cite{weitz2006counting} introduced the self-avoiding walk (SAW) tree as a tool to analyze the hard-core model.
Given a (finite or infinite) graph $G$ and a root vertex $v$, the SAW tree $T_{\mathrm{SAW}}(G,v)$ is a rooted tree that encodes all self-avoiding walks.

Fix $G=(V,E)$, a root $v\in V$. Given any \emph{ordering scheme} $\{\prec_u: u\in V\}$
where each $\prec_u$ is a total order on $N(u)$,
the SAW tree $T_{\mathrm{SAW}}(G,v)$ is the rooted tree whose nodes are
self-avoiding walks $w=(v=v_0,v_1,\dots,v_t)$ in $G$.
From $w$ we add a child for each extension $(v_0,\dots,v_t,x)$ with
$x\in N(v_t)\setminus\{v_0,\dots,v_t\}$.

If $x\in N(v_t)$ already appeared in $w$ (say $x=v_s$, $s<t$), the extension creates
a \emph{cycle-closing leaf}. Let $y:=v_{s+1}$ be the neighbor used when the walk first
left $x$. If $v_t \prec_x y$, then pin this leaf to \emph{occupied};
otherwise pin it to \emph{unoccupied}.
Weitz proved that the marginal probability at a vertex $v$ in a graph $G$ can be reduced to that on the SAW tree, namely
$P_{G,v} = P_{T_{\mathrm{SAW}}(G,v),\,v}$.
Although the SAW-tree construction may introduce pinned leaves, we can eliminate all pinned vertices by simple reductions. In the hard-core model, a pinned unoccupied leaf can be deleted, while a pinned occupied leaf forces its parent to be unoccupied and thus both the leaf and its parent can be removed.
Consequently, when considering finite graphs, we may without loss of
generality assume that there are no pinned vertices.

For a graph $G$ and a vertex $v$, we give the following notation to count the number of SAWs. 
\begin{itemize}
    \item $\mathcal{N}_k(G,v)$ denotes the number of self-avoiding walks of
length $k$ starting at $v$, and  $\sigma_{k}(G):=\sup_{v\in V}{\mathcal{N}_k(G,v)}^{1/k}$. 
\item $\mathcal{N}_{\leq k}(G,v):=\sum_{i=1}^k\mathcal{N}_i(G,v)$ denotes the number of self-avoiding walks of
length no more than $k$ starting at $v$, and $\mu_{k}(G):=\sup_{v\in V}{\mathcal{N}_{\leq k}(G,v)}^{1/k}$. 
\label{def:2cc}
\end{itemize}

For a graph $G$ and a vertex $v$, let $R_{G,v}=\frac{P_{G,v}}{1-P_{G,v}}$.
Let $T$ be a tree with root $v$ and
children $v_1,\dots,v_d$, and let $T_i$ denote the subtree rooted at $v_i$. 
Then the ratio  $R_{T,v}$ can be computed using $R_{T_i,v_i}$ by the recurrence formula $ R_{T,v} \;=\; \lambda \prod_{i=1}^{d} \frac{1}{1 + R_{T_i, v_i}}.$
We define the one-step recurrence function as
\[
F_{\lambda,d}(x_1,\dots,x_d)\;:=\;\lambda\prod_{i=1}^{d}(1+x_i)^{-1},
\]
so that $R_{T,v}=F_{\lambda,d}(R_{T_1,v_1},\dots,R_{T_d,v_d})$.

For a set $\mathcal{F}=\{f_k\}_{k\ge 1}$ of countably many functions, a point $a$ is an \emph{accumulation point} of the zeros of $\{f_k\}_{k\ge 1}$ if every neighborhood of $a$ contains at least one point $x$ such that $f_k(x)=0$ for some $f_k\in\mathcal{F}$.

\section{Connective constants for finite graphs}

In this section, we first demonstrate that the logarithmic depth definition of the connective constant for finite graphs, as proposed in~\cite{sinclair2017spatial}, fails to ensure zero-freeness.
Subsequently, we analyze the relationship between our newly introduced definition (Definition~\ref{def:2cc}) and the standard connective constant for infinite graphs.
Finally, we establish that zero-freeness for finite subgraphs under our definition implies the analyticity of the free energy density for the corresponding infinite lattice.

\begin{definition}[Log-depth connective constant~\cite{sinclair2017spatial}]
 For a family $\mathcal{H}$ of finite graphs, the log-depth connective constant of $\mathcal{H}$, denoted by $\mu_{\log}(\mathcal{H})$ is at most $d$ if there exist constants $a$ and $c$ such that for any graph $G = (V, E)$ in $\mathcal{H}$ and any vertex $v\in V$, we have $\mathcal{N}_{\leq k}(G,v)\leq c d^\ell$ for all $\ell \geq a \log|V|$.
\end{definition}

We provide the following counter-example to illustrate that a bound on $\mu_{\log}(\mathcal{H})$ does not imply a uniform zero-free region.

\begin{lemma}
\label{lem:log-vacuity}
   Fix $\Delta\geq 3$. Let $T_k$ be the $(\Delta-1)$-ary tree of depth $k$, and $\mathcal{T}=\{T_k\mid k\geq 1\}$.
   Then, for any $d>1$, $\mu_{\log}(\mathcal{T})$ is most $d$, i.e., $\mu_{\log}(\mathcal{T})\leq d$.
\end{lemma}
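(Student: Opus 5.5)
The plan is to exploit the fact that the log-depth definition only constrains walks of length at least $a\log|V|$, and in a finite tree of depth $k$ there are no self-avoiding walks longer than $2k$. So choosing $a$ large enough makes the constraint range begin beyond the diameter, where the walk count has stopped growing. (I read the typo $\mathcal{N}_{\le k}$ in the definition as $\mathcal{N}_{\le \ell}$.)

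Fix $d>1$. First I record the size of $T_k$, namely $|V(T_k)| = \sum_{i=0}^{k}(\Delta-1)^i \ge (\Delta-1)^k$. Hence $\log|V(T_k)| \ge k\log(\Delta-1)$, and since $\Delta\ge 3$ we have $\log(\Delta-1)>0$. I will take $a := 2/\log(\Delta-1)$. Then $\ell\ge a\log|V(T_k)|$ forces $\ell \ge 2k$.

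Next I bound the walk counts. In a tree, every self-avoiding walk is the unique path between its endpoints, so it has length at most the diameter $2k$. Therefore, for all $\ell\ge 2k$ and every vertex $v$,
\[
\mathcal{N}_{\le \ell}(T_k,v)=\mathcal{N}_{\le 2k}(T_k,v)\le |V(T_k)|-1.
\]
The count is at most the number of possible endpoints, since each endpoint determines at most one walk. It remains to compare $|V(T_k)|$ with $c\,d^{\ell}$. We have $|V(T_k)|\le 2(\Delta-1)^k$, while $d^\ell \ge d^{2k}$. This comparison is the only place where care is needed: $(\Delta-1)^k$ may be much larger than $d^{2k}$ when $d$ is close to $1$.

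To handle this, I enlarge $a$ instead of fixing it as above. I take
\[
a := \max\Bigl\{\frac{2}{\log(\Delta-1)},\ \frac{1}{\log d}\Bigr\}.
\]
Then $\ell\ge a\log|V|$ gives $d^\ell \ge d^{\log|V|/\log d} = |V|$. Consequently $\mathcal{N}_{\le\ell}(T_k,v)\le |V(T_k)| \le d^\ell$ for every $\ell \ge a\log|V(T_k)|$, for every $k$ and every $v$. With $c=1$, this shows $\mu_{\log}(\mathcal{T})\le d$.

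Since $d>1$ was arbitrary, the lemma follows. In fact the second term in $a$ alone suffices, because $\mathcal{N}_{\le\ell}\le|V|$ holds for all $\ell$. The real content is the observation that in a tree the number of self-avoiding walks is at most $|V|$. The contrast with the family having $\lambda_c(\Delta-1)$ zeros accumulating (presumably used next) is then immediate.
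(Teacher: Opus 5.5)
Your proof is correct and matches the paper's argument. In a tree each vertex $u\neq v$ determines exactly one self-avoiding walk from $v$, so $\mathcal{N}_{\le \ell}(T_k,v)\le |V(T_k)|-1$; with $c=1$ and $a=1/\log d$ one gets $d^{\ell}\ge |V(T_k)|$ whenever $\ell\ge a\log|V(T_k)|$, and, as you note yourself, the diameter-based term $2/\log(\Delta-1)$ in your choice of $a$ is unnecessary.
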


\begin{proof}
Fix $k$ and write $n:=|V(T_k)|$. On a tree, every self-avoiding walk is a simple path, and each vertex $u\neq v$ at
distance $i$ from $v$ determines a unique self-avoiding walk of length $i=\mathrm{dist}(u,v)$. Hence, for every $r\ge 1$,
\[
(\mu_r(T_k))^r\;=\;\max_{v\in V(T_k)}\mathcal{N}_{\le r}(T_k,v)\;\le\; n-1\;<\; n.
\]
Now take $a:=1/\log d$ and $c:=1$. If $\ell\ge a\log n$, then $d^{\ell}\ge n$ and therefore
$(\mu_\ell(T_k))^\ell\le n\le d^{\ell}$, which is exactly the $\mu_{\log}(\mathcal{T})\leq d$.
\end{proof}

To formalize the inability of the log-depth definition to exclude zeros, we recall a result by Peters and Regts regarding the location of zeros for regular trees.

\begin{proposition}[Proposition~2.1, \cite{peters2019conjecture}]
\label{prop:peters-zeros}
Fix $\Delta\ge 3$. There exists an open set 
\begin{equation}
    U_{\Delta-1}=\left\{-\frac{\alpha(\Delta-1)^{\Delta-1}}{(\Delta-1+\alpha)^\Delta} \;\middle|\; |\alpha|<1\right\}\subset\mathbb{C}
\end{equation} 
such that for all $k\in\mathbb N$ and all
$\lambda\in U_{\Delta-1}$, the hard-core partition function of the depth-$k$ $(\Delta-1)$-ary tree is nonzero. Moreover,
if $\lambda\in\partial U_{\Delta-1}$, then every open neighborhood of $\lambda$ contains a zero of one of these
partition functions.
\end{proposition}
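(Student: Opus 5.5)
The plan is to reduce the zeros of $Z_{T_k}$ to the orbit of a single rational map and then use one-variable complex dynamics. Write $d:=\Delta-1$ and $f_\lambda(x):=F_{\lambda,d}(x,\dots,x)=\lambda(1+x)^{-d}$. Let $T_k$ be the $d$-ary tree of depth $k$ and $R_k:=R_{T_k,\mathrm{root}}$, with $R_0=\lambda$. By the tree recurrence, $R_k=f_\lambda(R_{k-1})$, so $R_k=f_\lambda^{k+1}(0)$.

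\textbf{Step 1 (reduction to the orbit of $0$).} Condition on whether the root is occupied. The unoccupied contribution to $Z_{T_k}$ is $Z_{T_{k-1}}^d$. Hence
\[
Z_{T_k}=Z_{T_{k-1}}^d\,(1+R_k),
\]
and this identity is valid as long as the previous partition functions are nonzero. By induction, $Z_{T_j}\neq0$ for all $j\le k$ if and only if the orbit $0,f_\lambda(0),\dots,f_\lambda^{k+1}(0)$ never equals $-1$. Note that if it did equal $-1$, the next iterate would be $\infty$ and the one after that $0$.

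\textbf{Step 2 (identifying $U_{d}$).} Parametrize the fixed points $x=f_\lambda(x)$ by their multiplier
\[
\alpha:=f_\lambda'(x)=-\frac{dx}{1+x}.
\]
Solving gives $x=-\alpha/(d+\alpha)$, $1+x=d/(d+\alpha)$, and $\lambda=x(1+x)^d=-\alpha d^d/(d+\alpha)^{d+1}$. So $U_{d}$ is exactly the set of $\lambda$ for which $f_\lambda$ has an attracting fixed point, and $\partial U_d$ corresponds to $|\alpha|=1$.

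\textbf{Step 3 (no zeros on $U_d$).} The map $f_\lambda$ is rational of degree $d$. Its critical points are only $-1$ and $\infty$, and these lie on one critical orbit $-1\mapsto\infty\mapsto0\mapsto\lambda\mapsto\cdots$. By Fatou's theorem every attracting cycle attracts the orbit of some critical point, so $f_\lambda$ has at most one attracting cycle.

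For $\lambda\in U_d\setminus\{0\}$ this cycle is the attracting fixed point $x_\lambda$. Suppose the orbit of $0$ hit $-1$. Then it would return to $0$, making $\{-1,\infty,0,\dots\}$ a superattracting cycle distinct from $x_\lambda$ (note $x_\lambda\ne0$ when $\lambda\neq0$). That gives two attracting cycles, a contradiction. By Step 1, $Z_{T_k}(\lambda)\ne0$ for all $k$. The case $\lambda=0$ is trivial.

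\textbf{Step 4 (boundary points are accumulation points of zeros).} Fix $\lambda_0\in\partial U_d$ and suppose, for contradiction, that a disk $D\ni\lambda_0$ contains no zeros of any $Z_{T_k}$. We may shrink $D$ so that $0\notin D$; this is possible because $\lambda=0$ corresponds to $\alpha=0$, which is interior to $U_d$. On $D$ each $\lambda\mapsto R_k(\lambda)$ is holomorphic and omits three values:
\begin{itemize}[leftmargin=*]
\item it omits $-1$, since there are no zeros;
\item it omits $\infty$, since $R_{k-1}\ne-1$;
\item it omits $0$, since $\lambda\ne0$.
\end{itemize}
By Montel's theorem $\{R_k\}$ is a normal family on $D$.

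On the nonempty open set $D\cap U_d$ we have $R_k\to x_\lambda$ by Step 3. Normality then forces locally uniform convergence on all of $D$ to a holomorphic extension of $\lambda\mapsto x_\lambda$, which is still a fixed point of $f_\lambda$. But $D$ also contains an open set of parameters with $|\alpha|>1$. There the orbit of $0$ converges to a repelling fixed point, which is only possible if the orbit eventually lands exactly on it. This is a countable union of analytic conditions $f_\lambda^{n}(0)=x_\lambda$, each defining either a discrete set or all of $D$. A discrete union cannot cover an open set, and the second alternative contradicts the attracting behavior on $U_d$, where $0\neq x_\lambda$ for generic $\lambda$ and the orbit is not eventually fixed.

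I expect Step 4 to be the main obstacle. The Montel normality is easy; the care lies in passing from normality to a contradiction on the repelling side. In particular one must check that the branch of the fixed point $x_\lambda$ extends holomorphically across $|\alpha|=1$ near $\lambda_0$. This requires $\alpha\ne-d$, i.e.\ that $\lambda_0$ is not a branch point of the parametrization, which holds because $d\ge2$ and $|\alpha|=1$.
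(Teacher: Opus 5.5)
The paper gives no proof of this proposition; it simply cites Peters--Regts. Your overall route is the standard dynamical argument behind that result: reduce to the orbit of $0$ under $f_\lambda(x)=\lambda(1+x)^{-d}$, use the single critical orbit $-1\mapsto\infty\mapsto 0$ for the interior, and use Montel's theorem at the boundary. Steps 1--3 are fine.

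\textbf{The gap is in Step 4.} There you need an open subset of $D$ on which the limiting fixed point is repelling. Your justification is that $x_\lambda$ continues holomorphically across $|\alpha|=1$ provided $\alpha\neq -d$. That is the wrong condition. The point $\alpha=-d$ is only the pole of $\Phi(\alpha)=-\alpha d^d/(d+\alpha)^{d+1}$. The real obstruction is $\Phi'(\alpha)=-d^{d+1}(1-\alpha)/(d+\alpha)^{d+2}=0$, i.e.\ $\alpha=1$. Equivalently, writing $h(x)=x(1+x)^d$ so that fixed points solve $h(x)=\lambda$, we have $h'(x)=(1+x)^{d-1}(1+(d+1)x)$, which vanishes at the fixed point $x=-1/(d+1)$.

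This point lies on the unit circle. At $\lambda_0=\Phi(1)=-d^d/(d+1)^{d+1}=-\lambda_s(d)\in\partial U_d$, two fixed points collide and $x_\lambda$ has a square-root branch point. The local inverse of $\Phi$ you implicitly use does not exist there. So your claim that the branch extends across every boundary point is false, and at this $\lambda_0$ your Step 4 produces no repelling parameters. Away from $\alpha_0=1$ your reasoning works, once you match the local branch to the normal limit by the identity theorem.

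\textbf{Repair: work with the normal limit $g$ itself} rather than an a priori continuation of $x_\lambda$.
\begin{enumerate}
\item Since $g$ is holomorphic on $D$ with $g(1+g)^d=\lambda$ and $g\neq -1$, its multiplier $m(\lambda)=-dg(\lambda)/(1+g(\lambda))$ is holomorphic and satisfies $\Phi(m(\lambda))=\lambda$.
\item Differentiating gives $\Phi'(m(\lambda))\,m'(\lambda)=1$, so $m(\lambda_0)\neq 1$. This alone gives a contradiction at $\lambda_0=-\lambda_s(d)$.
\item In general, $|m|<1$ on $D\cap U_d$ gives $|m(\lambda_0)|\le 1$. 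If $|m(\lambda_0)|<1$, your Step 2 would put $\lambda_0$ in $U_d$. Hence $|m(\lambda_0)|=1$.
\item $m$ is nonconstant, so the open mapping theorem gives a nonempty open $W\subseteq D$ on which $g(\lambda)$ is repelling.
\end{enumerate}

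\textbf{A second, smaller gap.} You dismiss the alternative $f^n_\lambda(0)\equiv g(\lambda)$ on $D$ by saying the orbit is not eventually fixed for generic $\lambda\in U_d$, but that is exactly what needs proof. A clean argument: in that case $g$ would equal the rational function $r(\lambda)=f^n_\lambda(0)$, with $h\circ r=\mathrm{id}$. This is impossible, since $\deg(h\circ r)=(d+1)\deg r\neq 1$.

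With these two repairs the proof goes through.
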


Combining this with Lemma~\ref{lem:log-vacuity}, we establish the following non-zero-freeness result.

\begin{lemma}
    There exists a family $\mathcal{H}$ of finite graphs with $\mu_{\log}(\mathcal{H})\leq d$ such that there exists an accumulation point of zeros of partition functions over $\mathcal{H}$ for some $\lambda\in(0, \lambda_c(d)).$ In other words, one cannot find a strip-like complex zero-free region in terms of $\mu_{\log}(\mathcal{H})$.
\end{lemma}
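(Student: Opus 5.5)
The plan is to take $\mathcal{H}=\mathcal{T}_\Delta=\{T_k\mid k\ge 1\}$, the family of $(\Delta-1)$-ary trees from Lemma~\ref{lem:log-vacuity}, with $\Delta$ chosen large enough relative to $d$. By Lemma~\ref{lem:log-vacuity}, $\mu_{\log}(\mathcal{T}_\Delta)\le d$ for every $d>1$ and every $\Delta\ge 3$. So it suffices to find a positive point of $\partial U_{\Delta-1}$ that lies in $(0,\lambda_c(d))$. Proposition~\ref{prop:peters-zeros} then says that every neighborhood of this point contains a zero of some $Z_{T_k}$, so it is an accumulation point of zeros.

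To locate the positive boundary point, I will parametrize $\partial U_{\Delta-1}$ by $|\alpha|=1$ and set $\alpha=-1$:
\[
\lambda^\ast=\frac{(\Delta-1)^{\Delta-1}}{(\Delta-2)^{\Delta}}=\lambda_c(\Delta-1).
\]
I need $\lambda^\ast\in\partial U_{\Delta-1}$. This holds because the map $\alpha\mapsto -\alpha(\Delta-1)^{\Delta-1}/(\Delta-1+\alpha)^\Delta$ is holomorphic and injective on a neighborhood of the closed unit disk (it is the standard parametrization from \cite{peters2019conjecture}), so the boundary of the image of the open disk is the image of the circle. Alternatively, I can note directly that $\lambda^\ast\notin U_{\Delta-1}$: along the real segment $\alpha\in(-1,0]$ the image increases continuously from $0$ toward $\lambda^\ast$, so $\lambda^\ast$ lies in the closure of $U_{\Delta-1}$.

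It remains to choose $\Delta$ so that $\lambda_c(\Delta-1)<\lambda_c(d)$. We have $\lambda_c(x)=x^x/(x-1)^{x+1}$, which is decreasing for $x>1$, tends to $+\infty$ as $x\to1^+$, and behaves like $e/x$ as $x\to\infty$. Given $d>1$, I therefore pick an integer $\Delta$ with $\Delta-1>d$ (and $\Delta\ge3$), which gives $0<\lambda^\ast=\lambda_c(\Delta-1)<\lambda_c(d)$. The lemma's value of $\lambda$ is then $\lambda^\ast$.

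The main obstacle is rigorously justifying that $\lambda_c(\Delta-1)$ is a genuine boundary point, and not merely a limit of the parametrization. If the injectivity argument is awkward, I would instead use any boundary point near the positive real axis, or invoke directly the known fact \cite{peters2019conjecture} that $\lambda_c(\Delta-1)$ is an accumulation point of zeros of trees, which Proposition~\ref{prop:peters-zeros} encodes. The monotonicity of $\lambda_c$ is routine: differentiate $\log\lambda_c(x)=x\log x-(x+1)\log(x-1)$ to get $\log\frac{x}{x-1}-\frac{2}{x-1}<0$.
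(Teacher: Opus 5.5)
Your argument is the paper's: you take the $(\Delta-1)$-ary trees, get $\mu_{\log}\le d$ from Lemma~\ref{lem:log-vacuity}, set $\alpha=-1$ in Proposition~\ref{prop:peters-zeros}, and compare using the monotonicity of $\lambda_c$. The only difference is the order of choices. The paper fixes $\Delta$ and then picks $d<\Delta-1$, while you fix $d$ and pick $\Delta>d+1$. This is the same condition, and your ordering gives the statement for every $d>1$.

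The paper simply asserts $\lambda_c(\Delta-1)\in\partial U_{\Delta-1}$, so you do not need extra justification. However, the justification you wrote for the step you flagged is wrong on both routes.

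\emph{Injectivity route.} For $f(\alpha)=-\alpha(\Delta-1)^{\Delta-1}/(\Delta-1+\alpha)^{\Delta}$ one computes
\[
f'(\alpha)=-\frac{(\Delta-1)^{\Delta}(1-\alpha)}{(\Delta-1+\alpha)^{\Delta+1}},
\]
which vanishes at $\alpha=1$. This is the cusp of $\partial U_{\Delta-1}$ at $-\lambda_s(\Delta-1)$. So $f$ is locally two-to-one there and is not injective on any neighborhood of the closed disk.

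\emph{Alternative route.} Your second argument shows only $\lambda^\ast\in\overline{U_{\Delta-1}}$, not $\lambda^\ast\notin U_{\Delta-1}$ as you state.

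The missing half takes one line. For $|\alpha|<1$ we have $|\Delta-1+\alpha|\ge \Delta-1-|\alpha|>\Delta-2\ge 1$, hence
\[
|f(\alpha)|=\frac{|\alpha|(\Delta-1)^{\Delta-1}}{|\Delta-1+\alpha|^{\Delta}}<\frac{(\Delta-1)^{\Delta-1}}{(\Delta-2)^{\Delta}}=\lambda^\ast .
\]
So $U_{\Delta-1}$ lies in the open disk of radius $\lambda^\ast$, and $\lambda^\ast\notin U_{\Delta-1}$. Together with your observation that $f(\alpha)\uparrow\lambda^\ast$ as $\alpha\to-1^{+}$ along the reals, this gives $\lambda^\ast\in\partial U_{\Delta-1}$ rigorously.
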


\begin{proof}
    Let $\mathcal{H}$ be the $(\Delta-1)$-ary tree. By Proposition~\ref{prop:peters-zeros}, taking $\alpha=-1$ yields a point

    $$\lambda=\frac{(\Delta-1)^{\Delta-1}}{(\Delta-2)^\Delta}$$

    lying on \(\partial U_{\Delta-1}\) and on the positive real axis.

    As in Lemma~\ref{lem:log-vacuity}, for any $d>1$ we can ensure $\mu_{\log}(\mathcal{T})\le d$ by choosing $a$ sufficiently large. Since $\lambda_c(\cdot)$ is decreasing, we may choose $d$ such that $\lambda_c(d)>\lambda$. Consequently, the partition function has at least one zero in every open complex neighborhood of $[0,\lambda_c(d))$, completing the proof.
\end{proof}

We now examine the relationship between our definition of connective constants (Definition~\ref{def:2cc}) and the standard definition for infinite graphs.

\begin{definition}[Connective constants for infinite graphs]
    \label{pro:cc}
    For an infinite graph $G$, the connective constant $\sigma(G)$ of $G$ is $\limsup_{k\rightarrow\infty}\sigma_k(G)$.
\end{definition}

\begin{proposition}
Let $G$ be a vertex-transitive infinite graph. Then $\sigma(G)=\lim_{k\rightarrow\infty}\sigma_k(G)=\lim_{k\rightarrow\infty}\mu_k(G)$.
\end{proposition}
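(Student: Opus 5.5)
By vertex-transitivity, $\mathcal{N}_k(G,v)$ does not depend on $v$. Write $a_k:=\mathcal{N}_k(G,v)$, so that $\sigma_k(G)=a_k^{1/k}$. The plan has two parts: first show that $\lim_k a_k^{1/k}$ exists via Fekete's lemma, and then show that the cumulative counts $\mathcal{N}_{\le k}(G,v)=\sum_{i=1}^k a_i$ have the same exponential growth rate.

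\emph{Step 1 (submultiplicativity).} We claim $a_{m+n}\le a_m a_n$. Take a self-avoiding walk of length $m+n$ from $v$ and split it into its first $m$ steps and its last $n$ steps. The first piece is a SAW of length $m$ from $v$. The second piece is a SAW of length $n$ starting at the endpoint $u$ of the first piece. By transitivity $\mathcal{N}_n(G,u)=a_n$, so the map from a walk to its pair of pieces is injective into a set of size at most $a_m a_n$.

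\emph{Step 2 (the limit of $\sigma_k$).} Local finiteness gives $1\le a_k<\infty$, since $G$ is infinite and connected (connectivity is implicit for a transitive lattice) and so contains SAWs of every length. Fekete's lemma applied to the subadditive sequence $\log a_k$ then gives that $\lim_k a_k^{1/k}=\inf_k a_k^{1/k}$ exists. In particular it equals the $\limsup$, which is $\sigma(G)$ by Definition~\ref{pro:cc}.

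\emph{Step 3 (comparison with $\mu_k$).} The lower bound is $\mu_k(G)^k=\sum_{i\le k}a_i\ge a_k$, so $\liminf_k\mu_k(G)\ge\sigma(G)$. For the upper bound, fix $\epsilon>0$. By Step 2 there is a constant $C_\epsilon$ with $a_i\le C_\epsilon(\sigma+\epsilon)^i$ for all $i$. This gives
\[
\sum_{i\le k}a_i\le k\,C_\epsilon\max\{1,(\sigma+\epsilon)^k\}.
\]
Taking $k$-th roots, and using $(kC_\epsilon)^{1/k}\to1$ together with $\sigma\ge1$, we get $\limsup_k\mu_k(G)\le\sigma+\epsilon$. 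Letting $\epsilon\to0$ completes the argument.

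The only genuinely nontrivial point is Step 1. There, vertex-transitivity is used to bound the number of continuations from an arbitrary endpoint $u$ by the same quantity $a_n$. Without transitivity one would only have $a_{m+n}(v)\le a_m(v)\sup_u a_n(u)$, which still suffices if we define $a_k$ as the supremum over all vertices. The remaining steps are routine estimates.
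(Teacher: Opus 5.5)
Your proof is correct and follows the same route as the paper. The paper cites \cite{grimmett2015bounds} for $\sigma(G)=\lim_k\sigma_k(G)=\inf_k\sigma_k(G)$, which you instead derive directly from submultiplicativity $a_{m+n}\le a_m a_n$ and Fekete's lemma; both arguments then sandwich $\mathcal{N}_{\le k}(G,v)$ between $\mathcal{N}_k(G,v)$ and a subexponential multiple of $(\sigma(G)+\epsilon)^k$. Your explicit appeal to connectivity (to get $a_k\ge 1$, hence $\sigma(G)\ge 1$) is genuinely needed and is used only tacitly in the paper: an infinite disjoint union of triangles is vertex-transitive, yet $\sigma(G)=0$ while $\mu_k(G)=4^{1/k}\to 1$.
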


\begin{proof}
    For a vertex-transitive infinite graph $G$, from \cite{grimmett2015bounds} we know that $\sigma(G)=\lim_{k\rightarrow\infty}\sigma_k(G)=\inf_{k\ge 1}\sigma_k(G)$.
    For any vertex $v\in V(G)$, $$\liminf_{k\to\infty}(\mathcal{N}_{\leq k}(G,v))^{1/k}\ge\inf_{k\ge1}(\mathcal{N}_{k}(G,v))^{1/k}=\sigma(G).$$ Fix any $\varepsilon>0$. There exists $K>0$ such that for all $k\ge K$,
    $$\mathcal{N}_{\leq k}(G,v)\le\mathcal{N}_{\leq K-1}(G,v)+\sum^k_{i=K}(\sigma(G)+\varepsilon)^i\le M(\sigma(G)+\varepsilon)^k$$
    where $M<\infty$ is a constant. Let $\varepsilon\downarrow0$; it follows that $\limsup_{k\to\infty}(\mathcal{N}_{\leq k}(G,v))^{1/k}\le\sigma(G)$. Taking the supremum over $v$ and the infimum over $k$ yields $\lim_{k\rightarrow\infty}\mu_k(G)=\sigma(G)$.
\end{proof}

For an infinite graph $G$, let $\mathcal{H}_G$ denote the set of all finite subgraphs of $G$. We have the following relation between the lower connective constant of the finite subgraphs and the connective constant of the infinite graph.

\begin{lemma}
    \label{lem:muinf_le_sigma}
    For any infinite graph $G$, $\mu_{\inf}(\mathcal{H}_G)\leq \sigma(G).$
    In particular, if $G$ is an infinite lattice, then $\mu_{\inf}(\mathcal{H}_G)= \sigma(G).$
\end{lemma}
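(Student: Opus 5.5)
The inequality comes from a monotonicity observation, and the equality from a matching lower bound using a single finite subgraph at each depth.

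\emph{Upper bound.} Self-avoiding walks in a subgraph $H\subseteq G$ are also self-avoiding walks in $G$. Hence for every $H\in\mathcal{H}_G$, every $v\in V(H)$ and every $k$ we have $\mathcal{N}_{\le k}(H,v)\le \mathcal{N}_{\le k}(G,v)$. Taking suprema gives $\mu_k(\mathcal{H}_G)\le \mu_k(G)$, where $\mu_k(G):=\sup_{v}\mathcal{N}_{\le k}(G,v)^{1/k}$ is finite because $G$ is locally finite. It therefore suffices to show
\[
\liminf_{k\to\infty}\mu_k(G)\le \sigma(G),
\]
since then $\mu_{\inf}(\mathcal{H}_G)=\inf_k \mu_k(\mathcal{H}_G)\le \inf_k\mu_k(G)\le \liminf_k \mu_k(G)\le\sigma(G)$.

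To prove this, fix $\varepsilon>0$. By the definition $\sigma(G)=\limsup_k \sigma_k(G)$ there is a $K$ such that $\sup_v \mathcal{N}_i(G,v)\le (\sigma(G)+\varepsilon)^i$ for all $i\ge K$. For $i<K$, local finiteness alone does not bound $\mathcal{N}_i(G,v)$ uniformly in $v$. I would add the assumption of bounded degree $\Delta$ (implicit in the paper's setting, and needed for $\sigma_k(G)<\infty$ anyway), which gives $\mathcal{N}_i(G,v)\le\Delta^i$. Then
\[
\mathcal{N}_{\le k}(G,v)\le M_K+\sum_{i=K}^k(\sigma(G)+\varepsilon)^i\le M'(\sigma(G)+\varepsilon)^k ,
\]
uniformly in $v$, where $M_K:=\sum_{i<K}\Delta^i$. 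This assumes $\sigma(G)+\varepsilon>1$, which holds because $\sigma(G)\ge1$ for infinite connected graphs. (If $G$ is not connected, the same argument applies componentwise with uniform constants.) Taking $k$-th roots, $\mu_k(G)\le M'^{1/k}(\sigma(G)+\varepsilon)$, so $\limsup_k\mu_k(G)\le\sigma(G)+\varepsilon$. Letting $\varepsilon\downarrow0$ gives the upper bound. This mirrors the computation in the preceding proposition.

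\emph{Equality for lattices.} For an infinite lattice $G$, which is vertex-transitive, we need $\mu_k(\mathcal{H}_G)\ge\sigma(G)$ for every $k$. Fix $k$ and $v$, and let $H$ be the ball of radius $k$ around $v$, a finite subgraph. Every self-avoiding walk of length at most $k$ from $v$ in $G$ stays inside $H$, so $\mathcal{N}_{\le k}(H,v)=\mathcal{N}_{\le k}(G,v)\ge \mathcal{N}_k(G,v)$. Hence $\mu_k(\mathcal{H}_G)\ge \sigma_k(G)$, using vertex-transitivity to drop the supremum over $v$. By the cited fact from~\cite{grimmett2015bounds}, $\sigma_k(G)\ge\inf_j\sigma_j(G)=\sigma(G)$. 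Taking the infimum over $k$ gives $\mu_{\inf}(\mathcal{H}_G)\ge\sigma(G)$, and with the upper bound, equality.

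\emph{Main obstacle.} The delicate step is the uniform control of short walks, $i<K$, in the upper bound. It needs bounded degree, or at least $\sup_v\mathcal{N}_i(G,v)<\infty$, which I would state as a standing assumption; it is automatic for lattices. The lower bound is straightforward once the submultiplicativity-based identity $\sigma=\inf_k\sigma_k$ is available.
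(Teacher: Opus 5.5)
Your proof is correct and follows the same route as the paper. SAW monotonicity plus the $\limsup$/geometric-sum estimate gives $\mu_{\inf}(\mathcal{H}_G)\le\sigma(G)$, and $\sigma(G)=\inf_k\sigma_k(G)\le\mu_k(\mathcal{H}_G)$ gives equality for lattices. You are also more careful than the paper on two points it glosses over. First, the constant $M$ must be uniform in $v$; for connected $G$ with $\sigma(G)<\infty$ this is automatic, since finiteness of even one $\sigma_k(G)$ forces bounded degree. Second, you justify $\mu_k(\mathcal{H}_G)\ge\sigma_k(G)$ via radius-$k$ balls, and there vertex-transitivity is not needed, because the supremum over $v$ is already built into $\sigma_k$.
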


\begin{proof}
    We defer the proof to the Appendix \ref{sec:infinite_lattice_free_energy}.
\end{proof}

The following lemma links the analyticity of the free energy on an infinite lattice to the zero-freeness of partition functions on its finite subgraphs.

\begin{lemma}
    \label{lem:infinite_main}
    For an infinite lattice $L$, if there is a complex region $R=\mathcal L ([0, \lambda^\ast])$ such that for every $H\in \mathcal{H}_L$, the partition function $Z_H$ is zero-free on $R$, then the free energy function $f_L$ is unique and analytic on $R$. 
\end{lemma}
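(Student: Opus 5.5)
We will use the standard argument for infinite lattices: take a van Hove (Følner) exhausting sequence of finite subgraphs, and combine uniform zero-freeness with uniform bounds and Vitali's theorem. Concretely, fix an exhausting sequence $\Lambda_1\subset\Lambda_2\subset\cdots$ of finite subgraphs of $L$ (for instance boxes) whose boundary-to-volume ratio $|\partial\Lambda_n|/|\Lambda_n|$ tends to $0$. By hypothesis each $Z_{\Lambda_n}$ has no zeros on $R=\mathcal L([0,\lambda^\ast])$. Since $R$ is simply connected (it is a closed neighbourhood of a segment, hence convex), we can choose a slightly smaller open convex neighbourhood $U\subset R$ of $[0,\lambda^\ast]$, or work on the interior. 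On $U$ the branch $\log Z_{\Lambda_n}$ is well defined and analytic, normalized by $\log Z_{\Lambda_n}(0)=0$. We then set $f_n=\frac{1}{|\Lambda_n|}\log Z_{\Lambda_n}$.

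The plan has three steps.

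\textbf{Step 1: uniform bound.} We need $|f_n|$ bounded on compact subsets of $U$, uniformly in $n$. The real part is easy from one side: $\log|Z_{\Lambda_n}(z)|\le |\Lambda_n|\log(1+|z|)$. A lower bound on $\mathrm{Re} f_n$ and a bound on $\mathrm{Im} f_n$ do not follow directly from zero-freeness. For these I would use the standard Borel--Carathéodory/Harnack device. Since $\mathrm{Re} f_n\le C$ on $U$ and $f_n(0)=0$, Borel--Carathéodory bounds $|f_n|$ on any compact $K\subset U$ (after a chain of overlapping disks along the strip from $0$ to $\lambda^\ast$) by a constant depending only on $C$, $K$ and $U$. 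This makes $\{f_n\}$ a normal family.

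\textbf{Step 2: convergence on a set with an accumulation point.} For real $\lambda\in(0,\lambda^\ast]$, convergence of $f_n(\lambda)$ to a limit independent of the sequence is classical. It follows from subadditivity of $\log Z$ under disjoint unions together with the boundary bound
\[
|\log Z_{\Lambda}(\lambda)-\log Z_{\Lambda'}(\lambda)-\log Z_{\Lambda\setminus\Lambda'}(\lambda)|\le C_\lambda\,|\partial\Lambda'|,
\]
since removing edges across a cut changes $Z$ by a factor at most $(1+\lambda)^{\#\text{vertices on cut}}$. The van Hove property then gives a limit $f_L(\lambda)$ independent of the chosen sequence. This is the standard Fekete-type argument for the thermodynamic limit on $\mathbb Z^d$-periodic lattices.

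\textbf{Step 3: Vitali.} By Vitali's convergence theorem (uniformly bounded analytic functions converging on a set with an accumulation point in a connected domain), $f_n\to f_L$ locally uniformly on $U$, and $f_L$ is analytic there. Uniqueness of $f_L$ follows from two facts: the limit on the real segment does not depend on the van Hove sequence, and analytic continuation from $(0,\lambda^\ast]$ is unique.

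The main obstacle is Step 1: the uniform control of $\mathrm{Im}\log Z$. Zero-freeness alone gives no quantitative bound. The Borel--Carathéodory route needs the one-sided real-part bound on a slightly larger domain, so we should shrink $R$ to a compact subregion where bounds from the larger zero-free region apply. If the paper intends the conclusion on all of $R$ (closed), we would interpret it on the interior, or apply the hypothesis with a marginally larger $\delta$.
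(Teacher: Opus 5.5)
Your proposal is correct and follows essentially the same route as the paper's appendix, which adapts Scott--Sokal. Both arguments take a van Hove exhaustion, prove the real-axis thermodynamic limit from an almost-additivity estimate with a boundary error, and then apply Vitali's theorem using the bound $\Re f_n(z)\le\log(1+|z|)$ together with convergence on $[0,\lambda^\ast]$. There are two cosmetic differences: the paper gets almost-additivity from a core-and-corridor block decomposition under Assumption $\mathcal{A}$ rather than your cut estimate, and it cites the Scott--Sokal form of Vitali needing only $\Re f_n\le M$, whereas you first upgrade to two-sided bounds via Borel--Carath\'eodory along a chain of disks and then use the classical Vitali theorem.
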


\begin{proof}
    When $L$ is a square lattice $\mathbb{Z}^d$, the proof is given in \cite{scott2005repulsive}, and it can be extended to arbitrary lattices.
    In the appendix, we actually prove this lemma for a broader class of infinite graphs satisfying certain conditions, of which lattices are special cases.  
\end{proof}

We are now ready to prove our main result  regarding the analyticity of the free energy density on infinite lattices.

\begin{theorem}[Theorem~\ref{thm:infinite_main} restated]
    For an infinite lattice $L$ with the connective constant $\sigma(L)$ and any $\lambda < \lambda_c(\sigma(L))$, there exists a $\delta>0$ (depending on $\lambda$ and $L$) such that the free-energy function $f_L$ is unique and analytic on $\mathcal{L}([0, \lambda])$.
\end{theorem}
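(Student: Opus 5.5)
The plan is to chain three results already in place: Lemma~\ref{lem:muinf_le_sigma}, Theorem~\ref{thm:zero_free_main}, and Lemma~\ref{lem:infinite_main}.

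Let $\mathcal{H}_L$ be the family of all finite subgraphs of $L$. By Lemma~\ref{lem:muinf_le_sigma}, since $L$ is an infinite lattice we have $\mu_{\inf}(\mathcal{H}_L)=\sigma(L)$; only the inequality $\mu_{\inf}(\mathcal{H}_L)\le\sigma(L)$ is actually needed. The critical activity $\lambda_c(d)=d^d/(d-1)^{d+1}$ is decreasing in $d$. Hence the hypothesis $\lambda<\lambda_c(\sigma(L))$ gives $\lambda<\lambda_c(\mu_{\inf}(\mathcal{H}_L))$. For $\lambda\le 0$ the interval $[0,\lambda]$ degenerates, and we can instead work with any positive $\lambda'<\lambda_c(\sigma(L))$ that is at least $\lambda$. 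So we may assume $0<\lambda$.

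Next apply Theorem~\ref{thm:zero_free_main} to the family $\mathcal{H}_L$. It produces a $\delta>0$, depending only on $\lambda$ and $\mathcal{H}_L$ (and therefore only on $\lambda$ and $L$), such that $Z_H$ has no zeros on $\mathcal{L}_\delta([0,\lambda])$ for every $H\in\mathcal{H}_L$. This is the uniform zero-free region required by the hypothesis of Lemma~\ref{lem:infinite_main}, with $R=\mathcal{L}_\delta([0,\lambda])$ and $\lambda^\ast=\lambda$. Invoking that lemma, the limiting free energy $f_L$ exists, is independent of the exhausting sequence (this is the uniqueness claim), and is analytic on $R$, which is what the theorem asserts.

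The only point needing care is that the lower connective constant in Theorem~\ref{thm:zero_free_main} is an infimum over depths $k$ of a supremum over all finite subgraphs. So we must check that $\lambda<\lambda_c(\mu_{\inf}(\mathcal{H}_L))$ follows from $\lambda<\lambda_c(\sigma(L))$. This is exactly what Lemma~\ref{lem:muinf_le_sigma} and the monotonicity of $\lambda_c$ supply. Two further remarks settle the remaining details: the supremum over finite subgraphs at fixed $k$ is attained by the lattice itself, since SAW counts are monotone under taking subgraphs, and vertex-transitivity makes the sup over $v$ harmless. Beyond this check the argument is a direct composition, and all the substantive difficulty sits inside Theorem~\ref{thm:zero_free_main} and Lemma~\ref{lem:infinite_main}.
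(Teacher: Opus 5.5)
Your proposal is correct and follows the paper's own route. You chain Lemma~\ref{lem:muinf_le_sigma}, Theorem~\ref{thm:zero_free_main} applied to $\mathcal{H}_L$, and Lemma~\ref{lem:infinite_main}, correctly noting that only $\mu_{\inf}(\mathcal{H}_L)\le\sigma(L)$ and the monotonicity of $\lambda_c$ are needed; your closing vertex-transitivity aside is unnecessary, since general lattices are only quasi-transitive, but nothing in your argument depends on it.
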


\begin{proof}
    For a lattice $L$, we have $\sigma(L)=\mu_{\inf}(\mathcal{H}_L)$ by Lemma~\ref{lem:muinf_le_sigma}. Hence, by Theorem~\ref{thm:zero_free_main}, for any $0<\lambda<\lambda_c(\sigma(L))=\lambda_c(\mu_{\inf}(\mathcal{H}_L))$, the partition function $Z_{H}$ is zero-free on $\mathcal{L}_{\delta}([0,\lambda])$ for every $H\in \mathcal{H}_L$, where $\delta$ depends only on $\lambda$ and $\mathcal{H}_L$. Invoking Lemma~\ref{lem:infinite_main}, we conclude that the free energy exists uniquely and is analytic on $\mathcal{L}_{\delta}([0,\lambda])$, which completes the proof.   
\end{proof}

\section{Zero-freeness under $k$-depth connective constant}\label{sec:positive}
In this section, we prove \Cref{thm:zero_free_main}. 
It suffices to prove the following zero-freeness results in terms of fixed-depth connective constants. 

\begin{theorem}[Main]\label{thm:zero-free}
Fix an integer $k\ge 1$ and real number $d > 1$.
Let $\mathcal{G}$ be the family of finite graphs with bounded $k$-depth connective constant
$\mu_k(\mathcal{G})\le d$.
There exists a complex region $\mathcal{U}\subseteq \C$ that contains the real interval $[0,\lambda_c(d))$ such that for every $\lambda\in \mathcal{U}$ and every $G\in\mathcal{G}$,
$Z_G(\lambda) \neq 0$.
\end{theorem}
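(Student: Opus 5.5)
We follow the contraction-lifting framework of Peters--Regts and Shao--Sun, with the one-step recurrence $F_{\lambda,d}$ replaced by the $k$-layer composed recurrence. Fix $\lambda_0<\lambda_c(d)$; we show zero-freeness on a neighborhood of $[0,\lambda_0]$ and then take the union over $\lambda_0$ to obtain $\mathcal{U}$. By the usual telescoping argument, $Z_G(\lambda)=\prod_i Z_{G_i}(\lambda)/Z_{G_{i-1}}(\lambda)$, where each factor equals $1+R_{G_i,v_i}(\lambda)$ for the graph $G_i$ obtained by deleting vertices one at a time. Hence it suffices to show, by induction on $|V|$, that for every $G\in\mathcal{G}$ and every $v$ the ratio $R_{G,v}(\lambda)$ is well defined and lies in a fixed compact set $D\subset\C\setminus\{-1\}$. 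The class $\mathcal{G}$ is closed under taking subgraphs, since $\mathcal{N}_{\le k}$ is monotone under vertex deletion, so the induction stays inside $\mathcal{G}$.

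\textbf{Step 1: real contraction in a potential metric.} Using Weitz's SAW tree $T=T_{\mathrm{SAW}}(G,v)$ (with pinned leaves removed as in the Preliminaries), unroll the recurrence $k$ levels. This writes $R_{T,v}=F^{(k)}_\lambda(x_u : u \text{ at depth } \le k)$, where the $x_u$ are ratios of subtrees and nodes at depth less than $k$ may be leaves. Take the Sinclair--Srivastava--Štefankovič--Yin potential $\varphi(x)=\arcsinh(\sqrt{x})$ (or the variant with $\Phi(x)=1/\sqrt{x(1+x)}$) and the induced map $G=\varphi\circ F\circ\varphi^{-1}$. The key real estimate from \cite{sinclair2017spatial} has the form
\[
\sum_{u}\left|\partial_{y_u} G^{(k)}\right| \le c\,\alpha^{k}\,\mathcal{N}'_{\le k},
\]
more precisely a weighted $\ell_q$ bound in which the contributions of the depth-$\ell$ nodes are controlled by $\beta^\ell$ for some $\beta<1/d$. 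Here $\beta$ is determined by $\lambda<\lambda_c(d)$, and the bound is uniform over real inputs in $[0,\lambda_0]$. Combined with $\mathcal{N}_{\le k}(G,v)\le d^{k}$, this gives a block contraction factor $\eta<1$ for the $k$-layer map. We will have to handle the block carefully: we bound the sum over all SAWs of length at most $k$, including paths that end at leaves before depth $k$, and we apply the estimate at every vertex of the recursion, not only the root. This is the reason the definition uses $\mathcal{N}_{\le k}$ rather than $\mathcal{N}_k$.

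\textbf{Step 2: lift to complex $\lambda$ and invariance.} Let $D_\varepsilon$ be the $\varepsilon$-neighborhood of $\varphi([0,\lambda_0])$ in the $\varphi$-coordinates, and set $\mathcal{U}=\mathcal{L}_\delta([0,\lambda_0])$. The derivative of the complexified $k$-block map is a continuous function of $(\lambda,y)$. So for small $\varepsilon$ and $\delta$, the complex derivative norm at any complex point within $\varepsilon$ of the real set is still at most $(1+\eta)/2$. We now induct over blocks of $k$ levels. Assume all inputs lie in $D_\varepsilon$, and choose real comparison points $\tilde y_u$ within $\varepsilon$ of them together with a real $\tilde\lambda$ near $\lambda$. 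Then the output lies within $\frac{1+\eta}{2}\varepsilon+O(\delta)$ of a real value, which is at most $\varepsilon$ once $\delta\ll(1-\eta)\varepsilon$. We will also need the derivative in $\lambda$ to be uniformly bounded. The base cases are the leaves and the subtrees of depth less than $k$; their ratios are given by finitely many compositions and lie near the real orbit by continuity, uniformly because the number of children is at most $d^k$. Since $\varphi^{-1}(D_\varepsilon)$ stays away from $-1$ for small $\varepsilon$, this yields $1+R\neq 0$ and hence $Z_G(\lambda)\ne0$.

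\textbf{Main obstacle.} The delicate step is Step 1 near $\lambda=0$ and for small $x$. There the potential $\varphi$ has a singular derivative, and the complex neighborhood $D_\varepsilon$ of the image must be chosen so that the lifted map remains well defined and contracting. We will first try to handle this with the standard trick of working with a shifted domain: take a neighborhood of $[0,\lambda_0]$ in $\lambda$ and a wedge-shaped neighborhood in $x$ near $0$. We would also verify that the constant $c$ in the block estimate does not depend on the degree, since $\mathcal{N}_{\le k}\le d^k$ bounds the degree only by $d^k$. This uniformity is what makes $\delta$ depend only on $k$, $d$ and $\lambda_0$.
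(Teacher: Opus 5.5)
Your architecture is the paper's. You use the $k$-layer block map on the SAW tree, the potential $\varphi=\arcsinh\sqrt{x}$, a real block contraction, a lift to complex $\lambda$ by continuity and compactness over the finitely many depth-$k$ blocks, and an induction. The paper simply cites Shao--Sun's Theorem~4.4 for the lift; your $\varepsilon$-neighbourhood and mean-value argument is the content of that theorem.

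The step that does not go through as written is the bookkeeping in Step~1. Since $k$ is fixed, there is no room for a constant $c$ in front of $\alpha^k\mathcal{N}_{\le k}$. You also must not count leaves at depth $<k$ as free inputs. If they range over $[0,\lambda_0]$, the $\ell_1$-norm of the gradient in $\varphi$-coordinates can exceed $1$. For example, with $d=1.5$ and $k=10$ the star $K_{1,10}$ is admissible; at $\lambda=5<\lambda_c(1.5)$, with the ten leaf inputs set to $0.2$, the norm is about $2.7$.

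The correct accounting, which is what the paper does, is as follows.
\begin{itemize}
\item Shallow leaves are pinned at $\lambda$ and enter only through your $O(\delta)$ term; the only free inputs are the cutset $C_k$.
\item Composing the one-step decay $|F^{\varphi}(\mathbf{x})-F^{\varphi}(\mathbf{y})|^q\le\alpha\sum_i|x_i-y_i|^q$ (your $G$) down $k$ levels gives exactly $\alpha^k\sum_{u\in C_k}|x_u-y_u|^q$.
\item H\"older over the $|C_k|\le\mathcal{N}_k\le d^k$ inputs then gives $\norm{\nabla F^{\varphi}}_1\le(d\alpha)^{k/q}<1$, with no constant.
\end{itemize}
So $\mathcal{N}_{\le k}$ is not what makes the block contract. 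Its role is to bound degrees by $d^k$, which makes the family of blocks finite and gives the uniformity your Step~2 relies on. With only $\mathcal{N}_k$ bounded, every star $K_{1,n}$ would be admissible for $k\ge 3$, and stars have real zeros tending to $0^-$.

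The obstacle you flag at $\lambda=0$ is genuine, and the paper does not address it. Its real-contraction definition asks for $\varphi$ analytic with $\varphi'\neq 0$ on $\mathcal{J}=[0,\lambda]$, which fails at $0$. The obstacle disappears if you write the one-step map in potential coordinates as $y\mapsto\arcsinh\bigl(\sqrt{\lambda}\,\prod_i 1/\cosh y_i\bigr)$. This is analytic in $(\sqrt{\lambda},y)$ near real points, and $R=\sinh^2 y$ satisfies the hard-core recursion with $1+R=\cosh^2 y\neq 0$ near the real axis. Run your Step~2 in the variables $(\sqrt{\lambda},y)$; the only change is that the activity perturbation is $\sqrt{\delta}$ rather than $\delta$.

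Alternatively, use the Shearer disk for maximum degree at most $d^k$ near $0$. On $[\lambda_1,\lambda_0]$, real ratios stay in $[\lambda_1(1+\lambda_0)^{-d^k},\lambda_0]$, away from the singularity, so your argument applies there.
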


We prove this result following the framework of contraction implying zero-freeness.
Unlike previous results analyzing the contraction of the 1-layer recurrence function $R$, 
we consider the block contraction of $k$-layer recurrence functions.

\subsection{Depth-$k$ cutsets and $k$-layer block operators}
The hard-core recursion is a \emph{one-step} map from the occupation ratios of a vertex's children to the occupation ratio
of the vertex itself. In our arguments it is convenient to bundle \emph{$k$ successive steps} into a single operator
that maps the ``interface'' at depth $k$ to the root.
This viewpoint is especially natural for Weitz's SAW tree: some branches terminate early due to boundary conditions
(pinned leaves), while the vertices at depth exactly $k$ should be regarded as \emph{free} interface nodes---beyond
which the tree may continue (and will be handled by subsequent blocks).

\begin{definition}[Depth-$k$ cutset $C_k$]\label{def:cutset_Ck}
Let $T$ be a rooted tree with root $r$. For an integer $k\ge 0$, define the \emph{depth-$k$ cutset}
\[
C_k(T) \;:=\;\{v\in V(T): \mathrm{dist}_T(r,v)=k\}.
\]
Equivalently, $C_k(T)$ is the vertex set on level $k$ of $T$.
In particular, every (simple) path starting at $r$ and going to any vertex at distance $>k$ must intersect $C_k(T)$.
When the underlying tree is clear from context, we write $C_k$ for $C_k(T)$.
\end{definition}

\begin{definition}[$k$-layer block operator]\label{def:block_function}
Fix $\lambda \in \R$ and an integer $k\ge 1$. Let $T$ be a rooted tree with root $r$.
The \emph{$k$-layer block operator} is the map
\[
\mathcal{F}_{\lambda,T,k}:\ {\R}^{\,C_k(T)}\to {\R}
\]
defined by evaluating the hard-core ratio recursion on the tree depth-$k$
interface: given an input vector $\mathbf{R}\in{\R}^{\,C_k(T)}$, we assign ratios on the lefts of $T$ by
\[
R_u :=
\begin{cases}
(\mathbf{R})_u, & u\in C_k(T),\\
\lambda, & \text{otherwise},
\end{cases}
\]
and then propagate upward: for every internal vertex $v$ of $T$ with children $v_1,\dots,v_d$ (in $T$), set
\[
R_v \;:=\; \lambda\prod_{i=1}^{d}(1+R_{v_i})^{-1}.
\]
Finally define $\mathcal{F}_{\lambda,T,k}(\mathbf{R}):=R_r$.
We use $\mathcal{F}_{\lambda,T,k}^\varphi(\mathbf{x})$ to denote the composition
\[
\mathcal{F}_{\lambda,T,k}^\varphi(\mathbf{x})
=
\varphi \circ \mathcal{F}_{\lambda,T,k} \circ \varphi^{-1}(\mathbf{x})
=
\varphi\!\left(
\mathcal{F}_{\lambda,T,k}\bigl(\varphi^{-1}(x_1), \varphi^{-1}(x_2), \dots\bigr)
\right).
\]

The key point is that the only \emph{free} inputs are the vertices in $C_k(T)$: $\mathcal{F}_{\lambda,T,k}$ is an interface map
from the $k$th layer to the root.

\end{definition}

\subsection{Real block contraction}\label{subsec:block_contraction}

We work with a $k$-layer \emph{block} recursion (built on the SAW tree) rather than a single-step recurrence.
Accordingly, we introduce a contraction notion stated directly for our block operator, following the real-contraction
framework of Shao--Sun~\cite{shao2021contraction}.

\begin{definition}[Real block contraction]\label{real_contraction}
Fix $k\ge 1$ and $d>1$.
We say that $\lambda \in \R$ satisfies \emph{block real contraction} 
for $k$ and $d$ if there exists a compact real interval
$\mathcal{J}\subseteq\R$ with $\lambda\in\mathcal{J}$ and $-1 \notin \mathcal{J}$,
and a real-analytic function
$\varphi:\mathcal{J}\to\mathcal{I}$ with $\varphi'(x) \neq 0$
for all $x\in \mathcal{J}$, such that the following conditions hold.
Let $\mathcal{T}$ denote the set of trees $T$ with depth at most $k$
and connective constant $\mu_k(T)\le d$.
\begin{enumerate}[leftmargin=2em]
    \item $\mathcal{F}_{\lambda,T,k}\bigl(\mathcal{J}^{C_k(T)}\bigr) \subseteq \mathcal{J}$
    for all $T\in \mathcal{T}$.
    
    \item There exists $\eta>0$ such that
    \(
        \bigl\|\nabla \mathcal{F}^{\varphi}_{\lambda,T}(\mathbf x)\bigr\|_{1}
        \le 1-\eta
    \)
    for all $\mathbf x\in \mathcal{I}^{C_k(T)}$ and all $T\in \mathcal{T}$.
\end{enumerate}
\end{definition}

\begin{remark}
Definition~\ref{real_contraction} is a direct generalization of the \emph{real contraction} framework of
Shao--Sun~\cite{shao2021contraction}
to the particular block function $\mathcal{F}^{\varphi}_{T,k}$ we analyze in this paper.
In particular, once we verify the two points above for our block operator (uniformly over rooted trees that arise from the SAW tree
construction), the real-to-complex extension theorem of~\cite{shao2021contraction} applies.
\end{remark}

\begin{lemma}[Real block contraction]\label{lem:real-contraction}
    Fix $k\geq 1$ and $d>1$, $\lambda \in [0,\lambda_c(d))$ 
    satisfies real block contraction for $k$ and $d$.
\end{lemma}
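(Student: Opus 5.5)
We build on the Sinclair--Srivastava--Štefankovič--Yin potential-function analysis~\cite{sinclair2017spatial}. The potential is $\varphi(x)=2\arcsinh(\sqrt{x})$, with derivative $\Phi(x)=\varphi'(x)=\frac{1}{\sqrt{x(1+x)}}$. The invariant interval is $\mathcal{J}=[\lambda_{\min},\lambda]$, where $\lambda_{\min}$ is chosen so that the recursion maps $\mathcal{J}$ into itself. For $\lambda>0$ one can take $\lambda_{\min}=\lambda/(1+\lambda)^{D}$, where $D$ is the maximal branching. This is finite, because $\mu_k(T)\le d$ forces every vertex within depth $k$ to have at most $d^k$ children. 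The case $\lambda=0$ is trivial: take $\mathcal{J}=\{0\}$, or a small interval around $0$ with $\varphi=\mathrm{id}$.

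\textbf{Condition 1} (invariance) is immediate. Each $R_v$ equals $\lambda$ times a product of factors in $(0,1]$, and the number of factors is at most the degree bound, so $R_v\in[\lambda_{\min},\lambda]$. Since $\mathcal{J}\subset(0,\infty)$, we also have $-1\notin\mathcal{J}$.

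\textbf{Condition 2} (contraction). By the chain rule along the tree, $\partial x_r/\partial x_u$ for $u\in C_k$ is the product along the path $r=w_0,\dots,w_k=u$ of the one-step factors $\frac{\Phi(R_{w_{i}})}{\Phi(R_{w_{i+1}})}\cdot\frac{R_{w_i}}{1+R_{w_{i+1}}}$. Hence
\[
\bigl\|\nabla\mathcal{F}^\varphi\bigr\|_1=\sum_{u\in C_k}\prod_{i=0}^{k-1}\left|\frac{\partial \varphi(R_{w_i})}{\partial \varphi(R_{w_{i+1}})}\right|.
\]
The key estimate from~\cite{sinclair2017spatial} is the symmetrizability/concavity bound for this potential. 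For a single vertex with children values $y_j$, the one-step weights $\alpha_j$ satisfy $\sum_j\alpha_j^{q}\le \frac{1}{d_0}\,\xi^q$ with $\xi<1$, uniformly for $\lambda<\lambda_c(d_0)$. Here $q$ is the Hölder exponent with $1/q=1-\tfrac{1}{2}\cdot\frac{\lambda_c}{\ldots}$, chosen as in their Lemma for the ``$\alpha$-contracting with $\ell_q$ norm'' property.

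Iterating this layer by layer and applying Hölder's inequality on the path products gives
\[
\sum_{u\in C_k}\prod_i|\cdot| \le |C_k|^{1-1/q}\Bigl(\sum_u\prod_i|\cdot|^q\Bigr)^{1/q}\le |C_k|^{1-1/q}\bigl(d_0^{-k}\xi^{qk}\bigr)^{1/q}.
\]
We choose $d_0$ with $\lambda<\lambda_c(d_0)$ and $d_0>d$; this is possible because $\lambda_c$ is continuous and decreasing. The bound $|C_k|\le \mathcal{N}_{\le k}\le d^k$ then gives total gradient $\le (d/d_0)^{k(1-1/q)}\xi^{k}\cdot(\text{const})<1$.

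The main obstacle is that one cannot simply plug $d$ into a per-vertex bound, since the branching is not uniformly bounded by $d$. We only know that level-$k$ paths number at most $d^k$, with no control on intermediate levels. This is exactly why we work on the aggregated $\ell_q$ quantity and only use the global count $|C_k|\le d^k$ at the end. There is a further subtlety: the per-vertex lemma in~\cite{sinclair2017spatial} involves the actual degree, and controlling that requires its degree-free form, namely the bound $\sum_j \alpha_j^q\le \frac{1}{d_0}\xi^q$, which does not depend on the number of children. I would first verify that their statement holds in this form, with a constant (e.g.\ a slack factor of $\lambda_c(d_0)/\lambda$) independent of the branching.

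If the Hölder loss $(d/d_0)^{\cdots}$ with a constant prefactor does not fall below $1$ for $k$ fixed, I would replace $d_0$ by something closer, set $d_0=d$ exactly, and extract the strict contraction from the slack $\xi<1$. A uniform $\eta$ then follows by compactness of $\mathcal{J}^{C_k}$, continuity of the gradient, and finiteness up to isomorphism of the class $\mathcal{T}$ of depth-$\le k$ trees with bounded size.
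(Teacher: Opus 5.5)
Your route is the paper's route. You use a degree-free one-step estimate from \cite{sinclair2017spatial}, iterate it over the $k$ layers along root-to-cutset paths, and apply H\"older against $|C_k(T)|\le\mathcal N_{\le k}(T,r)\le d^k$. The gap is that the exponents in your Condition~2 argument do not fit together, so the displayed chain does not give a bound below $1$.

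From $\sum_j\alpha_j^q\le\xi^q/d_0$ and the loss $|C_k|^{1-1/q}$, you actually get $d^{k(1-1/q)}d_0^{-k/q}\xi^k=(d/d_0)^{k(1-1/q)}\,d_0^{k(1-2/q)}\,\xi^k$. This is not the $(d/d_0)^{k(1-1/q)}\xi^k$ you wrote. The extra factor $d_0^{k(1-2/q)}$ exceeds $1$ whenever $q>2$. Moreover, $q\ge 2$ is forced if $\sum_j\alpha_j^q$ is to be bounded independently of the number of children. To see this, take a vertex with $n$ children all at value $x_j=c/n$. Then $R_v\to\lambda e^{-c}$ and $\alpha_j=\sqrt{R_v/(1+R_v)}\sqrt{x_j/(1+x_j)}\asymp n^{-1/2}$, so $\sum_j\alpha_j^q\asymp n^{1-q/2}$. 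Your fallback $d_0=d$ gives $d^{k(1-2/q)}\xi^k$, which has the same defect. So as written, Condition~2 is not established.

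The fix is the normalization in Lemma~\ref{lem:one_step_decay}. What \cite{sinclair2017spatial} provides is the $\ell_q$-Lipschitz bound $|F^{\varphi}_{\lambda,d_v}(\mathbf x)-F^{\varphi}_{\lambda,d_v}(\mathbf y)|^q\le\alpha\sum_i|x_i-y_i|^q$, with $\alpha<1/d$ uniform in the number of children. Dually, at each vertex this reads $\sum_j\alpha_j^{a}\le\alpha^{a-1}$ with $1/a+1/q=1$. The power on $1/d_0=\alpha$ is $a-1$, not $1$, and the per-vertex sum is in the conjugate exponent $a$.

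Iterating gives $\|\nabla\mathcal F^{\varphi}_{\lambda,T,k}\|_a\le\alpha^{k/q}$. The paper does this through Lemma~\ref{lem:block_lipschitz}; your nested sum over paths works equally well. The matching H\"older loss is $|C_k|^{1-1/a}=|C_k|^{1/q}$, so $\|\nabla\mathcal F^{\varphi}_{\lambda,T,k}\|_1\le(d\alpha)^{k/q}<1$. This is exactly your final expression with your exponent read as $a$ and $d_0=1/\alpha$. With this correction you need no extra slack $\xi$ and no compactness argument, and $\eta=1-(d\alpha)^{k/q}$ is uniform over $\mathcal T$.

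On one point your proposal is more careful than the paper. You take $\mathcal J=[\lambda(1+\lambda)^{-D},\lambda]$ with $D\le d^k$ and treat $\lambda=0$ separately, which keeps $\mathcal J$ away from $0$. The paper's $\mathcal J=[0,\lambda]$ contains a point where $\arcsinh\sqrt{x}$ is not real-analytic and $\varphi'$ is unbounded. The definition of real block contraction, and the real-to-complex extension built on it, do not allow that.
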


To prove real block contraction, we utilize the tools developed to establish strong spatial mixing in \cite{sinclair2017spatial}. Applying \cite[Lem.~3.3]{sinclair2017spatial} to the hard-core model (see \cite[Sec.~4]{sinclair2017spatial}), we obtain the following one-step decay lemma
with potential function $\varphi(x)=\arcsinh(\sqrt{x})$.

\begin{lemma}
\label{lem:one_step_decay}
Fix $d>1$ and $\lambda \in [0,\lambda_c(d))$.
There exist positive constants $a,q$ with $\frac{1}{a}+\frac{1}{q}=1$ and
a constant $\alpha=\alpha(\lambda)<1/d$ such that for any two vectors
$\mathbf{x},\mathbf{y}\in \varphi([0,\lambda])^{d}$,
\[
    \bigl|F_{\lambda,d}^{\varphi}(\mathbf{x}) - F_{\lambda,d}^{\varphi}(\mathbf{y})\bigr|^{q}
    \;\le\;
    \alpha \sum_{i=1}^{d} |x_i-y_i|^{q}.
\]
\end{lemma}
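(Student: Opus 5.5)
The plan is to derive the inequality from a pointwise gradient bound, via the mean value theorem and Hölder's inequality, and then to import the key optimization from \cite[Lem.~3.3, Sec.~4]{sinclair2017spatial}.

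\textbf{Reduction to a gradient bound.} Since $\varphi(x)=\arcsinh(\sqrt{x})$ is increasing on $[0,\infty)$, the set $\varphi([0,\lambda])^{d}$ is a box, hence convex. For $\mathbf{x},\mathbf{y}$ in this box, apply the mean value theorem to $t\mapsto F^{\varphi}_{\lambda,d}(\mathbf{y}+t(\mathbf{x}-\mathbf{y}))$. This gives a point $\mathbf{z}$ on the segment with
\[
\bigl|F^{\varphi}_{\lambda,d}(\mathbf{x})-F^{\varphi}_{\lambda,d}(\mathbf{y})\bigr|\le \sum_{i=1}^{d}\bigl|\partial_i F^{\varphi}_{\lambda,d}(\mathbf{z})\bigr|\,|x_i-y_i| .
\]
(The mean value step should be made at interior points, where the derivatives are finite; the boundary points $r_i=0$ can be handled by continuity.)

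\textbf{Computing the partial derivatives.} Write $r_i=\varphi^{-1}(z_i)$ and $F=F_{\lambda,d}(\mathbf{r})$. The chain rule gives
\[
\partial_iF^\varphi=\varphi'(F)\,\partial_iF\,/\varphi'(r_i).
\]
Substituting $\varphi'(x)=\tfrac{1}{2\sqrt{x(1+x)}}$ and $\partial_iF=-F/(1+r_i)$, the factors of $2$ cancel and the product collapses to
\[
c_i:=\bigl|\partial_i F^{\varphi}_{\lambda,d}(\mathbf{z})\bigr|=\sqrt{\tfrac{F}{1+F}}\sqrt{\tfrac{r_i}{1+r_i}} .
\]
This is exactly the reason for this choice of $\varphi$: the message $r_i$ and the output $F$ enter symmetrically.

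\textbf{Applying Hölder.} With exponents $a,q$ satisfying $\frac1a+\frac1q=1$,
\[
\sum_i c_i|x_i-y_i|\le\Bigl(\sum_i c_i^{a}\Bigr)^{1/a}\Bigl(\sum_i|x_i-y_i|^q\Bigr)^{1/q}.
\]
It therefore suffices to show that
\[
\alpha:=\sup_{\mathbf{r}\in[0,\lambda]^d}\Bigl(\bigl(\tfrac{F}{1+F}\bigr)^{a/2}\sum_{i=1}^d\bigl(\tfrac{r_i}{1+r_i}\bigr)^{a/2}\Bigr)^{q/a}
\]
is strictly less than $1/d$. The supremum is over a compact set of a continuous function, so it is attained, and strict inequality at the maximizer gives the uniform constant.

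\textbf{Main obstacle: the optimization.} I would follow \cite{sinclair2017spatial} and choose $a$ in terms of $\lambda_c(d)$. Their choice is $a$ with $\frac{1}{a}=1-\frac{\Delta_c}{2}$, where $\Delta_c$ is determined by the fixed point at criticality. The proof of \cite[Lem.~3.3]{sinclair2017spatial} then proceeds in two steps.
\begin{enumerate}[leftmargin=2em]
\item \emph{Symmetrization.} Fix the product $\prod(1+r_i)$, which fixes $F$. By a concavity argument in the variables $\log(1+r_i)$, the sum $\sum_i (r_i/(1+r_i))^{a/2}$ is then maximized when all $r_i$ are equal, $r_i=r$.
\item \emph{One-variable bound.} After symmetrizing, the quantity reduces to the one-variable function
\[
d^{q/a}\Bigl(\tfrac{F}{1+F}\cdot\tfrac{r}{1+r}\Bigr)^{q/2},\qquad F=\lambda(1+r)^{-d}.
\]
One checks that this is $<1/d$ precisely when $\lambda<\lambda_c(d)$. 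This is the real-$d$ version of the uniqueness condition $d\sqrt{\tfrac{F}{1+F}\tfrac{r}{1+r}}<1$ at the fixed point.
\end{enumerate}
I expect this analytic optimization to be the only delicate step. In particular one has to verify that it is legitimate for non-integer $d$, and that $\lambda$ bounded away from $\lambda_c(d)$ gives a strict uniform gap. Both are supplied by \cite[Sec.~4]{sinclair2017spatial}, and I would cite them rather than rederive them.
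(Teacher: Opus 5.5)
Your first half (mean value theorem, the computation $|\partial_i F^{\varphi}_{\lambda,d}|=\sqrt{\tfrac{F}{1+F}}\sqrt{\tfrac{r_i}{1+r_i}}$, Hölder) is correct. The paper's own proof is only the citation to \cite[Lem.~3.3, Sec.~4]{sinclair2017spatial}. However, your reconstruction of what that citation supplies has the wrong shape, and this is a genuine gap.

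In the paper, $d>1$ is a real bound on the connective constant, not an arity. Lemma~\ref{lem:block_lipschitz} applies the one-step bound at a vertex with $d_v$ children for arbitrary $d_v$; a star with $m\le d^k$ leaves already has $\mu_k\le d$. The real-contraction proof then multiplies by $|C_k(T)|\le d^k$. What is needed, and what \cite{sinclair2017spatial} prove, is a single $\alpha<1/d$ valid for $F_{\lambda,D}$ with \emph{every} $D\ge 1$.

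You fix the arity at $d$, and your own computation shows the symptom. The condition $d^{q/a}\bigl(\tfrac{F}{1+F}\tfrac{r}{1+r}\bigr)^{q/2}<1/d$ is equivalent to $d\sqrt{\tfrac{F}{1+F}\tfrac{r}{1+r}}<1$, from which $a$ has disappeared. So for integer $d$ you obtain only the Weitz-type statement at arity exactly $d$, which says nothing about vertices with more than $d$ children. For non-integer $d$ the object $[0,\lambda]^d$ does not exist. This is not a technicality to check afterwards; it is the content of the lemma.

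The missing step is a joint optimization over $r$ and a real arity $D$. You must bound
$\sup_{D,r} D^{q/a}\bigl(\tfrac{r}{1+r}\tfrac{F}{1+F}\bigr)^{q/2}$ with $F=\lambda(1+r)^{-D}$. Stationarity in $r$ and $D$ gives $Dr=1+F$ and $\tfrac{\log(1+r)}{r}=\tfrac{2}{a}$. For $0<\lambda<\lambda_c(d)$, write $\lambda=\lambda_c(\Delta_c)$ with $\Delta_c>d$, and choose $\tfrac{2}{a}=(\Delta_c-1)\log\bigl(1+\tfrac{1}{\Delta_c-1}\bigr)$. Then the stationary point is $(D,r)=(\Delta_c,\tfrac{1}{\Delta_c-1})$, where $F=r$ and $D^{q/a}\bigl(\tfrac{F}{1+F}\tfrac{r}{1+r}\bigr)^{q/2}=\Delta_c^{-1}<1/d$. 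The analysis in \cite[Sec.~4]{sinclair2017spatial} shows this value controls the supremum.

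This choice forces $a>2$; your formula $\tfrac1a=1-\tfrac{\Delta_c}{2}$ is even negative once $\Delta_c>2$. For $a>2$ your symmetrization claim fails. The map $t\mapsto(1-e^{-t})^{a/2}$ is convex near $0$, so when $\sum_i\log(1+r_i)$ is fixed and small, $\sum_i(\tfrac{r_i}{1+r_i})^{a/2}$ is maximized by concentrating on one coordinate, not by equalizing. The correct statement is that at a maximizer all \emph{nonzero} $r_i$ are equal. This lowers the arity from $D$ to some $k\le D$, which is harmless only because the bound is uniform over arities. That is another reason the argument cannot be run at the fixed arity $d$.
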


We now amplify one-step decay into a $k$-step Lipschitz bound for the block operator.

\begin{lemma}
\label{lem:block_lipschitz}
Let $\varphi(x) = \arcsinh (\sqrt{x})$ and let $\alpha(\lambda)$ be as above.
Fix any rooted tree $T$ together with a leaf pinning as in Definition~\ref{def:block_function}. Then for any depth
$k \ge 1$ and any $\mathbf{x},\mathbf{y}\in \varphi([0,\lambda])^{C_k(T)}$,
\begin{equation}
|\mathcal{F}^{\varphi}_{\lambda, T,k}(\mathbf{x}) - \mathcal{F}^{\varphi}_{\lambda, T,k}(\mathbf{y})|^q
\;\le\;
\alpha(\lambda)^k \sum_{u \in C_k(T)} |x_u - y_u|^q.
\end{equation}
\end{lemma}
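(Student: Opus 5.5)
We will prove the bound by induction on $k$, peeling off the root layer and applying Lemma~\ref{lem:one_step_decay} once per layer. Throughout we work in the $\varphi$-coordinates. For every vertex $v$ at depth $j\le k$, write $x_v$ (respectively $y_v$) for $\varphi(R_v)$, where $R_v$ is obtained by propagating the interface values $\varphi^{-1}(\mathbf{x})$ (respectively $\varphi^{-1}(\mathbf{y})$) upward as in Definition~\ref{def:block_function}.

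First we check that the intermediate values stay in the domain required by Lemma~\ref{lem:one_step_decay}, i.e.\ that every $R_v$ lies in $[0,\lambda]$. The interface inputs lie in $[0,\lambda]$ by hypothesis, and early-terminating leaves carry the value $\lambda$. For $\lambda\ge 0$ the map $F_{\lambda,d}$ sends $[0,\infty)^d$ into $[0,\lambda]$. Hence by upward induction all $x_v,y_v$ lie in $\varphi([0,\lambda])$.

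Next we establish the one-layer estimate. For an internal vertex $v$ with children $v_1,\dots,v_{d_v}$, Lemma~\ref{lem:one_step_decay} gives
\[
|x_v-y_v|^q\le \alpha\sum_{i}|x_{v_i}-y_{v_i}|^q .
\]
One caveat: the lemma is stated for arity $d$, whereas $d_v$ is an arbitrary integer. We interpret it, as in \cite[Lem.~3.3]{sinclair2017spatial}, as holding for $F_{\lambda,d_v}$ with the same $\alpha$ for every arity. Those results give exactly this form, with $\alpha$ uniform in the degree because the bound is on each child's contribution. If the arity uniformity is not available as stated, this is the point to revisit.

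Children that are non-interface leaves at depth $<k$ have the value $\lambda$ under both inputs. Their terms therefore vanish, and they can simply be dropped from the sum.

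For the induction, define for $0\le j\le k$
\[
S_j:=\sum_{v\in C_j(T)}|x_v-y_v|^q .
\]
Summing the one-layer estimate over all $v\in C_j$ that are internal gives $S_j\le\alpha S_{j+1}$. Vertices of $C_j$ that are leaves contribute $0$ to $S_j$, and every vertex of $C_{j+1}$ is a child of exactly one vertex of $C_j$, so no term is counted twice. Iterating from $j=0$ to $j=k-1$ yields
\[
|\mathcal{F}^\varphi(\mathbf{x})-\mathcal{F}^\varphi(\mathbf{y})|^q=S_0\le\alpha^kS_k,
\]
which is the claim.

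The main obstacle is the uniformity issue flagged above: Lemma~\ref{lem:one_step_decay} must be applicable at vertices of arbitrary arity with a single constant $\alpha(\lambda)$. The remaining steps are bookkeeping: the domain invariance and the observation that pinned leaves and early leaves contribute zero difference.
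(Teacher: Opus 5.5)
Your proof is correct and is essentially the paper's argument. The paper inducts on $k$ by applying Lemma~\ref{lem:one_step_decay} at the root and the inductive hypothesis on each child subtree, using that $C_k(T)$ is the disjoint union of the $C_{k-1}(T_{u_i})$; your layer-by-layer inequality $S_j\le\alpha S_{j+1}$ is just the unrolled form of that recursion, and your explicit check that all intermediate ratios stay in $[0,\lambda]$ and your remark that the one-step bound must hold with a single $\alpha$ for every arity $d_v$ are both things the paper's proof uses implicitly (the latter being the degree-independent form in which the decay bound of \cite{sinclair2017spatial} is available).
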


\begin{proof}
We proceed by induction on $k$.

\noindent\textbf{Base case ($k=1$).}
This is Lemma~\ref{lem:one_step_decay}.

\noindent\textbf{Inductive step.}
Assume the bound holds for $k-1$. Let $v$ be the root of $T$, with children $u_1,\dots,u_{d_v}$.
Applying Lemma~\ref{lem:one_step_decay} at $v$ yields
\[
\bigl|\mathcal{F}^{\varphi}_{\lambda, T,k}(\mathbf{x}) - \mathcal{F}^{\varphi}_{\lambda, T,k}(\mathbf{y})\bigr|^q
\;\le\;
\alpha(\lambda)\sum_{i=1}^{d_v}\Bigl|
\mathcal{F}^{\varphi}_{\lambda, T_{u_i},k-1}(\mathbf{x}_{u_i}) -
\mathcal{F}^{\varphi}_{\lambda, T_{u_i},k-1}(\mathbf{y}_{u_i})
\Bigr|^q,
\]
where $\mathbf{x}_{u_i}$ and $\mathbf{y}_{u_i}$ are the restrictions to the depth-$(k-1)$ cutset of the subtree rooted at $u_i$.
By the inductive hypothesis, each term is at most
$\alpha(\lambda)^{k-1}\sum_{w\in C_{k-1}(T_{u_i})}$ $|x_w-y_w|^q$.
Summing over children gives
\[
\bigl|\mathcal{F}^{\varphi}_{\lambda, T,k}(\mathbf{x}) - \mathcal{F}^{\varphi}_{\lambda, T,k}(\mathbf{y})\bigr|^q
\;\le\;
\alpha(\lambda)^k \sum_{z\in C_k(T)} |x_z-y_z|^q,
\]
since $C_k(T)$ decomposes as the disjoint union of the sets $C_{k-1}(T_{u_i})$ over the children $u_i$.
\end{proof}

We are ready to prove the real block contraction.

\begin{proof}[Proof of \Cref{lem:real-contraction}]
    Fix $\lambda \in [0,\lambda_c(d))$. Let $\mathcal{J}=[0,\lambda]$ and $\varphi(x) = \arcsin(\sqrt{x})$,
We verify the two conditions in Definition~\ref{real_contraction}.

    Recall the one-step recursion formula, $R_v = \lambda \prod_{i=1}^{d_v} (1+R_{v_i})^{-1} \in [0,\lambda]$ holds trivially for any vertex $v$ whose children $v_1,\dots,v_{d_v}$ have ratios in $[0,\lambda]$, thus holds for the root by 
    backward induction from the leaves. Thus 
 $\mathcal{F}_{\lambda,T,k}\bigl(\mathcal{J}^{C_k(T)}\bigr) \subseteq \mathcal{J}$
 holds.

Let $m:=|C_k(T)| \leq d^k$. By \Cref{lem:block_lipschitz} and 
Hölder's inequality, the Lipschitz constant of $q$-norm
implies the dual gradient bound
\[
\sup_{\mathbf{x}\in \mathcal{I}^{C_k(T)}}\ \norm{\nabla \mathcal{F}^{\varphi}_{\lambda,T,k}(\mathbf{x})}_a
\;\le\;
\alpha(\lambda)^{k/q}.
\]
For any vector $g\in\mathbb{R}^{m}$, the norm comparison inequality gives
$\|g\|_1 \le m^{1-1/a}\|g\|_a = m^{1/q}\|g\|_a$.
Hence
\[
\sup_{\mathbf{x}\in \mathcal{I}^{C_k(T)}}\ \norm{\nabla \mathcal{F}^{\varphi}_{\lambda,T,k}(\mathbf{x})}_1
\;\le\;
m^{1/q}\alpha(\lambda)^{k/q}
\;\le\;
(d^k)^{1/q}\alpha(\lambda)^{k/q}
=
\bigl(d\,\alpha(\lambda)\bigr)^{k/q}.
\]
Since $\alpha(\lambda)<1/d$, we have $d\,\alpha(\lambda)<1$.
Take
\[
\eta := 1-\bigl(d\,\alpha(\lambda)\bigr)^{k/q}>0.\qedhere
\]
\end{proof}

\subsection{Extension to complex block contraction}

\begin{definition}[Complex block contraction]
\label{def:complex_contraction}
Fix $k\ge 1$ and $d>1$.
We say that $\lambda \in \C$ satisfies \emph{complex block contraction}
for $k$ and $d$ if there exists a closed and bounded complex region
$\mathcal{Q}\subseteq \C$ with $\lambda\in\mathcal{Q}$ and $-1 \notin \mathcal{Q}$,
and an analytic function $\varphi:\mathcal{Q}\to\mathcal{P}$ such that
$\varphi^{-1}$ is also analytic and $\mathcal{P}$ is convex, such that the
following conditions hold.
Let $\mathcal{T}$ denote the set of trees $T$ with depth at most $k$
and connective constant $\mu_k(T)\le d$.
\begin{enumerate}[leftmargin=2em]
    \item $\mathcal{F}_{\lambda,T,k}(\mathcal{Q}^{C_k(T)}) \subseteq \mathcal{Q}$
    for all $T\in \mathcal{T}$.
    
    \item There exists $\eta>0$ such that
    $\bigl\|\nabla \mathcal{F}^{\varphi}_{\lambda,T}(\mathbf z)\bigr\|_{1}
    \le 1-\eta$
    for all $\mathbf z\in \mathcal{P}^{C_k(T)}$ and all $T\in \mathcal{T}$.
\end{enumerate}
\end{definition}

Exactly as in \cite{shao2021contraction}, we may now invoke a real-to-complex extension theorem to transfer
real block contraction to complex block contraction.

\begin{lemma}[{\cite[Theorem~4.4]{shao2021contraction}}]
\label{lem:real_to_complex_extension}
If $\lambda_0\in\R$ satisfies block real contraction for $d$ and $k$, then there exists $\delta>0$ such that for
every $\lambda\in\C$ with $|\lambda-\lambda_0|<\delta$,     $\lambda$ satisfies complex block contraction for $d$ and $k$.
\end{lemma}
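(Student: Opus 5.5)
This statement is cited from \cite[Theorem~4.4]{shao2021contraction}. The plan is to re-run that real-to-complex argument for the block operator $\mathcal{F}_{\lambda,T,k}$ and check that every estimate is uniform over the family $\mathcal{T}$ of depth-$\le k$ trees with $\mu_k(T)\le d$.

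\textbf{Step 1: uniformity and finite dimension.} For $T\in\mathcal{T}$ the number of vertices is at most $\mathcal{N}_{\le k}(T,r)+1\le d^k+1$. Hence $\mathcal{T}$ contains only finitely many isomorphism types, and each $\mathcal{F}_{\lambda,T,k}$ is a rational function in $\lambda$ and in at most $d^k$ interface variables. So it suffices to prove the statement for each fixed $T$ and take the minimum of the finitely many resulting $\delta$'s. Moreover, $\mathcal{F}_{\lambda,T,k}$ is a composition of at most $k$ layers of the maps $F_{\lambda,d_v}$. These maps are holomorphic wherever no intermediate value equals $-1$, and on $\mathcal{J}=[0,\lambda_0]$ all intermediate values stay in $[0,\lambda_0]$.

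\textbf{Step 2: building $\mathcal{Q}$ and $\mathcal{P}$.} The potential $\varphi(x)=\arcsinh(\sqrt{x})$ is singular at $0$, so I would first move to a slightly larger real interval $\mathcal{J}'=[-\epsilon',\lambda_0+\epsilon']$, or alternatively use a regularized potential. The real contraction conditions 1 and 2 still hold there with $\eta/2$, because the gradient bound is continuous in $\lambda$ and $\mathbf{x}$ and only finitely many $T$ are involved. Whether $\varphi$ stays analytic near $0$ is the delicate point. In Shao--Sun this is handled by noting that $\mathcal{F}^\varphi$ is still analytic in the $\varphi$-coordinates, since $\sqrt{\cdot}$ only appears composed with even-type expressions. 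I would verify directly that $\varphi\circ F\circ\varphi^{-1}$ extends holomorphically to a neighborhood of $\mathcal{I}^{C_k}$.

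Next, let $\mathcal{P}=\{z:\mathrm{dist}(z,\mathcal{I})\le\epsilon\}$, which is convex, and set $\mathcal{Q}=\varphi^{-1}(\mathcal{P})$. For small $\epsilon$ and $|\lambda-\lambda_0|<\delta$, the map $\mathcal{F}^\varphi_{\lambda,T,k}$ is analytic on a neighborhood of $\mathcal{P}^{C_k}$. By uniform continuity of the gradient on compacta, condition 2 then holds with $1-\eta/2$ on $\mathcal{P}^{C_k}$. Also $-1\notin\mathcal{Q}$ for small $\epsilon$.

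\textbf{Step 3: invariance $\mathcal{F}(\mathcal{Q}^{C_k})\subseteq\mathcal{Q}$, the main obstacle.} Pick $\mathbf{z}\in\mathcal{P}^{C_k}$ and let $\mathbf{x}\in\mathcal{I}^{C_k}$ be its coordinatewise nearest real point, so that $\|\mathbf{z}-\mathbf{x}\|_\infty\le\epsilon$. The difference $G_\lambda(\mathbf{z})-G_{\lambda_0}(\mathbf{x})$, where $G_\lambda=\mathcal{F}^\varphi_{\lambda,T,k}$, splits into two parts:
\begin{itemize}
\item $G_{\lambda_0}(\mathbf{z})-G_{\lambda_0}(\mathbf{x})$. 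Integrating the gradient along the segment, which lies in the convex set $\mathcal{P}^{C_k}$, bounds this by $(1-\eta/2)\epsilon$ in modulus.
\item $G_\lambda(\mathbf{z})-G_{\lambda_0}(\mathbf{z})$. This is bounded by $C\delta$ with $C$ uniform, by analyticity in $\lambda$.
\end{itemize}
Since $G_{\lambda_0}(\mathbf{x})\in\mathcal{I}$ by real invariance, the image lies within $(1-\eta/2)\epsilon+C\delta\le\epsilon$ of $\mathcal{I}$ once $\delta\le\eta\epsilon/(2C)$. This gives invariance in $\varphi$-coordinates, which is equivalent to invariance in $\mathcal{Q}$.

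The crux is the ordering of constants: first fix $\epsilon$ from analyticity and the gradient margin, then choose $\delta$. It also requires that the $\lambda$-perturbation be controlled uniformly, which Step 1's finiteness guarantees.
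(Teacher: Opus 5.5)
Your proposal is correct and follows essentially the paper's route. The paper simply invokes Shao--Sun's Theorem~4.4 and remarks that it transfers verbatim because the depth-$\le k$ trees with $\mu_k(T)\le d$ are finite in number, which is your Step~1 ($|V(T)|\le d^k+1$); your Steps~2--3 re-run the Shao--Sun $\epsilon$-neighbourhood and mean-value invariance argument, fixing $\epsilon$ before $\delta$. One caveat: the $\arcsinh\sqrt{x}$ digression is not needed for the lemma as stated, whose hypothesis already gives a real-analytic $\varphi$ with $\varphi'\neq 0$ on $\mathcal{J}$ and hence a biholomorphic extension near $\mathcal{J}$; the singularity at $0$ is a problem with the paper's verification of real block contraction, not with this lemma, and enlarging to $[-\epsilon',\lambda_0+\epsilon']$ would not fix it since $\sqrt{x}$ is not real for $x<0$.
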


\begin{remark}
    Even though Theorem~4.4 in \cite{shao2021contraction} is stated for one-step recursions, the same proof applies verbatim to our $k$-layer block operator $\mathcal{F}^{\varphi}_{\lambda,T,k}$ due to the trees with depth at most $k$ and
    $\mu_k(T)\le d$ are finite in number.
\end{remark}

Now, we are ready to prove \Cref{thm:zero-free}.

    

\begin{proof}
By \Cref{lem:real-contraction,lem:real_to_complex_extension}, there exists a complex region
$\mathcal{U}$ containing $[0,\lambda_c(d))$ such that every $\lambda\in\mathcal{U}$
satisfies complex block contraction for $k$ and $d$.

Let $G=(V,E)\in\mathcal{G}$. We prove that $Z_G(\lambda)\neq 0$ by induction on $|V|$.

\smallskip
\noindent\textbf{Base case.}
If $|V|=1$, then for the unique vertex $v$ we have $R_{G,v}(\lambda)=\lambda\in\mathcal{Q}$.

\smallskip
\noindent\textbf{Inductive step.}
We show that $R_{G,v}(\lambda)\in\mathcal{Q}$ for every $G\in\mathcal{G}$ and every vertex $v\in V$.
Since $-1\notin\mathcal{Q}$, this implies $R_{G,v}(\lambda)\neq -1$ and hence $Z_G(\lambda)\neq 0$.

Fix an integer $t\ge 2$ and assume that $Z_H(\lambda)\neq 0$ for all graphs $H\in\mathcal{G}$ with $|V(H)|\le t-1$.
Let $G=(V,E)\in\mathcal{G}$ with $|V|=t$ and fix $v\in V$.
Recall that $R_{G,v}(\lambda)$ can be expressed as the ratio of partition functions with $v$ occupied and unoccupied,
namely $R_{G,v}(\lambda)=Z^{\mathrm{occ}}_{G,v}(\lambda)/Z^{\mathrm{unocc}}_{G,v}(\lambda)$.
For the hard-core model, $Z^{\mathrm{unocc}}_{G,v}(\lambda)=Z_{G-v}(\lambda)$.
By the induction hypothesis, $Z_{G-v}(\lambda)\neq 0$, so $R_{G,v}(\lambda)$ is well-defined.

Moreover, $R_{G,v}(\lambda)$ can be computed on the self-avoiding-walk tree $T=T_{\mathrm{SAW}}(G,v)$, i.e.,
$R_{G,v}(\lambda)=R_{T,v}(\lambda)$.
Let $C_k(T)=\{v_1,v_2,\dots,v_s\}$ be the $k$-depth cutset, and let $T_{v_i}$ denote the subtree of $T$ rooted at $v_i$.
By the definition of the SAW tree, we have $\mu_k(T)\le d$ whenever $\mu_k(G)\le d$. We truncate $T$ at depth $k$, and denote the resulting tree by $T^{[k]}$.
Then
\[
R_{T,v}(\lambda)
= \mathcal{F}_{\lambda,T^{[k]},k}\bigl(R_{T_{v_1},v_1}(\lambda),R_{T_{v_2},v_2}(\lambda),\dots,R_{T_{v_s},v_s}(\lambda)\bigr).
\]
Each subtree $T_{v_i}$ has strictly fewer vertices than $T$, and hence corresponds to an instance with fewer than $t$ vertices;
therefore, by the induction hypothesis we have $R_{T_{v_i},v_i}(\lambda)\in\mathcal{Q}$ for all $i=1,\dots,s$.
By the complex block contraction, it follows that $R_{T,v}(\lambda)\in\mathcal{Q}$.
Consequently $R_{G,v}(\lambda)\in\mathcal{Q}$ as well, completing the induction and proving $Z_G(\lambda)\neq 0$.
\end{proof}

\appendix

\section{From finite sub-graphs to infinite graphs: normalized free energy}
\label{sec:infinite_lattice_free_energy}

In this section, we lift our finite-volume zero-freeness results to the existence, analyticity, and uniqueness of the normalized free energy in the infinite-volume limit—one of the central thermodynamic quantities. Our arguments follow the framework of \cite{scott2005repulsive}. In addition, we introduce Assumption $\mathcal{A}$, which allows this finite-to-infinite upgrade to apply to a broader class of infinite graphs. Moreover, by exploiting the special structure of graphs satisfying Assumption $\mathcal{A}$, we can sharpen the result by pushing the unique analytic region up to the threshold governed by the connective constant of the infinite graph.

\subsection{Normalized free energy on infinite graphs}
Let $G=(V,E)$ be a (typically infinite) locally finite graph, and let $H\subseteq G$ be a finite induced sub-graph.
For the hard-core model with complex activity $z\in\mathbb{C}$, write $Z_H(z)$ for the partition function on $H$.
We study thermodynamic behavior via the \emph{normalized free energy}
\begin{equation}
  f_H(z)\;:=\;\frac{1}{|V(H)|}\log Z_H(z),
\end{equation}
where $\log$ denotes a branch of the logarithm chosen on a domain where $Z_H$ has no zeros. For infinite graph, free energy is defined along an exhaustion sequence $H_n\uparrow G$ (e.g.\ a F\o lner/van~Hove sequence) by
\begin{equation}
  f_G(z)\;:=\;\lim_{n\to\infty}\frac{1}{|V(H_n)|}\log Z_{H_n}(z),
\end{equation}

Which implies the unique thermodynamic limit exists. 

\begin{theorem}
    Let $G$ be a local finite infinite graph, which holds the Assumption $\mathcal{A}$(we will introduce later). Fix any $\lambda<\lambda_c(\sigma(G))$ , there exist a $\delta>0$(depend on $\lambda$ and $G$), then the normalized free-energy is analytic and unique on $\mathcal{L}_{\delta}([0, \lambda])$.
    \label{thm:infinite_threshold_strong}
\end{theorem}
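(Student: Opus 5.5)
The plan is to run the same two-step pipeline as in the proof of Theorem~\ref{thm:infinite_main}, with Assumption~$\mathcal{A}$ taking the place of the lattice structure. \emph{Step one} is a zero-freeness statement for all finite induced subgraphs $H\in\mathcal{H}_G$ on a uniform strip $\mathcal{L}_\delta([0,\lambda])$. \emph{Step two} passes from finite volume to the thermodynamic limit along an exhaustion $H_n\uparrow G$.

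For step one, fix $\lambda<\lambda_c(\sigma(G))$. Since $\lambda_c(\cdot)$ is continuous and decreasing, I pick $d>\sigma(G)$ with $\lambda<\lambda_c(d)$. I expect Assumption~$\mathcal{A}$ to be precisely what yields the first half of Lemma~\ref{lem:muinf_le_sigma}, namely $\mu_{\inf}(\mathcal{H}_G)\le\sigma(G)$. Concretely, it should give a \emph{uniform} growth bound: for every $\varepsilon>0$ there is $K$ with $\mathcal{N}_{\le k}(G,v)^{1/k}\le \sigma(G)+\varepsilon$ for all $k\ge K$ and all $v$. The $\limsup$ in Definition~\ref{pro:cc} only controls the growth rate and not the constant in front, so this uniformity across vertices is where the hypothesis is really used. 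Granting it, I choose $k$ with $\mu_k(G)\le d$.

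Since self-avoiding walks in a subgraph are self-avoiding walks in $G$, we get $\mu_k(H)\le\mu_k(G)\le d$ for every $H\in\mathcal{H}_G$. Theorem~\ref{thm:zero-free} applied with this $k$ and $d$ then yields a region $\mathcal{U}\supseteq[0,\lambda_c(d))$ on which every $Z_H$ is nonzero. The interval $[0,\lambda]$ is compact and $\mathcal{U}$ contains a neighborhood of it (from Lemma~\ref{lem:real_to_complex_extension}, applied pointwise and combined with compactness), so some $\mathcal{L}_\delta([0,\lambda])\subseteq\mathcal{U}$.

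For step two, I follow \cite{scott2005repulsive}. On the simply connected domain $D=\mathcal{L}_\delta([0,\lambda])$ I define $f_{H_n}=\frac{1}{|V(H_n)|}\log Z_{H_n}$, taking the branch that is real on $[0,\lambda]$.
\begin{itemize}
\item I would first get a uniform bound on $|f_{H_n}|$ over compact subsets of $D$. Writing $\log Z_H$ as a telescoping sum $\sum_i\log(1+R_{H_i,v_i})$ along a vertex ordering, each term is bounded because $R\in\mathcal{Q}$, a compact set avoiding $-1$. This gives $|f_{H_n}|\le C$.
\item Montel's theorem then makes $\{f_{H_n}\}$ a normal family.
\item On the real segment $[0,\lambda]$ the limit exists along Følner/van~Hove sequences; I expect Assumption~$\mathcal{A}$ to supply the amenability and boundary-control needed for the standard subadditivity argument. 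Alternatively, the limit exists on a small disk around $0$ by the cluster expansion.
\item Vitali's theorem upgrades convergence on a set with an accumulation point to locally uniform convergence on $D$ to an analytic limit.
\end{itemize}
Uniqueness, meaning independence of the exhaustion, follows because any two limits agree near $0$ and hence everywhere by the identity theorem.

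The main obstacle I expect is the first step: extracting from Assumption~$\mathcal{A}$ a uniform $k$ with $\sup_v\mathcal{N}_{\le k}(G,v)^{1/k}\le d$. For non-transitive graphs this is exactly where the pointwise $\limsup$ could fail. My first attempt would be to use a quasi-transitivity or finite-type clause in $\mathcal{A}$, so that only finitely many vertex types need to be controlled, and then apply the submultiplicativity argument of the proposition preceding Lemma~\ref{lem:muinf_le_sigma} to each type.
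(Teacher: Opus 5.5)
Your proposal is correct and follows the paper's pipeline. You get uniform zero-freeness on all $H\in\mathcal{H}_G$ from Theorem~\ref{thm:zero-free} plus SAW monotonicity, the real-axis limit from Assumption~$\mathcal{A}$ (Theorem~\ref{thm:thermodynamic_limit_periodic_graph}), and then Vitali. Two sub-steps differ, and the first is an improvement on the paper.

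You pick $d>\sigma(G)$ with $\lambda<\lambda_c(d)$ by continuity, and then $k$ with $\mu_k(G)\le d$. The paper instead closes by asserting a $k$ with $\mu_k(\mathcal{H}_G)\le\sigma(G)$, and such a $k$ never exists:
\begin{itemize}
\item radius-$k$ balls are finite subgraphs, so $\mu_k(\mathcal{H}_G)=\mu_k(G)$;
\item $\sup_v\mathcal{N}_k(G,v)$ is submultiplicative, so $\sigma_k(G)\ge\sigma(G)$;
\item counting walks of length $1$ gives $\mu_k(G)^k\ge\sigma_k(G)^k+1$ for $k\ge 2$, and $\mu_1(G)$ is the maximum degree, which exceeds $\sigma(G)$.
\end{itemize}
So your continuity argument is what actually closes the proof. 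You also correctly take the strip from the single region $\mathcal{U}$ of Theorem~\ref{thm:zero-free} plus compactness, rather than by intersecting regions over all $H$ as in Lemma~\ref{lem:uniform_zerofree_all_finite_subgraphs}.

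The obstacle you flag is milder than you fear. $\sigma_k(G)$ already takes a supremum over $v$, so the $\limsup$ gives uniform control for all lengths $\ge K$. Only the finitely many lengths below $K$ need separate uniform control, and bounded degree supplies it (quasi-transitivity plus local finiteness gives this, as you suggest). In the paper, Assumption~$\mathcal{A}$ is really consumed by the thermodynamic limit, not by the choice of $k$.

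Second, your two-sided bound $|f_{H_n}|\le C$, obtained by telescoping $\log(1+R_{H_i,v_i})$, works but needs extra bookkeeping:
\begin{itemize}
\item Definition~\ref{def:complex_contraction} gives one region $\mathcal{Q}$ per activity, so you must extract from the Shao--Sun construction a $\mathcal{Q}$ that is uniform over each disk;
\item you then need a finite cover of the strip;
\item and you must fix a consistent branch, e.g.\ by keeping $1+\mathcal{Q}$ in the right half-plane.
\end{itemize}
The paper avoids all of this with the one-line estimate $|Z_H(z)|\le(1+|z|)^{|V(H)|}$, i.e.\ $\Re f_{H_n}(z)\le\log(1+|z|)$, which is all that the Scott--Sokal form of Vitali (Proposition~\ref{prop:vitali-scott-sokal}) requires. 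Your version buys a genuine two-sided bound, so plain Montel applies, at the cost of that uniformity argument.
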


\begin{proof}
    we will prove in the following subsections.
\end{proof}

The theorem \ref{thm:infinite_threshold_strong} is actually a stronger result than theorem \ref{thm:infinite_main}.

\subsection{Infinite periodic graphs and Assumption $\mathcal{A}$}

In this subsection, we will introduce the basic settings and properties of infinite graph and  exhaustion sub-graph sequence. Further more we will state the Assumption $\mathcal{A}$ that is important to the proof of our main result.
\begin{lemma}
Let $G=(V,E)$ be a (countable) graph and fix $\lambda\ge 0$.
For every finite $\Lambda\subset V$, define the hard-core partition function
\[
Z_\Lambda(\lambda)\;=\;\sum_{\substack{I\subseteq \Lambda\\ I\ \text{independent in }G}}
\lambda^{|I|}.
\]
Suppose that $\Lambda$ is decomposed as a disjoint union
\[
\Lambda \;=\; \Big(\bigcup_{i=1}^n \Lambda_i\Big)\;\cup\;\Lambda_0 .
\]
Then
\[
Z_\Lambda(\lambda)\;\le\;\Big(\prod_{i=1}^n Z_{\Lambda_i}(\lambda)\Big)\,(1+\lambda)^{|\Lambda_0|}.
\]
Moreover, if the blocks are pairwise non-adjacent, i.e.
\[
E(\Lambda_i,\Lambda_j)=\varnothing \qquad \text{for all } 1\le i<j\le n,
\]
then
\[
Z_\Lambda(\lambda)\;\ge\;\prod_{i=1}^n Z_{\Lambda_i}(\lambda).
\]
\end{lemma}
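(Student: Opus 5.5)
The plan is a direct combinatorial comparison of the sets being summed; all weights are nonnegative because $\lambda\ge 0$. For the upper bound we use restriction. Let $\mathcal{I}(\Lambda)$ denote the independent sets of $G$ contained in $\Lambda$. The map
\[
I\mapsto \bigl(I\cap\Lambda_1,\dots,I\cap\Lambda_n,\ I\cap\Lambda_0\bigr)
\]
is injective from $\mathcal{I}(\Lambda)$ into $\mathcal{I}(\Lambda_1)\times\cdots\times\mathcal{I}(\Lambda_n)\times 2^{\Lambda_0}$. It is injective because the blocks partition $\Lambda$. Each $I\cap\Lambda_i$ is still independent in $G$, being a subset of an independent set. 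The weight is multiplicative:
\[
\lambda^{|I|}=\prod_{i=1}^n\lambda^{|I\cap\Lambda_i|}\cdot\lambda^{|I\cap\Lambda_0|}.
\]
Summing over the image and bounding it by the full product set gives
\[
Z_\Lambda(\lambda)\le \prod_{i=1}^n Z_{\Lambda_i}(\lambda)\cdot\sum_{S\subseteq\Lambda_0}\lambda^{|S|},
\]
and the last factor equals $(1+\lambda)^{|\Lambda_0|}$. Dropping the independence constraint on the $\Lambda_0$ part costs nothing, since we only want an upper bound and every term is nonnegative.

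For the lower bound, assume $E(\Lambda_i,\Lambda_j)=\varnothing$ for $i\ne j$. Consider the gluing map
\[
(I_1,\dots,I_n)\mapsto I_1\cup\cdots\cup I_n,
\]
defined on $\mathcal{I}(\Lambda_1)\times\cdots\times\mathcal{I}(\Lambda_n)$. The union lies in $\Lambda$. It is independent in $G$, because an edge inside it would either lie within a single $I_i$, which is impossible, or join two different blocks, which is excluded by hypothesis. The map is injective because the $\Lambda_i$ are disjoint, and its weight satisfies $\lambda^{|\bigcup I_i|}=\prod_i\lambda^{|I_i|}$. Hence $\prod_i Z_{\Lambda_i}(\lambda)$ is the weighted sum over a subset of $\mathcal{I}(\Lambda)$, which is at most $Z_\Lambda(\lambda)$ by nonnegativity; the remaining independent sets, such as those meeting $\Lambda_0$, only add more nonnegative terms.

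I do not expect a serious obstacle. The only points to state carefully are that independence is taken in $G$ (equivalently, in the induced subgraph), so that restriction and gluing preserve it, and that $\lambda\ge 0$ is exactly what allows us to discard or add terms monotonically.
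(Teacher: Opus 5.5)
Your proof is correct and follows the paper's argument. For the upper bound, you embed $\mathcal{I}(\Lambda)$ into $\mathcal{I}(\Lambda_1)\times\cdots\times\mathcal{I}(\Lambda_n)\times 2^{\Lambda_0}$ with multiplicative nonnegative weights, and for the lower bound you glue independent sets from the pairwise non-adjacent blocks, which amounts to keeping $\Lambda_0$ empty; you are in fact slightly more explicit than the paper about the injectivity of the gluing map and about the $\Lambda_0$ factor ranging over all subsets of $\Lambda_0$.
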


\begin{proof}

\emph{Upper bound.}

    For any independent set $I\subseteq \Lambda$, since $\Lambda$ being divide into disjoint $\Lambda_0, \Lambda_1, ..., \Lambda_n$
    $$I=(I\cap\Lambda_0)\cup(I\cap\Lambda_1)\cup\dots\cup(I\cap\Lambda_n)$$
    Let $I_i$ denote $I\cap\Lambda_i$, we have
    $$I\in\calI (\Lambda)\Rightarrow I_i\in\calI (\Lambda_i)$$
    $$\calI(\Lambda)\subseteq\calI(\Lambda_1)\times\calI(\Lambda_2)\times...\times\calI(\Lambda_n)\times\mathcal{P}(\Lambda_0)$$
    where $\mathcal{P}(\Lambda_0)$ is the power set of $\calI(\Lambda_0)$, it turns out 
    \begin{align}
       Z_\Lambda(\lambda)&=\sum_{I\in\calI(\Lambda)}\lambda^{|I|}\le\sum_{I_1\in\calI(\Lambda_1)}\cdots\sum_{I_n\in\calI(\Lambda_n)}\sum_{S\subseteq\Lambda_0}\lambda^{|S|}\prod^n_{i=1}\lambda^{|I_i|}\\
       &=\bigl(\prod^n_{i=1}\sum_{I_i\in\calI(\Lambda_i)}\lambda^{|I_i|}\bigr)(\sum_{S\subseteq\Lambda_0}\lambda^{|S|})\\
       &=\Big(\prod_{i=1}^n Z_{\Lambda_i}(\lambda)\Big)\,(1+\lambda)^{|\Lambda_0|}
    \end{align}

\emph{Lower bound.}
    Since we choose each block $\Lambda_i(i\ge 1)$ is non-adjacent, there are 
    \begin{align}
       Z_\Lambda(\lambda)&=\sum_{I\in\calI(\Lambda)}\lambda^{|I|}\ge\sum_{I_1\in\calI(\Lambda_1)}\cdots\sum_{I_n\in\calI(\Lambda_n)}\prod^n_{i=1}\lambda^{|I_i|}\\
       &=\prod_{i=1}^n Z_{\Lambda_i}(\lambda)
    \end{align}
    
\end{proof}

\begin{definition}[F\o lner sequence]
Let $G=(V,E)$ be a locally finite infinite graph. For a finite set $S\subset V$, define its (vertex) boundary
\[
\partial S \;:=\; \{\, v\in S : \exists\, u\in V\setminus S \text{ with } \{u,v\}\in E \,\}.
\]
A sequence of finite sets $(S_n)_{n\ge 1}$ with $|S_n|\to\infty$ is called a \emph{F\o lner sequence} if
\[
\frac{|\partial S_n|}{|S_n|}\xrightarrow[n\to\infty]{}0.
\]
\end{definition}

\begin{definition}[van Hove (F\o lner--van Hove) sequence]
Let $G=(V,E)$ be a locally finite infinite graph and let $\mathrm{dist}(\cdot,\cdot)$ be the graph distance.
For $r\in\mathbb{N}$ and a finite set $S\subset V$, define the $r$-thick boundary
\[
\partial^{(r)} S \;:=\; \bigl\{\, v\in V : \mathrm{dist}(v,S)\le r \ \text{and}\ \mathrm{dist}(v,V\setminus S)\le r \,\bigr\}.
\]
A sequence of finite sets $(S_n)_{n\ge 1}$ with $|S_n|\to\infty$ is called a \emph{van Hove sequence}
(or \emph{F\o lner--van Hove sequence}) if for every fixed $r\in\mathbb{N}$,
\[
\frac{|\partial^{(r)} S_n|}{|S_n|}\xrightarrow[n\to\infty]{}0.
\]
\end{definition}

\begin{definition}[Assumption $\mathcal{A}$]\label{def:assumptionA}
Let $G=(V,E)$ be a connected, locally finite, infinite graph.
We say that $G$ satisfies \emph{Assumption $\mathcal{A}$} if there exist an integer $d\ge 1$
and a subgroup $\Gamma\le \mathrm{Aut}(G)$ such that the following hold:

\begin{enumerate}

\item \textbf{Translation group.}
There exist an integer $d\ge 1$ and automorphisms $\tau_1,\dots,\tau_d\in \mathrm{Aut}(G)$ such that
(i) $\tau_i\tau_j=\tau_j\tau_i$ for all $i,j$, and
(ii) the map $\mathbb{Z}^d\to\Gamma$ given by
\[
(n_1,\dots,n_d)\longmapsto \tau_1^{n_1}\cdots \tau_d^{n_d}
\]
is a group isomorphism. Equivalently, $\Gamma=\langle\tau_1,\dots,\tau_d\rangle=\{\tau_1^{n_1}\cdots\tau_d^{n_d}:(n_1,\dots,n_d)\in\mathbb{Z}^d\}$ is a free abelian group of rank $d$
(i.e., isomorphic to $\mathbb{Z}^d$).

\item \textbf{Free.}
The action of $\Gamma$ on $V$ is free, i.e.
\[
\forall v\in V,\quad \bigl(\gamma v = v \ \Rightarrow\  \gamma = e\bigr),
\]
where $e$ denotes the identity element of $\Gamma$.

\item \textbf{Cocompact.}
The action of $\Gamma$ on $V$ has finitely many vertex-orbits, i.e. there exists a finite set
$F\subset V$ such that
\[
V = \Gamma F := \{\gamma f:\ \gamma\in\Gamma,\ f\in F\}.
\]
Any such finite set $F$ is called a \emph{(finite) fundamental domain}.

\end{enumerate}
\end{definition}

\begin{remark}
Assumption $\mathcal{A}$ is satisfied by many infinite graphs. 
In particular, for all periodic graph with finite unit cell(i.e.~lattice), Assumption $\mathcal{A}$ holds.
\end{remark}

\begin{lemma}[Assumption $\mathcal{A}$ implies existence of a F\o lner-van Hove sequence]\label{lem:P_implies_vanhove}
Let $G=(V,E)$ be a connected, locally finite, infinite graph. Assume that $G$ satisfies
Assumption $\mathcal{A}$ from Definition~\ref{def:assumptionA}: namely, there exist an integer $d\ge 1$
and a subgroup $\Gamma\le \mathrm{Aut}(G)$ such that $\Gamma\cong \mathbb{Z}^d$, the action of $\Gamma$
on $V$ is free, and there exists a finite fundamental domain $F\subset V$ with $V=\Gamma F$.
Then $G$ admits a F\o lner-van Hove sequence $(\Lambda_n)_{n\ge 1}$ in the following:
for every fixed $r\in\mathbb{N}$,
\[
\frac{|\partial^{(r)}\Lambda_n|}{|\Lambda_n|}\longrightarrow 0 \qquad (n\to\infty),
\]
where $\partial^{(r)}\Lambda:=\{v\in \Lambda:\ \mathrm{dist}_G(v,V\setminus \Lambda)\le r\}$.
\end{lemma}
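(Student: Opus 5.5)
The plan is to take $\Lambda_n:=\Gamma_n F$, where $\Gamma_n:=\{\tau_1^{m_1}\cdots\tau_d^{m_d}: 0\le m_i<n\}$ is the discrete box of side $n$ in $\Gamma\cong\mathbb{Z}^d$. Then we compare the thick boundary of $\Lambda_n$ with the boundary of the box $[0,n)^d$ in $\mathbb{Z}^d$.

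\emph{Cardinality.} Freeness of the action means $\gamma f=\gamma' f$ forces $\gamma=\gamma'$. After replacing $F$ by a set of orbit representatives, i.e.\ one vertex from each $\Gamma$-orbit meeting $F$ (still finite, still with $V=\Gamma F$), the map $\Gamma\times F\to V$, $(\gamma,f)\mapsto\gamma f$, is a bijection. Hence $|\Lambda_n|=n^d|F|\to\infty$.

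\emph{Geometric comparison.} Let $S$ be the finite set of group elements $\gamma$ such that some $f\in F$ is adjacent in $G$ to some vertex of $\gamma F$. Local finiteness together with finiteness of $F$ makes $S$ finite. Let $L:=\max_{\gamma\in S}\|\gamma\|_\infty$, where $\|\cdot\|_\infty$ is the $\ell^\infty$-norm of the exponent vector in $\mathbb{Z}^d$. Because $\Gamma$ acts by automorphisms and is abelian, every edge of $G$ joins $\gamma f$ and $\gamma\gamma' f'$ with $\gamma'\in S$. Consequently a path of length at most $r$ starting at $\gamma f$ stays within $\Gamma'F$, where $\Gamma'$ consists of elements whose exponent vector differs from that of $\gamma$ by at most $rL$ in $\ell^\infty$.

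\emph{Boundary estimate.} Suppose $v=\gamma f\in\Lambda_n$ satisfies $\mathrm{dist}_G(v,V\setminus\Lambda_n)\le r$. Then some vertex within distance $r$ of $v$ lies in $\eta F$ with $\eta\notin\Gamma_n$, and by the comparison above $\eta$ is within $\ell^\infty$-distance $rL$ of $\gamma$. So the exponent vector of $\gamma$ lies within $rL$ of the complement of the box $[0,n)^d$. The number of lattice points of the box with this property is at most $2d\,(rL+1)\,n^{d-1}$. Therefore
\[
|\partial^{(r)}\Lambda_n|\le 2d(rL+1)n^{d-1}|F|,
\]
and the ratio with $|\Lambda_n|=n^d|F|$ is at most $2d(rL+1)/n\to0$ for each fixed $r$.

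\emph{Main obstacle.} The main care point is the comparison step. One has to verify that the set $S$ of neighbouring translates is finite and that graph distance controls the $\ell^\infty$ displacement in $\Gamma$, which is the inequality $\|\cdot\|_\infty\le rL$ after $r$ steps. This follows by induction on path length using $\gamma(\gamma'F)=(\gamma\gamma')F$ and invariance of adjacency under automorphisms.

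Connectedness is not actually needed for this argument. The lemma's version of the thick boundary, $\partial^{(r)}\Lambda:=\{v\in\Lambda:\mathrm{dist}_G(v,V\setminus\Lambda)\le r\}$, only counts vertices inside $\Lambda_n$. If the two-sided version from the earlier van Hove definition is wanted, the same estimate applies symmetrically to the outer shell, giving at most $2d(rL+1)(n+2rL)^{d-1}|F|$ points, which is again $o(n^d)$.
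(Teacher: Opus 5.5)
Your proposal is correct and follows essentially the same route as the paper. Both arguments reduce $F$ to orbit representatives so that $(\gamma,f)\mapsto\gamma f$ is a bijection, take $\Lambda_n$ as the union of translates of $F$ over a box in $\Gamma\cong\mathbb{Z}^d$, bound the $\ell^\infty$ cell displacement per edge using the finite displacement set, and conclude $|\partial^{(r)}\Lambda_n|/|\Lambda_n|=O(1/n)$. The differences are cosmetic: you use boxes $[0,n)^d$ rather than $\{-n,\dots,n\}^d$, and you add a remark handling the two-sided thick boundary from the earlier van Hove definition, which the paper's proof does not address explicitly.
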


\begin{proof}
\textbf{Step 1: Reduce to a fundamental domain consisting of orbit representatives.}
Since $V=\Gamma F$ and $F$ is finite, $V$ is a union of finite number of $\Gamma$-orbits.
Choose a subset $F_0\subseteq F$ containing exactly one representative from each $\Gamma$-orbit.
Then $F_0$ is finite and
\[
V=\bigcup_{f\in F_0}\Gamma f.
\]
Replacing $F$ by $F_0$, we may assume without loss of generality that $F$ consists of orbit
representatives.

\smallskip
\textbf{Step 2: Build a cell-aligned sequence of finite sets.}
Fix an isomorphism $\Gamma\cong \mathbb{Z}^d$ and identify elements $\gamma\in\Gamma$ with vectors
$\mathbf{n}=(n_1,\dots,n_d)\in\mathbb{Z}^d$. For $n\ge 1$ define the box
\[
B_n:=\{-n,-n+1,\dots,n\}^d\subset \Gamma
\]
and the corresponding finite vertex set
\[
\Lambda_n \;:=\; \bigcup_{\gamma\in B_n} \gamma F \;\subset\; V.
\]
Since the sets $\gamma F$ are pairwise disjoint (because $F$ contains orbit representatives and
the action is free), we have
\[
|\Lambda_n| \;=\; |F|\,|B_n| \;=\; |F|\,(2n+1)^d \xrightarrow[n\to\infty]{}\infty.
\]

\smallskip
\textbf{Step 3: Finite range of ``cell displacements'' across edges.}
Define the finite set of displacements
\[
D\;:=\;\Bigl\{\delta\in\Gamma:\ \exists f,f'\in F \text{ with } \{f,\delta f'\}\in E\Bigr\}.
\]
The set $D$ is finite because $F$ is finite and $G$ is locally finite (only finitely many edges
emanate from vertices in $F$). Let
\[
M \;:=\; \max_{\delta\in D}\|\delta\|_\infty \;<\;\infty,
\]
where $\|\cdot\|_\infty$ denotes the $\ell_\infty$-norm under the identification $\Gamma\cong\mathbb{Z}^d$.

\smallskip
\textbf{Step 4: Points deep inside the box are far from the complement.}
Fix $r\in\mathbb{N}$ and set $s:=rM$. If $\gamma\in B_n$ satisfies
$\mathrm{dist}_\infty(\gamma,\Gamma\setminus B_n)>s$, then for any $f\in F$ we claim that
\[
\mathrm{dist}_G(\gamma f,\; V\setminus \Lambda_n) \;>\; r.
\]
Indeed, moving along one graph edge changes the cell-coordinate by an element of $D$, hence by
at most $M$ in $\ell_\infty$-norm. Therefore any path of length $\le r$ starting at $\gamma f$
remains within cell-coordinates contained in $B_n$, so it stays inside $\Lambda_n$.

Consequently, the $r$-thick boundary is contained in the union of cells whose coordinates lie
within $\ell_\infty$-distance $\le s$ of the box boundary:
\[
\partial^{(r)}\Lambda_n \;\subseteq\; \bigcup_{\gamma\in \mathrm{bd}_n^{(s)}} \gamma F,
\qquad
\mathrm{bd}_n^{(s)}:=\{\gamma\in B_n:\ \mathrm{dist}_\infty(\gamma,\Gamma\setminus B_n)\le s\}.
\]
Hence
\[
|\partial^{(r)}\Lambda_n|
\;\le\;
|F|\,|\mathrm{bd}_n^{(s)}|.
\]

\smallskip
\textbf{Step 5: Boundary-to-volume ratio tends to zero.}
Since $\mathrm{bd}_n^{(s)} = B_n\setminus B_{n-s}$ for $n>s$, we have
\[
|\mathrm{bd}_n^{(s)}|
\;=\;
|B_n|-|B_{n-s}|
\;=\;
(2n+1)^d - (2(n-s)+1)^d
\;=\; O(n^{d-1}).
\]
Therefore,
\[
\frac{|\partial^{(r)}\Lambda_n|}{|\Lambda_n|}
\;\le\;
\frac{|F|\,O(n^{d-1})}{|F|(2n+1)^d}
\;=\; O\!\left(\frac{1}{n}\right)\xrightarrow[n\to\infty]{}0.
\]
Since $r$ was arbitrary, $(\Lambda_n)$ is a Følner--van Hove sequence.
\end{proof}

\subsection{Monotonicity of the fixed-depth connective constant}
\label{subsec:monotonicity_cc}

We will prove the lemma~\ref{lem:muinf_le_sigma} in this subsection.

\begin{lemma}[SAW monotonicity under taking sub-graphs]
\label{lem:saw_monotonicity}
Let $G$ be a graph and let $H\subseteq G$ be a (finite) induced sub-graph. Fix a vertex $v\in V(H)$ and an integer
$k\ge 0$. Then the number of SAWs of length $k$ starting at $v$ and staying inside $H$ is at most the number
of SAWs of length $k$ starting at $v$ in $G$.
\end{lemma}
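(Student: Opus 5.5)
The plan is to exhibit an explicit injection from the set of length-$k$ self-avoiding walks in $H$ starting at $v$ into the corresponding set in $G$. The natural candidate is the identity map on vertex sequences. A self-avoiding walk of length $k$ in $H$ from $v$ is a sequence $w=(v=v_0,v_1,\dots,v_k)$ of pairwise distinct vertices of $V(H)$ with $\{v_{i-1},v_i\}\in E(H)$ for every $1\le i\le k$. Since $H\subseteq G$ is a sub-graph, we have $V(H)\subseteq V(G)$ and $E(H)\subseteq E(G)$. Hence the same sequence $w$ consists of pairwise distinct vertices of $G$ joined consecutively by edges of $G$, so it is a self-avoiding walk of length $k$ in $G$ starting at $v$.

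The map $w\mapsto w$ is injective, because two walks are equal exactly when their vertex sequences coincide. Note that no parallel-edge or labeling subtlety arises, since walks here are determined by their vertex sequences in simple graphs. Injectivity then gives
\[
\#\{\text{SAWs of length } k \text{ from } v \text{ in } H\}\;\le\;\mathcal{N}_k(G,v).
\]
The case $k=0$ is trivial, since both sides count the single trivial walk. The argument never uses inducedness, only $E(H)\subseteq E(G)$, so the statement holds for arbitrary sub-graphs.

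I will then record the consequences needed for Lemma~\ref{lem:muinf_le_sigma}. Summing over lengths $1,\dots,k$ gives $\mathcal{N}_{\le k}(H,v)\le \mathcal{N}_{\le k}(G,v)$. Taking $k$-th roots and the supremum over $v\in V(H)\subseteq V(G)$ yields $\mu_k(H)\le\mu_k(G)$ and $\sigma_k(H)\le\sigma_k(G)$. There is no genuine obstacle in this argument. The only point to state carefully is that the walk constraints (distinctness and adjacency) are inherited upward from $H$ to $G$, which is immediate from the inclusions of vertex and edge sets.
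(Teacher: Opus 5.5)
Your proof is correct and is essentially the paper's argument: each SAW in $H$ from $v$, viewed as the same vertex sequence, is a SAW in $G$, so the SAWs in $H$ form a subset of those in $G$. You are also right that inducedness is never used; only $V(H)\subseteq V(G)$ and $E(H)\subseteq E(G)$ matter, even though the paper's one-line proof cites inducedness.
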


\begin{proof}
Every SAW in $H$ is also a SAW in $G$ (since $H$ is induced). Thus the set of SAWs in $H$ is a subset of the set
of SAWs in $G$, so the count cannot increase.
\end{proof}


\begin{proof}[proof of lemma \ref{lem:muinf_le_sigma}]
    By Lemma~\ref{lem:saw_monotonicity}, $\forall k\ge1$, for any $H\in\mathcal{H}_G$ we have 
    $$\#\{\text{SAW of length $k$ in $H$}\}\le\#\{\text{SAW of length $k$ in $G$}\}$$
    Then we have 
    $$\mu_{\inf}(\mathcal{H}_G)=\inf_{k\ge1}\mu_{k}(\mathcal{H}_G)=\inf_{k\ge1}\sup_{H\in\mathcal{H}_G}\mu_{k}(H)\le\inf_{k\ge1}\mu_{k}(G)=\mu_{\inf}(G)$$
    From the definition of $\limsup$ we have $\forall\varepsilon>0$, $\exists K>0$ s.t. $\forall k\ge K$,
    $$\mathcal{N}_{\leq k}(G,v)\le\mathcal{N}_{\leq K-1}(G,v)+\sum^k_{i=K}(\sigma(G)+\varepsilon)^i\le M(\sigma(G)+\varepsilon)^k$$
    where $M<\infty$ is a constant, let $\varepsilon\downarrow0$, it turns out $\mu_{\inf}(G)\le\mu_k(G)\le\sigma(G)$, and $\mu_{\inf}(\mathcal{H}_G)\leq \sigma(G)$.

    As for infinite lattice, by the property of quasi-transitive from \cite{grimmett2015bounds}, we have $\sigma(G)=\lim_{k\to\infty}\sigma_k(G)=\inf_{k\ge1}\sigma_k(G)\le\inf_{k\ge1}\mu_k(G)=\inf_{k\ge1}\mu_k(\mathcal{H}_G)$. By the squeeze theorem, $\sigma(G)=\mu_{\inf}(\mathcal{H}_G)$ holds.
\end{proof}

\begin{definition}[Quasi-transitive graph]
Let $G=(V,E)$ be a graph and let $\mathrm{Aut}(G)$ denote its automorphism group acting on $V$.
We say that $G$ is \emph{quasi-transitive} if $\mathrm{Aut}(G)$ has finitely many orbits on $V$.

Equivalently, $G$ is quasi-transitive if there exists a finite set $F\subset V$ such that
\[
V \;=\; \mathrm{Aut}(G)F \;:=\; \{\delta f: \delta\in\mathrm{Aut}(G),\ f\in F\}.
\]
\end{definition}

\begin{lemma}[Assumption $\mathcal{A}$ implies quasi-transitivity]
\label{lem:assumptionA_implies_quasitransitive}
Let $G=(V,E)$ be a connected, locally finite infinite graph satisfying Assumption $\mathcal{A}$, i.e.\ there exist an integer $d\ge 1$ and a subgroup $\Gamma\le \mathrm{Aut}(G)$ such that $\Gamma\cong \mathbb{Z}^d$ acts freely on $V$
and there exists a finite set $F\subset V$ with $V=\Gamma F$.
Then $G$ is quasi-transitive, i.e.\ $\mathrm{Aut}(G)$ has finitely many orbits on $V$.
\end{lemma}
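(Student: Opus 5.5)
The claim is essentially immediate from the definitions: $\Gamma$ is a subgroup of $\mathrm{Aut}(G)$, so every $\mathrm{Aut}(G)$-orbit is a union of $\Gamma$-orbits. The plan is to show that the finite fundamental domain $F$ from Assumption $\mathcal{A}$ itself witnesses quasi-transitivity.

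First, since $V=\Gamma F$ and $\Gamma\subseteq \mathrm{Aut}(G)$, we have
\[
V=\Gamma F\subseteq \mathrm{Aut}(G)F\subseteq V,
\]
so $V=\mathrm{Aut}(G)F$. This is exactly the equivalent formulation of quasi-transitivity given in the definition above.

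Second, to match the orbit formulation, note that every vertex $v\in V$ can be written as $v=\gamma f$ with $\gamma\in\Gamma\le\mathrm{Aut}(G)$ and $f\in F$. Hence $v$ lies in the $\mathrm{Aut}(G)$-orbit of some $f\in F$. The map sending each orbit to a representative in $F$ is therefore surjective onto the set of orbits, so the number of $\mathrm{Aut}(G)$-orbits is at most $|F|<\infty$.

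Neither freeness of the action nor the isomorphism $\Gamma\cong\mathbb{Z}^d$ is needed. The only point requiring care is the equivalence between "finitely many orbits" and "$V=\mathrm{Aut}(G)F$ for some finite $F$", which I will state explicitly in one line: choose one representative per orbit for one direction, and use the counting argument above for the other.

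There is no real obstacle here. The purpose of the lemma is to let the proof of Lemma~\ref{lem:muinf_le_sigma} invoke the quasi-transitive results of \cite{grimmett2015bounds} for graphs satisfying Assumption $\mathcal{A}$.
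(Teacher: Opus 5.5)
Your argument is correct and is essentially the paper's: both use $\Gamma\le\mathrm{Aut}(G)$ together with $V=\Gamma F$ to conclude that every vertex lies in the $\mathrm{Aut}(G)$-orbit of some $f\in F$, so there are at most $|F|$ orbits (the paper phrases this as each $\mathrm{Aut}(G)$-orbit being a union of the at most $|F|$ many $\Gamma$-orbits). One wording slip: the surjection you want is $F\to V/\mathrm{Aut}(G)$, $f\mapsto \mathrm{Aut}(G)f$, rather than a map from orbits to representatives.
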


\begin{proof}
Since $V=\Gamma F$ with $F$ finite, the $\Gamma$-action has at most $|F|$ vertex-orbits:
indeed,
\[
V \;=\; \bigcup_{f\in F} \Gamma f,
\]
so every vertex lies in the $\Gamma$-orbit of some $f\in F$.
Because $\Gamma\le \mathrm{Aut}(G)$, for every $v\in V$ we have $\Gamma v \subseteq \mathrm{Aut}(G)\,v$.
Hence each $\mathrm{Aut}(G)$-orbit is a union of $\Gamma$-orbits, and therefore the number of $\mathrm{Aut}(G)$-orbits is at most the number of $\Gamma$-orbits.
Consequently $\mathrm{Aut}(G)$ has at most $|F|$ orbits on $V$, and $G$ is quasi-transitive.
\end{proof}

\subsection{Uniform zero-freeness for all finite sub-graph}

\begin{lemma}[Uniform zero-freeness for all finite sub-graphs]
\label{lem:uniform_zerofree_all_finite_subgraphs}
Let $G=(V,E)$ be an infinite graph and fix $k\in\mathbb{N}$, statement holds:
for every $d>0$ and every $\lambda\in[0,\lambda_c(\mu_k(G)))$ there exists $\delta=\delta(d,k,\lambda)>0$
such that for every finite graph $H\subset G$,
the hard-core partition function $Z_H(\cdot)$ has no zeros in the complex neighborhood $\mathcal{L}_{\delta}([0,\lambda])$
\end{lemma}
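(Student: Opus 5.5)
The plan is to reduce the statement to Theorem~\ref{thm:zero-free} applied to the family $\mathcal{H}_G$ of finite subgraphs of $G$, with the depth $k$ fixed and the parameter $d$ chosen equal to $\mu_k(G)$. If $\mu_k(G)=\infty$ there is nothing to prove. By the convention of the statement $\lambda_c(\mu_k(G))$ should be finite, so we may also assume $\mu_k(G)>1$; the degenerate case $\mu_k(G)\le 1$ is handled separately by noting that then every component is a single edge or a vertex, and zero-freeness near $[0,\lambda]$ is trivial.

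\textbf{Step 1: finite subgraphs inherit the bound.} We first show that $\mu_k(H)\le \mu_k(G)$ for every finite $H\subset G$. It suffices to take $H$ induced, since deleting edges only removes self-avoiding walks and so can only decrease $\mathcal{N}_{\le k}$. For induced $H$, Lemma~\ref{lem:saw_monotonicity} gives $\mathcal{N}_{\le k}(H,v)\le \mathcal{N}_{\le k}(G,v)$ for each $v\in V(H)$. Taking $k$-th roots and the supremum over $v$ yields $\mu_k(\mathcal{H}_G)\le \mu_k(G)=:d$. Hence $\mathcal{H}_G\subseteq\mathcal{G}$, where $\mathcal{G}$ is the family in Theorem~\ref{thm:zero-free} for the parameters $k$ and $d$.

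\textbf{Step 2: from a region to a uniform strip.} Theorem~\ref{thm:zero-free} provides a region $\mathcal{U}$ containing $[0,\lambda_c(d))$ on which $Z_H\neq0$ for all $H\in\mathcal{G}$. Its proof gives more: by Lemmas~\ref{lem:real-contraction} and \ref{lem:real_to_complex_extension}, each real $\lambda_0\in[0,\lambda_c(d))$ has an open disk $D(\lambda_0,\delta_{\lambda_0})$ of parameters satisfying complex block contraction. The interval $[0,\lambda]$ with $\lambda<\lambda_c(d)$ is compact, so finitely many of these disks cover it. By the Lebesgue number lemma there is then a $\delta>0$ such that $\mathcal{L}_\delta([0,\lambda])$ lies in the union of these disks. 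On each disk the inductive argument in the proof of Theorem~\ref{thm:zero-free} applies verbatim and gives $Z_H\neq0$. Since $\delta$ depends only on $k$, $d$ and $\lambda$, it is uniform over $H$.

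\textbf{Main obstacle.} The delicate step is the real-to-complex extension. There we need finiteness of the set of trees of depth at most $k$ with $\mu_k(T)\le d$, and this holds because $|C_k(T)|\le d^k$ bounds the branching. We also need the SAW-tree reduction, in the induction of Theorem~\ref{thm:zero-free}, to preserve $\mu_k\le d$ after pinned leaves are removed. Removing pinned leaves only deletes vertices, so it cannot increase walk counts. The case $\lambda_0=0$ needs care, since then $\mathcal{J}=\{0\}$ is degenerate. There I would instead use the elementary fact that $Z_H(z)\neq 0$ for $|z|$ small, uniformly in $H$. This follows from Shearer's bound, using that the maximum degree of $H$ is at most $\mathcal{N}_{1}(H,v)\le d^k$.
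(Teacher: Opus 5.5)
Your proof is correct. Its skeleton matches the paper's: both rest on $\mu_k(H)\le\mu_k(G)$ for finite $H\subset G$, followed by an appeal to Theorem~\ref{thm:zero-free}. Where you differ is the step that produces a $\delta$ independent of $H$.

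The paper applies the theorem to each $H$ separately (via $\lambda_c(\mu_k(G))\le\lambda_c(\mu_k(H))$), obtains a region $\mathcal{U}_H$, and sets $\mathcal{L}_\delta=\bigcap_{H}\mathcal{U}_H$. Since there are infinitely many finite $H\subset G$, nothing in that argument guarantees the intersection still contains a strip $\mathcal{L}_\delta([0,\lambda])$ with $\delta>0$.

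You instead apply the theorem once, with $d=\mu_k(G)$, to the whole family $\mathcal{H}_G\subseteq\mathcal{G}$. One collection of contraction disks then serves every $H$ at once, and you extract $\delta$ from a finite subcover of the compact interval $[0,\lambda]$. That is exactly the uniformity the lemma asserts, so your route is the more rigorous one. Your separate treatments also close cases the paper passes over: the case $\mu_k(G)\le 1$ (components are $K_1$ or $K_2$, so zeros lie only at $-1$ and $-\tfrac12$), and a neighborhood of $0$ via Shearer's disk (maximum degree at most $d^k$).

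One sharpening of your ``main obstacle'' remark. The trouble at $0$ is not only that $\mathcal{J}=\{0\}$ is degenerate. The potential $\varphi(x)=\arcsinh(\sqrt{x})$ is not real-analytic at $x=0$, so the paper's choice $\mathcal{J}=[0,\lambda_0]$ is problematic for every $\lambda_0$. For $\lambda_0>0$ this is repaired by taking $\mathcal{J}=[\lambda_0(1+\lambda_0)^{-D},\lambda_0]$ with $D=\lfloor d^k\rfloor$. This interval is invariant under the block operator, because every vertex has at most $D$ children and leaves receive $\lambda_0$, and it stays away from $0$.
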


\begin{proof}
    From the proof of lemma \ref{lem:muinf_le_sigma} we know that for every finite sub-graph $H\subset G$ one has
    $$\mu_k(H)\le \mu_k(G),$$
    and that the critical function $\lambda_c(\cdot)$ is nonincreasing, which means $\lambda_c(\mu_k(G))\le\lambda_c(\mu_k(H))$
    Let $\lambda\in[0,\lambda_c(\mu_k(G)))$, since we already have a zero-free regime at any $\lambda\in[0,\lambda_c(\mu_k(H)))\supset[0,\lambda_c(\mu_k(G)))$ for all finite sub-graph $H$ from theorem \ref{thm:zero-free}. Let $\mathcal{U}_H$ denote the zero-free regime of finite sub-graph $H$ contains the interval $[0,\lambda]$, take
    $$\mathcal{L}_{\delta}=\bigcap_{\substack{\text{finite H}\\H\subset G}}\mathcal{U}_H$$
    then $Z_H(\cdot)\neq0$ holds for all finite sub-graph $H$ at $\mathcal{L}_{\delta}$.
\end{proof}

\subsection{Analyticity of the infinite-volume free energy on a zero-free domain}

In this subsection we study the infinite-volume free energy and its analyticity. 
Following \cite{scott2005repulsive}, we define the free energy in the thermodynamic limit via an exhaustion by a Følner--van Hove sequence. 
Using the existence of such sequences under Assumption $\mathcal{A}$, we then establish that the resulting limit exists (and is independent of the chosen F\o lner-van Hove sequence), providing a well-defined notion of infinite-volume free energy for our model.

\begin{theorem}[Thermodynamic limit~\cite{scott2005repulsive}]\label{thm:thermodynamic_limit_periodic_graph}
Let $G=(V,E)$ be a connected, locally finite, infinite graph.
If $G$ holds for \textit{Assumption $\mathcal{A}$}
, consider the hard-core model on $G$ with constant fugacity $\lambda\ge 0$: for every finite
$\Lambda\subset V$ let
\[
Z_\Lambda(\lambda)\;=\;\sum_{\substack{I\subseteq \Lambda\\ I\ \mathrm{independent\ in}\ G[\Lambda]}}
\lambda^{|I|},
\qquad
f_\Lambda(\lambda)\;=\;\frac{1}{|\Lambda|}\log Z_\Lambda(\lambda).
\]
Then the limit
\[
f_\infty(\lambda)\;:=\;\lim_{n\to\infty} f_{\Lambda_n}(\lambda)
\]
exists along every F\o lner--van Hove sequence $(\Lambda_n)$ and is independent of the chosen sequence.
Moreover,
\[
0\le f_\infty(\lambda)\le \log(1+\lambda),
\]
and $f_\infty(\lambda)$ is increasing in $\lambda$ and convex as a function of $\log\lambda$ on $(0,\infty)$.
\end{theorem}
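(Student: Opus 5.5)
We follow the classical subadditivity route for the thermodynamic limit, using the cell structure supplied by Assumption $\mathcal{A}$ and the two-sided block bounds of the preceding lemma (the upper bound with a leftover set $\Lambda_0$, and the lower bound for pairwise non-adjacent blocks).

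First, the elementary bounds. Since $\emptyset$ is independent, $Z_\Lambda(\lambda)\ge 1$. Every independent set is a subset of $\Lambda$, so $Z_\Lambda(\lambda)\le(1+\lambda)^{|\Lambda|}$. Hence $0\le f_\Lambda(\lambda)\le\log(1+\lambda)$, and the same holds for any limit. Write $Z_\Lambda=\sum_m a_m\lambda^m$ with $a_m\ge0$, and set $\lambda=e^t$. Then $\log Z_\Lambda$ is a log-sum-exp of affine functions of $t$, hence convex in $t$, and it is nondecreasing in $\lambda$. Pointwise limits preserve both properties, so monotonicity and convexity of $f_\infty$ in $\log\lambda$ follow once the limit exists.

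Second, existence along the cell-aligned boxes. Fix $\Gamma\cong\mathbb{Z}^d$ and the fundamental domain $F$ of orbit representatives from Lemma~\ref{lem:P_implies_vanhove}. For a box $B\subset\Gamma$ write $\Lambda(B)=\bigcup_{\gamma\in B}\gamma F$. Translation invariance gives $Z_{\Lambda(\gamma B)}=Z_{\Lambda(B)}$. Let $M$ be the displacement bound from Step~3 of that lemma. Take $\Lambda_N$ with side length $N$ and tile it by translates of a box of side $m$, each shrunk by $M$ layers so the blocks are pairwise non-adjacent. Put the unused cells into $\Lambda_0$; there are $O(N^d M/m)$ of them, plus $O(mN^{d-1})$ cells near the boundary. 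The upper block bound gives
\[
f_{\Lambda_N}\le \frac{|\Lambda(B_m)|}{|\Lambda(B_{m+M})|}\,f_{\Lambda(B_m)}+O(M/m)\log(1+\lambda)+o(1).
\]
The lower block bound, using disjoint non-adjacent shrunken copies inside $\Lambda_N$, gives the matching inequality in the reverse direction. Letting $N\to\infty$ and then $m\to\infty$ shows $\limsup_N f_{\Lambda_N}\le\liminf_m f_{\Lambda(B_m)}$. So the limit exists along boxes; call it $f_\infty$.

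Third, independence of the van Hove sequence, which I expect to be the main obstacle. Given an arbitrary van Hove sequence $(S_n)$, fix $m$ and consider the translates $\gamma\Lambda(B_m)$, $\gamma\in(m+M)\mathbb{Z}^d$, that lie entirely inside $S_n$. The cells not covered by such translates lie within graph distance $r=O(m+M)$ of $V\setminus S_n$. That step needs the converse displacement bound: a bounded cell offset implies bounded graph distance, which follows from connectivity and the finiteness of $F$. By the van Hove property the uncovered set is $o(|S_n|)$. Applying the upper bound with $\Lambda_0$ equal to the uncovered set, and the lower bound with the non-adjacent shrunken blocks, squeezes $f_{S_n}$ between $f_{\Lambda(B_m)}\pm O(M/m)\log(1+\lambda)$ up to $o(1)$. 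Letting $n\to\infty$ and then $m\to\infty$ yields $\lim f_{S_n}=f_\infty$. The delicate point is accounting for the partial cells of $S_n$, since $S_n$ need not be a union of cells. I would handle this by putting every partially covered cell into $\Lambda_0$; such cells are near the boundary, so they cost only $o(|S_n|)$.
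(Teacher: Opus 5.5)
Your proposal is correct and follows essentially the paper's route. Both arguments tile $\Gamma$ into boxes of side $m+M$ with side-$m$ cores, take the lower bound from the pairwise non-adjacent cores, and take the upper bound by charging the leftover set a factor $(1+\lambda)^{|\Lambda_0|}$. Both then control the cells of partially covered tiles through the van Hove property, using the fact that a bounded cell offset forces a bounded graph distance; this is the paper's $\mathrm{diam}(C_0)$ bound.

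One correction to your third step: the uncovered set is $O(M/m)\,|S_n|+o(|S_n|)$, not $o(|S_n|)$, because the interior gap shells between cores are not near $V\setminus S_n$. Your $\pm O(M/m)\log(1+\lambda)$ squeeze already absorbs this.

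Your explicit iterated limit ($n\to\infty$, then $m\to\infty$) and your log-sum-exp treatment of monotonicity and convexity in $\log\lambda$ supply details the paper's proof leaves out.
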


\begin{proof}
Fix once and for all a finite set $F\subset V$ of $\Gamma$-orbit representatives so that
$V=\bigcup_{\gamma\in\Gamma}\gamma F$.
Fix an isomorphism $\Gamma\cong \mathbb{Z}^d$. Let $a,c\in\mathbb{N}$ and set $L:=a+c$. Define the boxes
\[
Q_L:=\{0,1,\dots,L-1\}^d,\qquad Q_a:=\{0,1,\dots,a-1\}^d\subset Q_L,
\]
viewed as subsets of $\Gamma\cong\mathbb{Z}^d$. Let $L\Gamma:=L\mathbb{Z}^d\subset \Gamma$.
Then $\Gamma$ is partitioned as
\[
\Gamma=\bigcup_{t\in L\Gamma}(t+Q_L).
\]
For each $t\in L\Gamma$ define the corresponding vertex blocks and their cores
\[
C_t:=\bigcup_{\gamma\in t+Q_L}\gamma F,\qquad
C_t':=\bigcup_{\gamma\in t+Q_a}\gamma F\subset C_t.
\]
Then $V=\bigcup_{t\in L\Gamma}C_t$. Define
\[
D:=\Bigl\{\delta\in\Gamma:\ \exists f,f'\in F \text{ with } \{f,\delta f'\}\in E\Bigr\},
\qquad
M:=\max_{\delta\in D}\|\delta\|_\infty<\infty.
\]
(As in Lemma~\ref{lem:P_implies_vanhove}, $D$ is finite because $F$ is finite and $G$ is locally finite.)
Choose $c\ge 2M$. We claim that distinct cores have no edges between them:
for $t\neq t'$,
\begin{equation}\label{eq:no_edges_between_cores}
E(C_t',C_{t'}')=\varnothing.
\end{equation}
Indeed, take $u\in C_t'$ and $v\in C_{t'}'$. Write $u=\gamma f$ and $v=\gamma'f'$ with
$\gamma\in t+Q_a$ and $\gamma'\in t'+Q_a$. Since $t$ and $t'$ differ by a vector all of whose
coordinates are multiples of $L=a+c$, we have $\|\gamma-\gamma'\|_\infty\ge c$.
If $\{u,v\}\in E$, then by $\Gamma$-invariance of $E$ we would have
$\{f,\gamma^{-1}\gamma' f'\}\in E$, so $\gamma^{-1}\gamma'\in D$ and hence
$\|\gamma-\gamma'\|_\infty=\|\gamma^{-1}\gamma'\|_\infty\le M$, a contradiction because $c\ge 2M$.
This proves~\eqref{eq:no_edges_between_cores}.

Let $\Lambda\subset V$ be finite. Define the set of fully contained blocks
\[
\mathcal{T}_\Lambda:=\{t\in L\Gamma:\ C_t\subseteq \Lambda\},
\]
and the remaining vertices not covered by the selected cores,
\[
\Lambda_0:=\Lambda\setminus \bigcup_{t\in\mathcal{T}_\Lambda} C_t'.
\]
Because of~\eqref{eq:no_edges_between_cores}, any choice of independent sets inside distinct
cores can be combined into an independent set in $\Lambda$ (by taking the union), and there are
no constraints between different cores. Therefore we have the lower bound
\begin{equation}\label{eq:lower_bound_partition}
Z_\Lambda(\lambda)\ \ge\ \prod_{t\in\mathcal{T}_\Lambda} Z_{C_t'}(\lambda),
\end{equation}
obtained by forcing $\Lambda_0$ to be empty.

For the upper bound, note that for any finite set $S\subset V$ we trivially have
\begin{equation}\label{eq:trivial_upper_bound}
Z_S(\lambda)\ \le\ (1+\lambda)^{|S|},
\end{equation}
since each vertex is either occupied (weight $\lambda$) or vacant (weight $1$), ignoring exclusion
constraints. Hence
\begin{equation}\label{eq:upper_bound_partition}
Z_\Lambda(\lambda)
\ \le\
\Bigl(\prod_{t\in\mathcal{T}_\Lambda} Z_{C_t'}(\lambda)\Bigr)\,(1+\lambda)^{|\Lambda_0|}.
\end{equation}
Taking logarithms, \eqref{eq:lower_bound_partition}--\eqref{eq:upper_bound_partition} yield
\begin{equation}\label{eq:almost_additivity}
\Bigl|\log Z_\Lambda(\lambda)-\sum_{t\in\mathcal{T}_\Lambda}\log Z_{C_t'}(\lambda)\Bigr|
\ \le\ \log(1+\lambda)\,|\Lambda_0|.
\end{equation}

Since $C_t'$ is the image of $C_0'$ under the automorphism $t\in\Gamma$ and the fugacity is constant,
\[
Z_{C_t'}(\lambda)=Z_{C_0'}(\lambda)\qquad \forall\,t\in L\Gamma.
\]
Hence the sum in~\eqref{eq:almost_additivity} equals
\[
\sum_{t\in\mathcal{T}_\Lambda}\log Z_{C_t'}(\lambda)
=|\mathcal{T}_\Lambda|\cdot \log Z_{C_0'}(\lambda).
\]

Let $\mathrm{diam}(C_0)$ denote the graph diameter of the finite set $C_0$:
\[
\mathrm{diam}(C_0):=\max\{dist_G(x,y):x,y\in C_0\}<\infty.
\]

Define 

\begin{equation}
    \Lambda^{core}_0:=\Lambda\cap\bigcup_{t\in\mathcal{T}_\Lambda}(C_t\backslash C'_t), \qquad \Lambda^{bd}_0:=\Lambda\cap\bigcup_{t\notin\mathcal{T}_\Lambda}C_t.
\end{equation}

it's obvious $\Lambda_0\subset\Lambda_0^{core}\cup\Lambda_0^{bd}$ and $\Lambda_0^{core}\cap\Lambda_0^{bd}=\emptyset$, thus

\begin{equation}
    |\Lambda_0|\le|\Lambda_0^{core}|+|\Lambda_0^{bd}|
\end{equation}

For every $t\in\mathcal{T}_\Lambda$, since $C_t\subset\Lambda$,
\begin{equation}
    \Lambda\cap(C_t\backslash C'_t)=(C_t\backslash C'_t)
\end{equation}
and
\begin{equation}
    |\Lambda_0^{core}|=\sum_{t\in\mathcal{T}_\Lambda}|C_t\backslash C'_t|=|\mathcal{T}_\Lambda|\cdot |C_t\backslash C'_t|
\end{equation}

It turns out $\frac{|\Lambda_0^{core}|}{|\Lambda|}\le|\mathcal{T}_\Lambda|\cdot\bigl(1-\bigl(\frac{a}{a+c}\bigr)^d\bigr)$.

For $\Lambda^{bd}_0$, which combined by the block $C_t\not\subset\Lambda$, there is at least one vertex that

\begin{equation}
    \forall t\notin\mathcal{T}_\Lambda, \exists u_t\in C_t\cap(V\backslash\Lambda).
\end{equation}

for any $v\in\Lambda^{bd}_0$ that in the same $C_t$ with $u_t$, and

\begin{equation}
    dist_G(v, u_t)\le\mathrm{diam}(C_t)=\mathrm{diam}(C_0)=:R<\infty
\end{equation}

which means $v\in\partial^{(R)}\Lambda$ and $\Lambda^{bd}_0\subset\partial^{(R)}\Lambda\Rightarrow |\Lambda^{bd}_0|\le|\partial^{(R)}\Lambda|$.

After all, for any finite $\Lambda\subset V$:
\begin{align}
    &|\Lambda_0|\le|\mathcal{T}_\Lambda|\cdot |C_t\backslash C'_t|+|\partial^{(R)}\Lambda|\\
    \Rightarrow&\frac{|\Lambda_0|}{|\Lambda|}\le\frac{|\mathcal{T}_\Lambda|\cdot |C_t\backslash C'_t|}{|\Lambda|}+\frac{|\partial^{(R)}\Lambda|}{|\Lambda|}\\
    \Rightarrow&\frac{|\Lambda_0|}{|\Lambda|}\le|\mathcal{T}_\Lambda|\cdot\bigl(1-\bigl(\frac{a}{a+c}\bigr)^d\bigr)+\frac{|\partial^{(R)}\Lambda|}{|\Lambda|}
\end{align}

Let $(\Lambda_n)$ be any Følner-van Hove sequence. Apply~\eqref{eq:almost_additivity} with $\Lambda=\Lambda_n$:
\begin{equation}\label{eq:ratio_estimate}
\left|
f_{\Lambda_n}(\lambda)
-
\frac{|\mathcal{T}_{\Lambda_n}|}{|\Lambda_n|}\cdot \log Z_{C_0'}(\lambda)
\right|
\ \le\
\log(1+\lambda)\,\frac{|\Lambda_{n,0}|}{|\Lambda_n|}.
\end{equation}
We have $|\Lambda_{n,0}|\le |\partial^{(R)}\Lambda_n|$, hence the
right-hand side of~\eqref{eq:ratio_estimate} tends to $0$ by the van Hove property.

It remains to understand $|\mathcal{T}_{\Lambda_n}|/|\Lambda_n|$. Since $\{C_t\}_{t\in L\Gamma}$ is a partition
of $V$ into blocks of equal finite size $|C_0|$, we have
\[
|C_0|\cdot |\mathcal{T}_{\Lambda_n}|
\ \le\
|\Lambda_n|
\ \le\
|C_0|\cdot |\mathcal{T}_{\Lambda_n}| + |\Lambda_{n,0}|.
\]
Dividing by $|\Lambda_n|$ using $|\Lambda_{n,0}|/|\Lambda_n|\xrightarrow[a\to\infty]{\text{fix $c$}} 0$, we obtain
\[
\frac{|C_0|\cdot |\mathcal{T}_{\Lambda_n}|}{|\Lambda_n|}\longrightarrow 1,
\qquad\text{hence}\qquad
\frac{|\mathcal{T}_{\Lambda_n}|}{|\Lambda_n|}\longrightarrow \frac{1}{|C_0|}.
\]
Returning to~\eqref{eq:ratio_estimate}, we conclude that $f_{\Lambda_n}(\lambda)$ converges to
\[
f_\infty(\lambda)\ :=\ \frac{1}{|C_0|}\log Z_{C_0'}(\lambda).
\]
In particular, the limit exists.

The right-hand side expression above depends only on $a$ and $c$ through the fixed core $C_0'$ and block $C_0$,
and not on the particular Følner-van Hove sequence. Hence any two Følner-van Hove sequences yield the same limit,
so the limit is independent of the choice of sequence.
\end{proof}

\begin{proposition}[Vitali convergence~\cite{scott2005repulsive}]
\label{prop:vitali-scott-sokal}
Let $D\subset\mathbb{C}$ be a domain and let $\{f_n\}_{n\ge 1}$ be a sequence of analytic functions on $D$.
Assume that:
\begin{enumerate}
  \item[(a)] There exists a constant $M\in\mathbb{R}$ such that
  \[
    \Re f_n(z) \le M \qquad \text{for all } n\ge 1 \text{ and all } z\in D .
  \]
  \item[(b)] There exists a subset $S\subset D$ having at least one accumulation point in $D$ such that the pointwise limit
  \[
    f_\infty(z):=\lim_{n\to\infty} f_n(z)
  \]
  exists (and is finite) for every $z\in S$.
\end{enumerate}
Then $f_n$ converges uniformly on compact subsets of $D$ to an analytic function $f$ on $D$.
Moreover, $f$ is the unique analytic function on $D$ whose restriction to $S$ agrees with the pointwise limit $f_\infty$.
\end{proposition}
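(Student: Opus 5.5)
The plan is to prove the Vitali--Porter type statement in the form quoted from Scott--Sokal by combining Montel's theorem (normal families) with the identity theorem. I would first turn the one-sided bound (a) into local boundedness. Put $g_n:=e^{f_n}$. Then $g_n$ is analytic on $D$ and $|g_n|=e^{\Re f_n}\le e^{M}$, so $\{g_n\}$ is uniformly bounded. The cleaner route, however, is to use $h_n:=1/(M+1-f_n)$. Since $\Re(M+1-f_n)\ge 1$, we get $|h_n|\le 1$ on $D$. Moreover $w\mapsto 1/(M+1-w)$ is a biholomorphism from the half-plane $\{\Re w< M+1\}$ onto the disk $\{|\zeta-\tfrac12|<\tfrac12\}$, with inverse $\zeta\mapsto M+1-1/\zeta$. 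By Montel's theorem, $\{h_n\}$ is therefore a normal family on $D$.

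Next, I would show that every subsequential limit of $\{f_n\}$ is either finite analytic or identically $\infty$, and that (b) rules out the second option. Take a subsequence with $h_{n_j}\to h$ locally uniformly. The limit $h$ is analytic with $|h-\tfrac12|\le\tfrac12$. By the open mapping theorem, $h$ is either nonconstant, in which case it maps into the open disk and so $h\neq 0$, or constant on the connected domain $D$. The degenerate cases are constants on the boundary circle. Among these, $h\equiv 0$ would mean $f_{n_j}\to\infty$ at every point, contradicting finiteness of the limit on $S$. A nonzero boundary constant $h\equiv c$ forces $\Re f_{n_j}\to M+1-\Re(1/c)=M+1-1=M$... and one checks that $M+1-1/c$ is still a finite analytic (constant) limit. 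So $f_{n_j}=M+1-1/h_{n_j}$ converges locally uniformly, wherever $h\neq0$, to the analytic function $f:=M+1-1/h$. Uniform convergence on compacts follows because $1/h$ is bounded on each compact set.

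The final step is uniqueness of the limit, which I expect to be the main point needing care. Any two subsequential limits $f,\tilde f$ agree on $S$, since both equal $f_\infty$ there, and $S$ has an accumulation point in $D$. As $D$ is connected, the identity theorem gives $f\equiv\tilde f$. A standard subsequence argument then completes the proof. If $f_n$ did not converge uniformly on some compact $K$ to $f$, there would be an $\varepsilon>0$ and a subsequence staying $\varepsilon$-far from $f$ on $K$. By normality, that subsequence has a further subsequence converging to an analytic limit agreeing with $f_\infty$ on $S$, hence equal to $f$, which is a contradiction. The same identity-theorem argument gives the stated uniqueness of $f$ among analytic functions on $D$ that restrict to $f_\infty$ on $S$.

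The delicate point is excluding the limit $\infty$ and handling the boundary-constant case of $h$. Here the key is that $\Re f_n\le M$ bounds the values of $f_n$ away from the pole of the Möbius map, and that (b) supplies at least one point where the limit is finite. I would verify this carefully, possibly replacing the disk map by the map $f_n\mapsto e^{f_n}$ together with the Hurwitz theorem applied to nonvanishing $g_n$ as an alternative.
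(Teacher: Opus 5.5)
Your argument is correct. The paper does not prove this proposition at all: it quotes it from Scott--Sokal and uses it as a black box in the appendix. So the comparison is with the standard derivation, not with an in-paper proof.

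\textbf{The standard route.} It first upgrades the one-sided bound (a) to genuine local boundedness. By Borel--Carath\'eodory, $\Re f_n\le M$ on a disc, together with a bound on $|f_n|$ at its centre, bounds $|f_n|$ uniformly on any smaller concentric disc. You start from a point of $S$, where $f_n$ converges and is therefore bounded, and chain discs through the connected domain $D$. This gives uniform bounds on every compact set, and the classical Vitali theorem (Montel plus the identity theorem) finishes.

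\textbf{Your route.} You apply Montel directly to $h_n=1/(M+1-f_n)$, which is bounded by $1$ for free. The only possible degeneration is locally uniform escape $f_{n_j}\to\infty$, that is, $h\equiv 0$. You exclude it with the open mapping theorem plus a single point of $S$. The identity theorem and the subsequence principle then run exactly as in the classical proof.

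\textbf{Comparison.} Your approach needs no estimates, and it shows clearly why one point of $S$ is enough to rule out escape to infinity. The Borel--Carath\'eodory approach instead gives explicit local bounds on $|f_n|$ and reduces the statement to the textbook Vitali theorem.

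\textbf{A slip to fix.} The map $w\mapsto 1/(M+1-w)$ sends $\{\Re w<M+1\}$ onto the right half-plane $\{\Re\zeta>0\}$, not onto the disc. It is $\{\Re w<M\}$ that maps onto $\{|\zeta-\tfrac12|<\tfrac12\}$. Since $\Re f_n\le M$, the $h_n$ do take values in the closed disc minus the origin, which is all you use, so nothing breaks.

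\textbf{On your alternative.} The fallback via $e^{f_n}$ and Hurwitz would need extra work to control the $2\pi i\mathbb{Z}$ ambiguity of the logarithm when passing back to $f_{n_j}$. The M\"obius normalization is the cleaner choice.
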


\begin{proof}[proof of theorem \ref{thm:infinite_threshold_strong}]
    now we prove the theorem \ref{thm:infinite_threshold_strong}, For any $k\in\mathbb{N}$, from lemma \ref{lem:uniform_zerofree_all_finite_subgraphs} we know that for every $\lambda\in[0, \lambda_c(\mu_k(G))$, there exists a uniform zero-free regime $\mathcal{L}_{\delta}$ around the interval $[0, \lambda]$ that $Z_H(\cdot)\neq 0$ holds for all finite sub-graph $H$.

    From theorem \ref{thm:thermodynamic_limit_periodic_graph} we know that the normalized free energy limit
    \begin{equation}
        f_\infty(\lambda)\;:=\;\lim_{n\to\infty} f_{\Lambda_n}(\lambda)
    \end{equation}
    exists along every F\o lner-van Hove sequence $(\Lambda_n)$ and is independent of the chosen sequence for every $\lambda$ at the positive real axis.

    There is $|Z_{\Lambda_n}(z)|\le(1+|z|)^{|\Lambda_n|}$, hence for any compact $K\Subset\mathcal{L}_{\delta}$, $\Re f\le\sup_K|z|$. We take $S=[0,\lambda]$ where $f_{\Lambda_n}$ pointwise converge to $f_\infty$. Then from proposition \ref{prop:vitali-scott-sokal} we know that $f_n$ converges uniformly to a unique analytic function $f_G$ on any compact subsets of $\mathcal{L}_{\delta}$.

    From lemma \ref{lem:assumptionA_implies_quasitransitive} we know that infinite graph $G$ satisfying assumption $\mathcal{A}$ is a quasi-transitive graph. As lemma \ref{lem:muinf_le_sigma} says, $G$ has $\mu_{\inf}(\mathcal{H}_G)\le\sigma(G)$, we can always find a $k\in\mathbb{N}$ s.t. $\sigma(G)\ge\mu_k(\mathcal{H}_G)$ and $\lambda_c(\sigma(G))\le\lambda_c(\mu_k(\mathcal{H}_G))$, which complete the proof.
    
\end{proof}



\subsection{Improved threshold on $\mathbb{Z}^2$ via Weitz trees (Sinclair et al.~\cite{sinclair2017spatial}, Appendix A)}
\label{subsec:2.538}
We now explain how to incorporate the improvement from Sinclair et al.~\cite{sinclair2017spatial}\ (Appendix~A) into our framework.

Let $T_{\mathrm{Weitz}}(\mathbb{Z}^2)$ be the Weitz SAW tree associated to $\mathbb{Z}^2$ under the ordering scheme
of \cite[App.~A]{sinclair2017spatial}. Then the connective constant (defined as a $\limsup$ growth rate) of this tree is at
most $2.429$, which yields the improved activity threshold $2.538$ for the hard-core model on $\mathbb{Z}^2$.

Let $H\subseteq \mathbb{Z}^2$ be any finite induced sub-graph, and let $T_{\mathrm{Weitz}}(H)$ be its Weitz SAW tree
constructed using the same ordering scheme as for $\mathbb{Z}^2$.
Then for every $k\ge 1$,
\begin{equation}
  \mu_k\!\bigl(T_{\mathrm{Weitz}}(H)\bigr)\;\le\; \mu_k\!\bigl(T_{\mathrm{Weitz}}(\mathbb{Z}^2)\bigr).
\end{equation}

Any $k$-length SAW in $T_{\mathrm{Weitz}}(H)$ corresponds to a $k$-length self-avoiding exploration in $H$ under
the same pruning/ordering rules. Since $H$ is a sub-graph of $\mathbb{Z}^2$, every such exploration is also feasible
in $\mathbb{Z}^2$ under the same rules, hence the number of SAWs of each length in $T_{\mathrm{Weitz}}(H)$ is at most
that in $T_{\mathrm{Weitz}}(\mathbb{Z}^2)$. The inequality for $\mu_k$ follows by the same monotonicity reasoning as
Lemma~\ref{lem:muinf_le_sigma}.

\begin{theorem}[Analytic free energy on $\mathbb{Z}^2$ up to the threshold $2.538$]
\label{thm:z2_2538}
For every $\lambda<2.538$, there exists a complex neighborhood $D$ of $[0,\lambda]$ such that $Z_H(z)\neq 0$ for all
finite induced sub-graphs $H\subseteq \mathbb{Z}^2$ and all $z\in D$. Consequently, along any exhaustion $(H_n)$ of
$\mathbb{Z}^2$ for which the real free energy density converges on a real interval in $D$, the infinite-volume
normalized free energy $f_{\mathbb{Z}^2}(z)$ exists and is analytic and unique on $D$.
\end{theorem}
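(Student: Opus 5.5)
The plan is to run the same block-contraction argument as in the proof of Theorem~\ref{thm:zero-free}, but with the plain self-avoiding walk tree replaced by the Weitz tree $T_{\mathrm{Weitz}}(H)$ built from the ordering scheme of \cite[App.~A]{sinclair2017spatial}. Afterwards we pass to the infinite volume with Lemma~\ref{lem:infinite_main}.

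\textbf{Step 1 (choice of depth).} Fix $\lambda<2.538$. By \cite[App.~A]{sinclair2017spatial}, the growth rate of $T_{\mathrm{Weitz}}(\mathbb{Z}^2)$ is at most $2.429$, and $\lambda_c(2.429)\ge 2.538$. Choose $d$ with $\lambda<\lambda_c(d)$ and $d>2.429$. The connective constant here is defined as a $\limsup$ (or limit) of $\mathcal{N}_{\le k}^{1/k}$, and the tree has uniformly bounded degree. So, arguing as in the proof of Lemma~\ref{lem:muinf_le_sigma} by summing the geometric tail, we can pick a single $k$ with $\mu_k(T_{\mathrm{Weitz}}(\mathbb{Z}^2))\le d$. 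We need uniformity over all roots; this holds because the ordering scheme is translation invariant, so there are finitely many root types. By the displayed monotonicity inequality, every finite induced $H\subseteq\mathbb{Z}^2$, every root $v$, and every vertex of $T_{\mathrm{Weitz}}(H,v)$ then satisfy $\mathcal{N}_{\le k}\le d^k$ within the tree. The vertex condition needs the subtree of the Weitz tree at any node to be dominated by a Weitz-tree subtree in $\mathbb{Z}^2$, which the same exploration-embedding argument gives.

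\textbf{Step 2 (contraction).} Lemma~\ref{lem:block_lipschitz} holds for any rooted tree, so the proof of Lemma~\ref{lem:real-contraction} applies verbatim to the class $\mathcal{T}'$ of depth-$\le k$ truncations of the Weitz trees, with $|C_k(T)|\le d^k$ and $\alpha(\lambda)<1/d$. Two cautions apply here.

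First, pinned leaves in the Weitz tree must be removed by the reductions from the Preliminaries. These reductions delete vertices and therefore only decrease walk counts. Alternatively, pinned-unoccupied leaves contribute the fixed value $0$ and pinned-occupied leaves force their parent to $0$; both values lie in $[0,\lambda]$ and in any neighborhood $\mathcal Q$ of it.

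Second, $\mathcal{T}'$ is finite because the degrees are bounded by $4$. The Remark after Lemma~\ref{lem:real_to_complex_extension} therefore applies, and Lemma~\ref{lem:real_to_complex_extension} gives complex block contraction on a disk around each $\lambda_0\in[0,\lambda]$. By compactness of $[0,\lambda]$ we obtain a neighborhood $D$.

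\textbf{Step 3 (induction).} Repeat the induction on $|V(H)|$ from the proof of Theorem~\ref{thm:zero-free}, using Weitz's identity $R_{H,v}=R_{T_{\mathrm{Weitz}}(H,v),v}$, which holds for any ordering scheme. Each subtree hanging below depth $k$ is itself the Weitz tree of a smaller graph with boundary conditions, in the usual Weitz sense. Its ratio lies in $\mathcal Q$, which gives $R_{H,v}\in\mathcal Q\not\ni-1$ and hence $Z_H\ne0$ on $D$.

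\textbf{Step 4 (free energy).} Zero-freeness for all finite induced subgraphs, together with the bound $|Z_H(z)|\le(1+|z|)^{|V(H)|}$ and real convergence, lets Proposition~\ref{prop:vitali-scott-sokal} produce a unique analytic limit on $D$, as in Lemma~\ref{lem:infinite_main}.

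\textbf{Main obstacle.} I expect the hardest step to be Step 3, namely justifying that the subtrees in the induction remain in the controlled class. With pinned boundary conditions, a subtree is the Weitz tree of a smaller graph only with conditioning, so the induction should be run over pairs (graph, partial pinning). The Step 1 walk-count bound must also be shown to hold for these pinned instances. I would do this by checking that pinning only prunes the tree.
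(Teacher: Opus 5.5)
Your proposal is correct and follows the same route as the paper: bound the growth of the Weitz tree of $\mathbb{Z}^2$ under the ordering of \cite[App.~A]{sinclair2017spatial} by $2.429$, and transfer this bound to every finite induced $H\subseteq\mathbb{Z}^2$ by the monotonicity inequality; then run the block-contraction induction of Theorem~\ref{thm:zero-free} on these Weitz trees and pass to the infinite volume with Proposition~\ref{prop:vitali-scott-sokal}. You are more explicit than the paper's sketch (you choose $d>2.429$ and a single $k$, and you rerun the induction on Weitz trees rather than applying Theorem~\ref{thm:zero-free} to $H$ itself), your ``main obstacle'' dissolves because in the hard-core model pinnings simply reduce to smaller induced subgraphs of $\mathbb{Z}^2$, and the argument only uses the root bound $|C_k(T)|\le d^k$, so the every-vertex condition you assert in Step~1 is not needed.
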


\begin{proof}
As the (tree) connective constant of $T_{\mathrm{Weitz}}(\mathbb{Z}^2)$ is bounded by $2.429$ in the $\limsup$ sense. Hence ,
\begin{equation}
  \mu_{\inf}\!\bigl(T_{\mathrm{Weitz}}(\mathbb{Z}^2)\bigr)\;\le\;2.429,
\end{equation}
Then it gives the same bound uniformly for all finite
$H\subseteq\mathbb{Z}^2$:
\begin{equation}
  \mu_{\inf}\!\bigl(T_{\mathrm{Weitz}}(H)\bigr)\;\le\;2.429 \qquad \text{for all } H\subseteq\mathbb{Z}^2.
\end{equation}
Applying our finite-graph zero-free theorem to these Weitz-tree bounds yields a single domain $D$ (depending on
$\varepsilon$) that is zero-free uniformly for all finite induced sub-graphs $H\subseteq\mathbb{Z}^2$ at activities
$\lambda<\lambda_c(2.429)$. It yields the regime up to
$\lambda<\lambda_c(2.429)=2.538$, as in \cite[App.~A]{sinclair2017spatial}. 

Finally, since $D$ is uniform over all finite sub-graphs, analyticity of the infinite-volume normalized free energy
on $D$ follows our above results. 
\end{proof}

Moreover, the value $2.429$ is a numerical upper bound and is not known to be tight; hence the threshold could be improved by sharpening the underlying computation. The same principle applies to $\mathbb{Z}^d$: better algorithmic upper bounds would yield better thresholds on the infinite lattice.

\bibliographystyle{alpha}
\bibliography{main}

@article{ChudnovskySeymour07,
  title = {The roots of the independence polynomial of a clawfree graph},
  author = {Chudnovsky, Maria and Seymour, Paul},
  journal = {Journal of Combinatorial Theory, Series B},
  volume = {97},
  number = {3},
  pages = {350--357},
  year = {2007},
  publisher = {Elsevier},
  timestamp = {Fri, 07 Jun 2024 01:00:00 +0200},
  biburl = {https://dblp.org/rec/journals/jct/ChudnovskyS07.bib},
  bibsource = {dblp computer science bibliography, https://dblp.org},
  doi = {10.1016/j.jctb.2006.06.001},
  _bib2doi_selected = {dblp:/rec/journals/jct/ChudnovskyS07.bib},
  _bib2doi_confirmed = {true},
}

@article{jerrum1993polynomial,
  title = {{Polynomial-time approximation algorithms for the Ising model}},
  author = {Jerrum, Mark and Sinclair, Alistair},
  journal = {SIAM Journal on computing},
  volume = {22},
  number = {5},
  pages = {1087--1116},
  year = {1993},
  publisher = {SIAM},
  timestamp = {Sun, 02 Jun 2019 01:00:00 +0200},
  biburl = {https://dblp.org/rec/journals/siamcomp/JerrumS93.bib},
  bibsource = {dblp computer science bibliography, https://dblp.org},
  doi = {10.1137/0222066},
  _bib2doi_selected = {dblp:/rec/journals/siamcomp/JerrumS93.bib},
  _bib2doi_confirmed = {true},
}

@article{sinclair2014approximation,
  title = {{Approximation algorithms for two-state anti-ferromagnetic spin systems on bounded degree graphs}},
  author = {Sinclair, Alistair and Srivastava, Piyush and Thurley, Marc},
  journal = {Journal of Statistical Physics},
  volume = {155},
  number = {4},
  pages = {666--686},
  year = {2014},
  publisher = {Springer},
}

@article{heilmann1972theory,
  title = {Theory of monomer-dimer systems},
  author = {Heilmann, Ole J and Lieb, Elliott H},
  journal = {Communications in mathematical Physics},
  volume = {25},
  number = {3},
  pages = {190--232},
  year = {1972},
  publisher = {Springer},
}

@book{barvinok2016combinatorics,
  title = {Combinatorics and complexity of partition functions},
  author = {Barvinok, Alexander},
  volume = {30},
  year = {2016},
  publisher = {Springer},
  timestamp = {Tue, 26 Sep 2017 01:00:00 +0200},
  biburl = {https://dblp.org/rec/books/daglib/0041359.bib},
  bibsource = {dblp computer science bibliography, https://dblp.org},
  doi = {10.1007/978-3-319-51829-9},
  _bib2doi_selected = {dblp:/rec/books/daglib/0041359.bib},
  _bib2doi_confirmed = {true},
}

@article{patel2017deterministic,
  title = {Deterministic polynomial-time approximation algorithms for partition functions and graph polynomials},
  author = {Patel, Viresh and Regts, Guus},
  journal = {SIAM Journal on Computing},
  volume = {46},
  number = {6},
  pages = {1893--1919},
  year = {2017},
  publisher = {SIAM},
  timestamp = {Tue, 13 Mar 2018 00:00:00 +0100},
  biburl = {https://dblp.org/rec/journals/siamcomp/PatelR17.bib},
  bibsource = {dblp computer science bibliography, https://dblp.org},
  doi = {10.1137/16M1101003},
  _bib2doi_selected = {dblp:/rec/journals/siamcomp/PatelR17.bib},
  _bib2doi_confirmed = {true},
}

@article{peters2019conjecture,
  title = {On a conjecture of Sokal concerning roots of the independence polynomial},
  author = {Peters, Han and Regts, Guus},
  journal = {Michigan Mathematical Journal},
  volume = {68},
  number = {1},
  pages = {33--55},
  year = {2019},
  publisher = {University of Michigan, Department of Mathematics},
}

@article{liu2019fisher,
  title = {Fisher zeros and correlation decay in the Ising model},
  author = {Liu, Jingcheng and Sinclair, Alistair and Srivastava, Piyush},
  journal = {Journal of Mathematical Physics},
  volume = {60},
  number = {10},
  year = {2019},
  publisher = {AIP Publishing},
}

@article{peters2020location,
  title = {Location of zeros for the partition function of the Ising model on bounded degree graphs},
  author = {Peters, Han and Regts, Guus},
  journal = {Journal of the London Mathematical Society},
  volume = {101},
  number = {2},
  pages = {765--785},
  year = {2020},
  publisher = {Wiley Online Library},
}

@article{liu2022correlation,
  title = {Correlation decay and partition function zeros: Algorithms and phase transitions},
  author = {Liu, Jingcheng and Sinclair, Alistair and Srivastava, Piyush},
  journal = {SIAM Journal on Computing},
  number = {0},
  pages = {FOCS19--200},
  year = {2022},
  publisher = {SIAM},
}

@article{gamarnik2023correlation,
  title = {Correlation decay and the absence of zeros property of partition functions},
  author = {Gamarnik, David},
  journal = {Random Structures \& Algorithms},
  volume = {62},
  number = {1},
  pages = {155--180},
  year = {2023},
  publisher = {Wiley Online Library},
  timestamp = {Thu, 05 Jan 2023 00:00:00 +0100},
  biburl = {https://dblp.org/rec/journals/rsa/Gamarnik23.bib},
  bibsource = {dblp computer science bibliography, https://dblp.org},
  doi = {10.1002/rsa.21083},
  _bib2doi_selected = {dblp:/rec/journals/rsa/Gamarnik23.bib},
  _bib2doi_confirmed = {true},
}

@article{regts2023absence,
  title = {Absence of zeros implies strong spatial mixing},
  author = {Regts, Guus},
  journal = {Probability Theory and Related Fields},
  volume = {186},
  number = {1},
  pages = {621--641},
  year = {2023},
  publisher = {Springer},
}

@article{shao2024zero,
  title = {From Zero-Freeness to Strong Spatial Mixing via a Christoffel-Darboux Type Identity},
  author = {Shao, Shuai and Ye, Xiaowei},
  journal = {arXiv preprint arXiv:2401.09317},
  year = {2024},
}

@inproceedings{weitz2006counting,
  title = {Counting independent sets up to the tree threshold},
  author = {Weitz, Dror},
  booktitle = {Proceedings of the thirty-eighth annual ACM symposium on Theory of computing},
  pages = {140--149},
  year = {2006},
  timestamp = {Tue, 06 Nov 2018 00:00:00 +0100},
  biburl = {https://dblp.org/rec/conf/stoc/Weitz06.bib},
  bibsource = {dblp computer science bibliography, https://dblp.org},
  doi = {10.1145/1132516.1132538},
  _bib2doi_selected = {dblp:/rec/conf/stoc/Weitz06.bib},
  _bib2doi_confirmed = {true},
}

@article{shao2021contraction,
  title = {Contraction: A unified perspective of correlation decay and zero-freeness of 2-spin systems},
  author = {Shao, Shuai and Sun, Yuxin},
  journal = {Journal of Statistical Physics},
  volume = {185},
  pages = {1--25},
  year = {2021},
  publisher = {Springer},
}

@article{bandyopadhyay2008counting,
  title = {Counting without sampling: Asymptotics of the log-partition function for certain statistical physics models},
  author = {Bandyopadhyay, Antar and Gamarnik, David},
  journal = {Random Structures \& Algorithms},
  volume = {33},
  number = {4},
  pages = {452--479},
  year = {2008},
  publisher = {Wiley Online Library},
  timestamp = {Fri, 26 May 2017 01:00:00 +0200},
  biburl = {https://dblp.org/rec/journals/rsa/BandyopadhyayG08.bib},
  bibsource = {dblp computer science bibliography, https://dblp.org},
  doi = {10.1002/rsa.20236},
  _bib2doi_selected = {dblp:/rec/journals/rsa/BandyopadhyayG08.bib},
  _bib2doi_confirmed = {true},
}

@article{shao2025zero,
  title = {Zero-Freeness is All You Need: A Weitz-Type FPTAS for the Entire Lee-Yang Zero-Free Region},
  author = {Shao, Shuai and Shi, Ke},
  journal = {arXiv preprint arXiv:2509.06623},
  year = {2025},
}

@article{sinclair2017spatial,
  title={Spatial mixing and the connective constant: Optimal bounds},
  author={Sinclair, Alistair and Srivastava, Piyush and {\v{S}}tefankovi{\v{c}}, Daniel and Yin, Yitong},
  journal={Probability Theory and Related Fields},
  volume={168},
  number={1},
  pages={153--197},
  year={2017},
  publisher={Springer}
}

@article{scott2005repulsive,
  title={The repulsive lattice gas, the independent-set polynomial, and the Lov{\'a}sz local lemma},
  author={Scott, Alexander D and Sokal, Alan D},
  journal={Journal of Statistical Physics},
  volume={118},
  number={5},
  pages={1151--1261},
  year={2005},
  publisher={Springer}
}

@inproceedings{hammersley1957percolation,
  title={Percolation processes: II. The connective constant},
  author={Hammersley, John M},
  booktitle={Mathematical Proceedings of the Cambridge Philosophical Society},
  volume={53},
  number={3},
  pages={642--645},
  year={1957},
  organization={Cambridge University Press}
}

@article{grimmett2015bounds,
  title={Bounds on connective constants of regular graphs},
  author={Grimmett, Geoffrey R and Li, Zhongyang},
  journal={Combinatorica},
  volume={35},
  number={3},
  pages={279--294},
  year={2015},
  publisher={Springer}
}

@article{duminil2012connective,
  title={The connective constant of the honeycomb lattice equals $\sqrt{2+\sqrt{2}}$},
  author={Duminil-Copin, Hugo and Smirnov, Stanislav},
  journal={Annals of Mathematics},
  pages={1653--1665},
  year={2012},
  publisher={JSTOR}
}

@article{lacoin2014non,
  title={Non-coincidence of quenched and annealed connective constants on the supercritical planar percolation cluster},
  author={Lacoin, Hubert},
  journal={Probability Theory and Related Fields},
  volume={159},
  number={3},
  pages={777--808},
  year={2014},
  publisher={Springer}
}

@article{runnels1965hard,
  title={Hard-square lattice gas},
  author={Runnels, LK},
  journal={Physical Review Letters},
  volume={15},
  number={14},
  pages={581},
  year={1965},
  publisher={APS}
}

@article{regts2018zero,
  title={Zero-free regions of partition functions with applications to algorithms and graph limits},
  author={Regts, Guus},
  journal={Combinatorica},
  volume={38},
  number={4},
  pages={987--1015},
  year={2018},
  publisher={Springer}
}

@article{shearer1985problem,
  title={On a problem of Spencer},
  author={Shearer, James B.},
  journal={Combinatorica},
  volume={5},
  number={3},
  pages={241--245},
  year={1985},
  publisher={Springer}
}

@inproceedings{bencs2023complex,
  title={On complex roots of the independence polynomial},
  author={Bencs, Ferenc and Csikv{\'a}ri, P{\'e}ter and Srivastava, Piyush and Vondr{\'a}k, Jan},
  booktitle={Proceedings of the 2023 annual ACM-SIAM symposium on discrete algorithms (SODA)},
  pages={675--699},
  year={2023},
  organization={SIAM}
}

@inproceedings{vera2013improved,
  title={Improved bounds on the phase transition for the hard-core model in 2-dimensions},
  author={Vera, Juan C and Vigoda, Eric and Yang, Linji},
  booktitle={International Workshop on Approximation Algorithms for Combinatorial Optimization},
  pages={699--713},
  year={2013},
  organization={Springer}
}

@article{blanca2019phase,
  title={Phase Coexistence for the Hard-Core Model on $\mathbb{Z}^2$},
  author={Blanca, Antonio and Chen, Yuxuan and Galvin, David and Randall, Dana and Tetali, Prasad},
  journal={Combinatorics, Probability and Computing},
  volume={28},
  number={1},
  pages={1--22},
  year={2019},
  publisher={Cambridge University Press}
}

@inproceedings{efthymiou2026sampling,
  title={On sampling two spin models using the local connective constant},
  author={Efthymiou, Charilaos},
  booktitle={Proceedings of the 2026 Annual ACM-SIAM Symposium on Discrete Algorithms (SODA)},
  pages={972--983},
  year={2026},
  organization={SIAM}
}

@article{lee1952statistical,
  title={Statistical theory of equations of state and phase transitions. II. Lattice gas and Ising model},
  author={Lee, Tsung-Dao and Yang, Chen-Ning},
  journal={Physical Review},
  volume={87},
  number={3},
  pages={410},
  year={1952},
  publisher={APS}
}

@article{jerrum2025zero,
  title={Zero-free regions for the independence polynomial on restricted graph classes},
  author={Jerrum, Mark and Patel, Viresh},
  journal={arXiv preprint arXiv:2510.01466},
  year={2025}
}

@article{jensen1998self,
  title={Self-avoiding walks, neighbour-avoiding walks and trails on semiregular lattices},
  author={Jensen, Iwan and Guttmann, Anthony J},
  journal={Journal of Physics A: Mathematical and General},
  volume={31},
  number={40},
  pages={8137},
  year={1998},
  publisher={IOP Publishing}
}

@article{ponitz2000improved,
  title   = {Improved Upper Bounds for Self-Avoiding Walks in {$\mathbb{Z}^d$}},
  author  = {P{\"o}nitz, Andr{\'e} and Tittmann, Peter},
  journal = {The Electronic Journal of Combinatorics},
  volume  = {7},
  pages   = {R21},
  year    = {2000}
}

@article{alm2005upper,
  title={Upper and lower bounds for the connective constants of self-avoiding walks on the Archimedean and Laves lattices},
  author={Alm, Sven Erick},
  journal={Journal of Physics A: Mathematical and General},
  volume={38},
  number={10},
  pages={2055},
  year={2005},
  publisher={IOP Publishing}
}

@article{luby1999fast,
	Author = {Luby, Michael and Vigoda, Eric},
	Journal = {Random Structures Algorithms},
	Number = {3-4},
	Pages = {229--241},
	Publisher = {Wiley Online Library},
	Title = {Fast convergence of the Glauber dynamics for sampling independent sets},
	Volume = {15},
	Year = {1999}
}

@article{hayes2006coupling,
	Author = {Hayes, Thomas P. and Vigoda, Eric},
	Fjournal = {The Annals of Applied Probability},
	Journal = {Ann. Appl. Probab.},
	Number = {3},
	Pages = {1297--1318},
	Title = {Coupling with the stationary distribution and improved sampling for colorings and independent sets},
	Volume = {16},
	Year = {2006},
}

@article{efthymiou2016convergence,
	Author = {Efthymiou, Charilaos and Hayes, Thomas P. and {\v{S}}tefankovi{\v{c}}, Daniel and Vigoda, Eric and Yin, Yitong},
	Fjournal = {SIAM Journal on Computing},
	Journal = {SIAM J. Comput.},
	Number = {2},
	Pages = {581--643},
	Title = {Convergence of {MCMC} and loopy {BP} in the tree uniqueness region for the hard-core model},
	Volume = {48},
	Year = {2019}
}

@inproceedings{anari2020spectral,
	Author = {Nima Anari and Kuikui Liu and Shayan {Oveis Gharan}},
	Booktitle = {FOCS},
	Pages = {1319--1330},
	Title = {Spectral Independence in High-Dimensional Expanders and Applications to the Hardcore Model},
	Year = {2020}
}

@inproceedings{chen2020optimal,
    AUTHOR = {Chen, Zongchen and Liu, Kuikui and Vigoda, Eric},
     TITLE = {Optimal mixing of {G}lauber dynamics: entropy factorization
              via high-dimensional expansion},
 BOOKTITLE = {STOC},
     PAGES = {1537--1550},
      YEAR = {2021},
   Journal = {arXiv preprint arXiv:2011.02075},
}

@inproceedings{ChenE22,
  author       = {Yuansi Chen and
                  Ronen Eldan},
  title        = {Localization Schemes: {A} Framework for Proving Mixing Bounds for
                  Markov Chains },
  booktitle    = {FOCS},
  pages        = {110--122},
  year         = {2022},
}

@inproceedings{Chen0YZ22,
  author       = {Xiaoyu Chen and
                  Weiming Feng and
                  Yitong Yin and
                  Xinyuan Zhang},
  title        = {Optimal mixing for two-state anti-ferromagnetic spin systems},
  booktitle    = {FOCS},
  pages        = {588--599},
  year         = {2022},
}

@article{chen2024rapid,
  title={Rapid mixing via coupling independence for spin systems with unbounded degree},
  author={Chen, Xiaoyu and Feng, Weiming},
  journal={arXiv preprint arXiv:2407.04672},
  year={2024}
}

@inproceedings{chen2023uniqueness,
  author       = {Xiaoyu Chen and
                  Jingcheng Liu and
                  Yitong Yin},
  title        = {Uniqueness and Rapid Mixing in the Bipartite Hardcore Model},
  booktitle    = {FOCS},
  pages        = {1991--2005},
  publisher    = {{IEEE}},
  year         = {2023},
  url          = {https://doi.org/10.1109/FOCS57990.2023.00121},
  doi          = {10.1109/FOCS57990.2023.00121},
  bibsource    = {dblp computer science bibliography, https://dblp.org}
}

@article{jerrum2024glauber,
  title={Glauber dynamics for the hard-core model on bounded-degree $ H $-free graphs},
  author={Jerrum, Mark},
  journal={arXiv preprint arXiv:2404.07615},
  year={2024}
}

@inproceedings{chen2024fast,
  title={Fast sampling of b-matchings and b-edge covers},
  author={Chen, Zongchen and Gu, Yuzhou},
  booktitle={Proceedings of the 2024 Annual ACM-SIAM Symposium on Discrete Algorithms (SODA)},
  pages={4972--4987},
  year={2024},
  organization={SIAM}
}

@article{restrepo2013improved,
  title={Improved mixing condition on the grid for counting and sampling independent sets},
  author={Restrepo, Ricardo and Shin, Jinwoo and Tetali, Prasad and Vigoda, Eric and Yang, Linji},
  journal={Probability Theory and Related Fields},
  volume={156},
  number={1},
  pages={75--99},
  year={2013},
  publisher={Springer}
}

@article{chen2025rapid,
  title={Rapid Mixing on Random Regular Graphs beyond Uniqueness},
  author={Chen, Xiaoyu and Chen, Zejia and Chen, Zongchen and Yin, Yitong and Zhang, Xinyuan},
  journal={arXiv preprint arXiv:2504.03406},
  year={2025}
}

@article{efthymiou2023mixing,
  title={On the Mixing Time of Glauber Dynamics for the Hard-Core and Related Models on G (n, d/n)},
  author={Efthymiou, Charilaos and Feng, Weiming},
  journal={arXiv preprint arXiv:2302.06172},
  year={2023}
}

@article{jacobsen2016growth,
  title={On the growth constant for square-lattice self-avoiding walks},
  author={Jacobsen, Jesper Lykke and Scullard, Christian R and Guttmann, Anthony J},
  journal={Journal of Physics A: Mathematical and Theoretical},
  volume={49},
  number={49},
  pages={494004},
  year={2016},
  publisher={IOP Publishing}
}

@article{slade1994self,
  title={Self-avoiding walks},
  author={Slade, Gordon},
  journal={The Mathematical Intelligencer},
  volume={16},
  number={1},
  pages={29--35},
  year={1994},
  publisher={Springer-Verlag New York}
}

\end{document}